\newtheorem{definition}{Definition}
\newtheorem{theorem}{Theorem}
\newtheorem{conjecture}{Conjecture}
\newtheorem{lemma}{Lemma}
\newtheorem{proposition}{Proposition}
\newtheorem{corollary}{Corollary}
\newtheorem{algorithm}{Algorithm}
\newtheorem*{remark}{Remark}
\newcommand{\N}{{\mathbb N}}
\newcommand{\C}{{\mathbb C}}
\DeclareMathOperator{\parent}{par}
\DeclareMathOperator{\argmax}{argmax}
\DeclareMathOperator{\BP}{BP}
\DeclareMathOperator{\path}{path}
\title{Parallels Between Phase Transitions and Circuit Complexity?}
\author{Ankur Moitra\thanks{Department of Mathematics, Massachusetts Institute of Technology. Email: {\tt moitra@mit.edu}. This work was supported in part by NSF CAREER Award CCF-1453261, NSF Large CCF-1565235, a David and Lucile Packard Fellowship, an Alfred P. Sloan Fellowship and an ONR Young Investigator Award.} \\ MIT 
\and Elchanan Mossel\thanks{Department of Mathematics, Massachusetts Institute of Technology. Email: {\tt elmos@mit.edu}. This work was supported in part by NSF awards DMS-1737944 and 1665252, ONR Award N00014-17-1-2598, Simons Investigator in Mathematics award (622132))}  \\ MIT \and Colin Sandon  \\ MIT}
\begin{document}
\maketitle

\begin{abstract}
In many natural average-case problems, there are or there are believed to be critical values in the parameter space where the structure of the space of solutions changes in a fundamental way. These phase transitions are often believed to coincide with drastic changes in the computational complexity of the associated problem. 

In this work, we study the circuit complexity of inference in the broadcast tree model, which has important applications in phylogenetic reconstruction and close connections to community detection. We establish a number of qualitative connections between phase transitions and circuit complexity in this model. Specifically we show that there is a $\mathbf{TC}^0$ circuit that competes with the Bayes optimal predictor in some range of parameters above the Kesten-Stigum bound. We also show that there is a $16$ label broadcast tree model beneath the Kesten-Stigum bound in which it is possible to accurately guess the label of the root, but beating random guessing is $\mathbf{NC}^1$-hard on average. The key to locating phase transitions is often to study some intrinsic notions of complexity associated with belief propagation \--- e.g. where do linear statistics fail, or when is the posterior sensitive to noise? Ours is the first work to study the complexity of belief propagation in a way that is grounded in circuit complexity. 

 %   Belief propagation is one of the foundations of probabilistic and causal reasoning. In this paper, we study the circuit complexity of some of the various tasks it performs. Specifically, in the broadcast tree model (which has important applications to phylogenetic reconstruction and close connections to community detection), we show the following:

%\begin{enumerate}

%\item[(1)] No $\mathbf{AC}^0$ circuit can guess the label of the root with positive advantage over random guessing, independent of the depth for any non-trivial choice of parameters.

%\item[(2)] There is a $\mathbf{TC}^0$ circuit that competes with the Bayes optimal predictor in some range of parameters above the Kesten-Stigum bound

%\item[(3)] There is a $16$ label broadcast tree model in which it is possible to accurately guess the label of the root, but beating random guessing is $\mathbf{NC}^1$-hard

%\end{enumerate}

%\noindent Our work yields a simple and natural generative model where large depth really is necessary for performing various types of inference, that have intriguing parallels with phase transitions from statistical physics. 
\end{abstract}

%\tableofcontents
%\newpage

%Consider a tree of depth $d$ in which each vertex has $k$ children, the root is randomly assigned a community in $\{0,1\}$, and each other vertex is independently assigned its parent's community with probability $1-\epsilon$ and the opposite community with probability $\epsilon$. Replacing $\epsilon$ with $1-\epsilon$ simply inverts the communities of every node at an odd depth, so we will assume without loss of generality that $\epsilon\le 1/2$. The question is, what are the computational requirements to determine the community of the root from the communities of the leaves. To be more formal about this, let $v^{(d')}_1,\cdots,v^{(d')}_{k^{(d')}}$ be the vertices at depth $d'$, ordered so that the children of $v^{(d'-1)}_i$ are $v^{(d')}_{ki-(k-1)},\cdots,v^{(d')}_{ki}$. Also, let $X^{(d')}_i$ be the community of $v^{(d')}_i$ for each $d'$ and $i$. Finally, let $n=k^d$. When we use asymptotics, we will standardly be assuming that $k$ and $\epsilon$ stay constant while $d$ increases.
\thispagestyle{empty}

\newpage

\setcounter{page}{1}

\section{Introduction}

\subsection{Background}

In many basic problems in high-dimensional statistics and machine learning, there appear to be fundamental gaps between the performance of the information-theoretically best estimator and the best estimator that can be computed in polynomial time. These are called {\em computational vs. statistical tradeoffs}. Recently, there has been an effort 
to study these gaps in a systematic fashion, in particular by forging reductions between some of these problems. 
%There are some cases where we can give evidence by finding a reduction from one problem to another. 
For example, finding sparse directions with large variance in the spiked covariance model turns out to be at least as hard as finding small planted cliques, see e.g.~\cite{BerthetRigollet:13,MaWu:15,BrBrHu:18}.  
%In general, we are interested in ways to give evidence that these problems really are computationally hard beyond just the fact that we are currently unable to close these gaps. 
%There are some cases where we can give evidence by finding a reduction from one problem to another. For example, finding sparse directions with large variance in the spiked covariance model turns out to be at least as hard as finding small planted cliques. 
However, these reductions leave much to be desired as there are relatively few examples where reductions are known that map natural distributions on one problem to natural distributions on another.

In this paper, we will explore other popular methodologies for predicting where average-case problems become hard, which come from statistical physics and revolve around a powerful algorithm called belief propagation. Our key example originates from the following special case of community detection in the stochastic block model. We start with a fixed partition of $n$ nodes into $q$ (almost) equal sized communities. The probability of connecting any pair of nodes with an edge is 
%$\frac{a}{n}$ 
$k q \theta /n + k(1-\theta)/n$  
if they belong to the same community and otherwise is 
$k(1-\theta)/n$,
%$\frac{b}{n}$. 
where edges in the graph are sampled independently. It is easy to see that the average degree in this graph is $k$ and that $\theta$ is a measure of the strength of the communities. 

The  goal is, given a graph sampled from this model, to find a $q$-partition of its nodes whose parts have non-trivial correlation (i.e. better than random) with the true communities. A striking prediction from statistical physics~\cite{DKMZ:11} is that the problem is efficiently solvable when 
%$(a-b)^2 > k (a + (k-1) b)$, 
$k \theta^2 > 1$ 
while the information theory threshold for the problem is different for large values of $k$. 
By now the existence of efficient algorithms when $k\theta^2 > 1$ has been established \cite{MoNeSl:15,Massoulie:14,MoNeSl:18,BoLeMa:15,AbbeSandon:15} as well as the fact that for $k > 5$, the information theory threshold is strictly below this bound \cite{AbbeSandon:15,BMNN:16}. 

The threshold of 
%$(a-b)^2 > k (a + (k-1) b)$ 
$k \theta^2 > 1$ 
is called the {\em  Kesten-Stigum bound} and will play an important role in our paper. It is believed that for some problems, like the block model, the structure of the space of solutions changes in a fundamental way beneath the Kesten-Stigum bound, and this is the basis for the predictions about computational hardness. %The connection between the broadcast model on the tree and computational hardness has been extensively studied at the interface of computing, statistical physics and mathematics, see~\cite{MezardMontanari:06,KMRSZ:07} where the Kesten-Stigum transition as well as other transitions for tree broadcast processes are used to predict statistical/computational gaps.  
Fundamentally, these predictions of computational difficulty all revolve around studying the behavior of belief propagation. In what follows we will explain some of the intuition behind belief propagation along with how do computational versus statistical phase transitions are predicted. See also \cite{MezardMontanari:06,KMRSZ:07}. 

The way to think about belief propagation in the stochastic block model is to start with a local view around a node. With high probability, its neighborhood will be tree-like. In fact, we can model it (along with which community each node belongs to) as a Markov process on a tree. This model is called the {\em broadcast tree model}. We start with a complete $k$-regular tree of height $d$ (or alternatively we generate a random tree of height $h$ in which the number of children of each node is a Poisson random variable with expectation $k$). The root is assigned one of the $q$ possible labels at random. Next we propagate labels from the root to the leaves by, at each step, assigning a child the same label as its parent with probability $\theta$ and otherwise assigning it a uniformly random label. At the end, we are given the labels of the leaves and the goal is to use this information to guess the label of the root. We want our guess to be correct with some advantage over random guessing, and we want  the advantage to be bounded away from zero independently of $h$. Belief propagation is an iterative algorithm that provably computes the posterior distribution on the label of the root given the labels of the leaves. So when belief propagation fails at guessing the label of the root with some nonzero advantage that is independent of $h$, it is because the problem is information-theoretically impossible. Belief propagation is based  on the idea that conditioned on the label of some node, the labels of its neighbors are independent. This is exactly true on a tree and approximately  true in a sparse random graph with few short cycles. 

\begin{quote} {\em The key to using belief propagation to locate phase transitions is that it has its own intrinsic notions of complexity.}
\end{quote}

In the broadcast tree model, the Kesten-Stigum bound is the threshold $k \theta^2  > 1$. (The  Kesten-Stigum bound in the stochastic block model is usually stated in terms of $a$ and $b$ 
%and the one in the broadcast tree model may look different, 
but they are actually the same, which can be seen by relating $a, b, \theta$ and $k$). It turns out that the Kesten-Stigum bound coincides with where linear statistics stop working.  In fact, in the seminal work of Kesten and Stigum \cite{KestenStigum:67,KestenStigum:66}, 
they showed that it is possible to guess the label of the root (and beat random guessing) just by tallying the number of labels of each type among the leaves. Moreover, it is not too hard to deduce from their results~\cite{MosselPeres:03} that below the Kesten-Stigum bound, this method fails. 
Perhaps surprisingly, it is still possible to  guess the label of the root and beat random guessing beneath the Kesten-Stigum bound when $k \geq 5$. However, this requires to use {\em higher-order} information about which labels appear where in the tree~\cite{Mossel:01,Sly:09,Sly:09a}. 

Alternatively, the Kesten-Stigum bound can be thought of through the lens of robustness. Suppose we inject random noise at the leaves. In particular, suppose we overwrite the label of each leaf to a random value with probability $\eta$. Then above the Kesten-Stigum bound, reconstructing the root in the face of noise is still possible, but beneath the Kesten-Stigum bound it is information-theoretically impossible~\cite{JansonMossel:04}. Thus the Kesten-Stigum bound is the location in parameter space where the typical posterior distribution on the label of the root becomes highly sensitive to noise. 

%Finally, there are other approaches for locating phase transitions based on the cavity method, which  studies the stability of the trivial fixed point of belief propagation, in order to predict where computational hardness (in the stochastic block model) sets in. You can even use the closely related replica method, or identify barriers in the energy landscape in message passing. 

Fundamentally, each of these methodologies represents a way to extract information from belief propagation about where the posterior distribution on the label of the root becomes highly complex. The notion of complexity is expressed in many different ways \--- for example, the failure of linear statistics, lack of robustness, or (in the physics language) stability of the trivial fixed point. In this paper, we take an approach that is grounded in computational complexity for studying the posterior distribution in the broadcast tree model. (Alternatively, we take a circuit complexity approach to studying the complexity of the problem that belief propagation is actually  solving). 

\begin{quote} {\em We establish some tantalizing parallels between phase transitions (in the traditional meaning of the phrase, where it refers to changes in the structure of the solution space) and phase transitions in the circuit complexity of the inference problem. }
\end{quote}

\subsection{Our Results}

% detection, where the goal is to guess the label of the root, given leaves generated at random,
In this paper, we study the circuit complexity of various tasks performed by belief propagation on the broadcast tree model. We will be interested in four main problems: $(1)$ detection, where the goal is to guess the label of the root, given leaves generated at random, with probability $1/q + \epsilon$ with $\epsilon > 0$ independent of the depth $(2)$ inference, where the goal is to compete with the Bayes optimal predictor asymptotically in an average-case sense over samples from the model $(3)$ computing the posterior, which is the analogous question for worst-case inputs on the labels of the leaves. And finally we study $(4)$ the complexity of the forward problem of generating samples from the model. These tasks can all naturally be solved in $\mathbf{NC}^1$ the class of logarithmic depth circuits with AND, OR and NOT gates. However it will turn out that in some cases (conjecturally) weaker classes with constant depth will suffice and in others logarithmic depth is inherently necessary. 

It is well known that for the broadcast tree model on two labels \--- also called the Ising model on trees \--- beneath the Kesten-Stigum bound detection is information-theoretically impossible. What this means is that taking the majority vote of the labels of the leaves solves the detection problem whenever it is information-theoretically possible to do so. However it is also well-known that majority vote is suboptimal in how often it guesses the label of the root correctly. Intuitively, this is because there is more information about the label of the root contained not just in the number of labels of each type but also in the structure of where in the tree they are relative to each other. We prove that there are more complex circuits, but still ones in $\mathbf{TC}^0$, that can solve the inference problem:

\begin{theorem}[informal, see Theorem~\ref{thm:tc0main}]
There is a constant $C > 1$ so that  $k \theta^2  > C$ then the inference problem in the Ising model ($q=2$) on trees can be solved in $\mathbf{TC}^0$. 
\end{theorem}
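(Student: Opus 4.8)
The plan is to start from the fact that the Bayes optimal predictor is exactly $\sgn\bigl(\sum_{u:\parent(u)=\mathrm{root}} m_u\bigr)$, where the belief-propagation messages satisfy the recursion $m_v = g\bigl(\sum_{u:\parent(u)=v} m_u\bigr)$ with leaf messages $m_\ell = \beta\, s_\ell$, $\beta = \log\frac{1+\theta}{1-\theta}$, and $g(x) = 2\tanh^{-1}(\theta\tanh(x/2))$. The only obstruction to placing Bayes inference in $\mathbf{TC}^0$ is that this recursion has depth equal to the height of the tree, i.e. $\Theta(\log n)$ for $n=k^d$ leaves, which naively yields only $\mathbf{NC}^1$. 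So the goal is to replace the full depth-$d$ recursion by a constant-depth computation whose root decision agrees with the Bayes decision with probability tending to the Bayes accuracy, exploiting the hypothesis $k\theta^2 > C$.

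First I would isolate the ingredients that are already in $\mathbf{TC}^0$. For every node $v$ the subtree census $S_v = \sum_{\ell \preceq v} s_\ell$ is an iterated addition of $\pm 1$'s, so all the $S_v$ can be computed in parallel in constant depth; and the analytic update $g$, together with the $k$-fold sum at each node, can be evaluated to $\mathrm{poly}(n)$ bits using the standard $\mathbf{TC}^0$ algorithms for iterated addition, multiplication, division, and for the elementary functions $\tanh$ and $\tanh^{-1}$. The estimator I would build fixes a constant $T = T(\epsilon, C)$ and runs $T$ steps of the message recursion implemented from these primitives: the messages feeding the top $T$ levels are seeded by the (suitably normalized) subtree censuses sitting at the frontier of the recursion, and the exact update $g$ is then applied through the $O(1)$ levels above that frontier. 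Since $T$ and each per-level operation are constant-depth, the whole circuit — which may be viewed as thresholding a bounded-complexity statistic of the leaves — lies in $\mathbf{TC}^0$, with depth independent of $n$.

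The heart of the argument, and the step I expect to be the main obstacle, is showing that this constant-depth surrogate tracks the true $x_{\mathrm{root}}$ well enough to approach the Bayes accuracy. The difficulty is that the linearization of $g$ has derivative $\theta$ while each node sums $k$ children, so a worst-case discrepancy between the surrogate and the true messages can grow by a factor as large as $k\theta > 1$ per level as it is pushed toward the root, and a naive bound explodes. This is exactly why the theorem demands $k\theta^2 > C$ rather than merely $k\theta^2 > 1$: the resolution is a second-moment, signal-to-noise analysis carried out conditionally on the root label, tracking the conditional mean and the variance of the messages separately and showing that, sufficiently far above the Kesten--Stigum threshold, the variance-normalized dynamics keep the injected census-approximation error of strictly lower order than the signal, uniformly in $d$. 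Concretely I would prove that the $L^2$ distance between the surrogate message law and the true belief-propagation message law decays in the number of sweeps $T$ at a rate that does not depend on $d$ (equivalently, that the belief-propagation expansion around the trivial fixed point converges with a $d$-uniform rate in this regime), so that letting $T$ grow drives the root accuracy to the Bayes-optimal value. Because the required $T$ depends only on the target gap $\epsilon$ to Bayes and on $C$, and not on the input size $n$, each such circuit is in $\mathbf{TC}^0$, which is the precise sense in which inference is solved in $\mathbf{TC}^0$.
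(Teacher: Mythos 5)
Your overall strategy --- estimate the labels at an interior frontier by subtree majorities, then combine those noisy estimates above the frontier, with correctness resting on the fact that reconstruction is robust to i.i.d.\ noise once $k\theta^2$ exceeds a large enough constant --- is the same as the paper's. {\sc LinearizedBP} takes majorities over the subtrees rooted at level $d'$ and then applies the information-theoretically optimal combining function $f$ to the resulting noisy bits, and its analysis likewise splits into a second-moment bound on the majority error (Lemma~\ref{lem:deviation}, giving noise level at most $1/(\theta^2k-1)$) plus the robust-reconstruction statement (Proposition~\ref{noisybp}), which the paper imports from prior work rather than reproving.

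However, there is a genuine gap in how you set the frontier, and it sits exactly where the paper's one nontrivial circuit idea lives. You fix a \emph{constant} number $T=T(\epsilon,C)$ of levels above the frontier and simulate $T$ exact BP updates with $\mathbf{TC}^0$ arithmetic. With $T$ held constant, the best achievable accuracy from the noisy frontier is $P_{s,T}$ for some $s\le 1/(\theta^2k-1)$, which is separated from the Bayes limit by a quantity $\epsilon(T)>0$ that does not vanish as $d\to\infty$. So for each $\epsilon$ you get a circuit family within $\epsilon$ of Bayes, but you never produce a single family meeting the definition of an inference function, which requires the gap to be $o(1)$ in $d$. To close that gap the frontier depth must grow with $n$; but then your scheme of spending constant circuit depth per analytic update $g$ costs $\omega(1)$ total depth and leaves $\mathbf{TC}^0$. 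The paper's resolution is to put the frontier at depth $d'=\lfloor\log_k(\log_2 n)\rfloor$, so that there are only $k^{d'}\le\log_2 n$ frontier nodes and hence at most $n$ possible frontier configurations; the \emph{entire} optimal combining function over those $\Theta(\log\log n)$ levels is then a brute-force lookup table (an OR of ANDs of the threshold-gate outputs) of polynomial size and depth two. This lookup-table step is what lets the number of ``exact'' levels tend to infinity while the depth stays constant, and it also spares you from implementing $\tanh$ and $\tanh^{-1}$ in $\mathbf{TC}^0$ and from arguing that vanilla BP run on noise-corrupted inputs (rather than the noise-aware posterior) remains near-optimal. A secondary caveat: the signal-to-noise analysis you propose to carry out yourself is precisely the content of the cited robust-reconstruction theorem and is by far the hardest ingredient; sketching its intuition is not a proof, though to be fair the paper also uses it as a black box.
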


Our approach is based on~\cite{MoNeSl:14b} that shows belief propagation (suitably above the Kesten-Stigum bound) is robust to label noise. This allows to construct a $\mathbf{TC}^0$ circuit by using majority on the leaves of a subtree to get noisy estimates of their roots. We then bootstrap these estimates to get asymptotically optimal estimates of the label of the overall root. It is conjectured that belief propagation works with noisy labels all the way down to the Kesten-Stigum bound (i.e. $k \theta^2  > 1$) in which case we could improve the above theorem analogously. 

As we discussed earlier, belief propagation works even in a worst-case sense and computes the true posterior. We show that the worst-case problem is much harder and is $\mathbf{NC}^1$-complete:

\begin{theorem}[informal, see Theorem~\ref{thm:nc1main}]
There are constants $\theta$ and $k$ for which computing the posterior in the Ising model on trees is $\mathbf{NC}^1$-complete. 
\end{theorem}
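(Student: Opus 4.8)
The plan is to prove both directions of the completeness. Containment in $\mathbf{NC}^1$ is the easy direction, and is essentially the observation already made in the introduction that belief propagation is natural to implement in $\mathbf{NC}^1$: on a complete $k$-regular tree of height $h$ there are $k^h$ leaves, so an input of size $n$ has height $h = O(\log n)$, and the posterior is computed by a single bottom-up sweep of bounded-fan-in message updates of this logarithmic depth. Writing the message at $v$ as the log-likelihood ratio $\ell_v = \log \frac{\Pr[x_v=+\mid \text{leaves below } v]}{\Pr[x_v=-\mid \text{leaves below } v]}$, the update is $\ell_v = \sum_{c \text{ child of } v} g(\ell_c)$, where $g(\ell) = \log\frac{(1+\theta)e^{\ell}+(1-\theta)}{(1-\theta)e^{\ell}+(1+\theta)}$ is the channel's likelihood-ratio map; $g$ is odd, strictly increasing, and saturates to $\pm B$ with $B = \log\frac{1+\theta}{1-\theta}$ as $\ell \to \pm\infty$. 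Reading off $\sgn(\ell_{\text{root}})$, the more likely root label, is thus the evaluation of a logarithmic-depth bounded-fan-in formula, which lies in $\mathbf{NC}^1$. The real content is the $\mathbf{NC}^1$-hardness.

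For hardness I would reduce from the Boolean Formula Value Problem, which is $\mathbf{NC}^1$-complete and, by Spira's theorem, may be taken to be balanced of logarithmic depth. The central observation is that although the recursion above is monotone (each $g$ is increasing and sign-preserving, so there is no way to flip a bit), monotonicity is not an obstacle for a value problem: pushing all negations to the leaves by De Morgan and absorbing them into the given input constants turns the instance into a balanced monotone formula with $\land$ and $\lor$ gates and given $0/1$ leaves, and evaluating such formulas is still $\mathbf{NC}^1$-complete. It therefore suffices to realize $\land$ and $\lor$, both monotone, by gadgets built out of the monotone belief-propagation recursion, with the logarithmic-depth formula embedded into the logarithmic-height tree.

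The encoding I would use represents the logical value of a node by the sign of its field, with TRUE meaning $\ell$ comfortably positive and FALSE meaning $\ell$ comfortably negative; an input leaf labeled $+$ (resp. $-$) is fully saturated and contributes exactly $+B$ (resp. $-B$) to its parent, and a negated input simply flips the label. A gate is a single tree node whose children are the roots of its two input gadgets together with one bias leaf and enough \emph{neutral} padding to give the node exactly $k$ children; a neutral child is the root of a small balanced subtree whose $+$ and $-$ leaves cancel, so that it contributes $g(0)=0$. Taking the bias leaf to be $-$ (contributing $-B$) realizes $\land$: if both inputs contribute near $+B$ the field is positive, while if either input is FALSE the $-B$ bias drives the field negative; taking the bias leaf to be $+$ realizes $\lor$ symmetrically. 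Padding the remaining branches of the complete $k$-regular tree with balanced subtrees keeps all leaves at a common depth and all irrelevant contributions at $0$, and the whole map from a formula to a leaf-labeling is local, hence an $\mathbf{AC}^0$ (indeed projection) reduction.

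The main obstacle is the analog robustness of the gadgets: the fields are real numbers, not exact bits, and $g$ strictly contracts magnitudes ($g(B)<B$), so a naive analysis would let the logical margin decay geometrically over the $O(\log n)$ levels of the formula. The crux is therefore a one-dimensional dynamical-systems argument: I would choose the constant $\theta$ close enough to $1$ that $g$ saturates strongly, using the exact identity $g(B)=\log\frac{1+\theta^2}{1-\theta^2}$ (which tends to $B$ as $\theta\to 1$), and then observe that in the worst case a gate's output margin is governed by the iteration $\ell \mapsto 2g(\ell)-B$, whose positive fixed point is attracting for $\theta$ near $1$. This guarantees that every gate sends inputs in the correct TRUE/FALSE regions to an output in the correct region with a margin bounded away from $0$ independently of depth, so that $\sgn(\ell_{\text{root}})$ equals the formula's output and the reduction is correct. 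The remaining work, fixing the constant $k$ large enough to accommodate two inputs, a bias leaf, and neutral padding, and checking that the balanced padding lays out in a genuine complete $k$-regular tree, is routine bookkeeping.
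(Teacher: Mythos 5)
Your hardness gadget does not work as specified: the margin recursion you propose has no positive fixed point, so the TRUE/FALSE margin collapses after constantly many gate levels no matter how close $\theta$ is to $1$. Concretely, your AND gate with one copy of each input and one $-$ bias leaf produces the field $g(\ell_1)+g(\ell_2)-B$, so the worst-case TRUE margin evolves as $\ell\mapsto 2g(\ell)-B$. But a direct computation shows $g(\ell)<\ell$ for every $\ell>0$ (the inequality $g(\ell)<\ell$ is equivalent to $(1-\theta)<(1-\theta)e^{2\ell}$) and also $g(\ell)<B$; hence for $\ell< B$ one has $2g(\ell)-B<2\ell-B<\ell$, and for $\ell\ge B$ one has $2g(\ell)-B<B\le\ell$. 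So $2g(\ell)-B<\ell$ for \emph{all} $\ell>0$: the orbit is strictly decreasing and in fact converges to a negative fixed point (e.g.\ for $\theta=0.9$, starting from the fully saturated value $\ell=B\approx 2.94$ the orbit goes $2.94\to 1.56\to -0.24$). Sending $\theta\to 1$ does not help, because the target margin $B$ grows at the same rate as the saturation level. The repair is precisely the paper's construction: each input must be fed in \emph{twice} (the paper takes $k=6$ with two copies of each of the two inputs plus two bias children), which changes the recursion to $4g(\ell)-2B$; this map does satisfy $4g(B)-2B>B$ (for $\theta=9/10$, $(\tfrac{1+\theta^2}{1-\theta^2})^4\approx 8236>6859=(\tfrac{1+\theta}{1-\theta})^3$) and therefore has an attracting positive fixed point, which is exactly the content of the paper's lemma that four out of six children with posterior $\ge 19/20$ force the parent's posterior above $19/20$. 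Without this duplication/amplification step the reduction is incorrect.

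The containment direction also has a gap, and it is exactly the one the paper flags. You implement belief propagation as a logarithmic-depth sweep of "bounded-fan-in message updates," but the messages $\ell_v$ are real numbers: to realize each update as a Boolean circuit you must fix a precision, and then you must control how rounding errors accumulate over the $\Theta(\log n)$ levels (errors at a node are multiplied by roughly $k\theta>1$ per level, and worst-case inputs can have posterior polynomially close to $1/2$). Even granting $O(\log n)$ bits of precision per message, each update then involves arithmetic on $O(\log n)$-bit numbers, costing $\omega(1)$ depth per level with bounded fan-in and yielding $\omega(\log n)$ total depth. The paper avoids this entirely with a different construction: a randomized circuit that exactly samples a label from the posterior (with a $?$ failure symbol), amplified by majority over polynomially many independent copies, which is what actually places the posterior problem in $\mathbf{NC}^1$. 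Your write-up asserts membership without addressing either the precision or the depth accounting.
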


%This shows that even for two labels where there is no phase transition in the traditional sense (i.e. a gap in parameters where majority vote fails but belief propagation works for detection) there can be instances where the posterior becomes a more complex function of the labels of the leaves. 

However there is something unsatisfying about a circuit complexity lower bound that applies to the problem of computing the posterior distribution on the label of the root for a worst-case configuration of labels on the leaves. The broadcast tree model is a generative model, and the properties of belief propagation that are used to locate phase transitions are really average-case properties \--- or rather, properties about the posterior distribution on the label of the root, for a typical realization of the labels of the leaves. Now we come to what we believe to be our most significant result. We study the average-case circuit complexity of guessing the label of the root in a broadcast tree model whose parameters are beneath the Kesten-Stigum bound. We prove:

\begin{theorem}[informal, see Theorem~\ref{thm:nc1main2}]
There is a $16$ label broadcast tree model where it is possible to guess the label of the root with probability $\geq 0.999$ but where detection is $\mathbf{NC}^1$-complete. 
\end{theorem}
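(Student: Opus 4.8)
The plan is to upgrade the worst-case $\mathbf{NC}^1$-completeness of the previous theorem into an average-case statement by reducing from an $\mathbf{NC}^1$-complete problem that is both organized as a balanced tree and random self-reducible. The natural candidate is the iterated product (word problem) over a fixed non-solvable group $G$: by Barrington's theorem this is $\mathbf{NC}^1$-complete even when the product is laid out along a balanced binary tree, i.e.\ as a balanced formula over $G$. The balanced tree underlying the broadcast model is exactly the right combinatorial scaffold. If the channel can be arranged so that the maximum-likelihood estimate of the root label, given the leaf labels, equals the value of such a balanced $G$-product whose factors sit at the leaves, then recovering the root is as hard as evaluating the formula. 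This is why a group (rather than, say, a Boolean formula) is the right hard problem: it is simultaneously $\mathbf{NC}^1$-complete and admits a clean self-reduction, which is what will let us move from worst case to average case.

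First I would design the transition kernel so that its reconstruction map is a bottom-up group product. Concretely, each node would encode, within the $16$ labels, both a local group-theoretic state and structural/role information, with the parent-to-child rule set up so that, conditioned on a subtree being ``clean,'' the product of the children's labels recovers the parent's label; the leaves then carry the input factors together with the formula structure. I would choose the noise weak enough that the input and structure propagate intact to the leaves except on a failure event of probability $\le 0.001$, so that the root is a deterministic function of the leaves up to negligible error. This yields detection (indeed full reconstruction) with probability $\ge 0.999$, establishing the information-theoretic possibility half of the statement.

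Next I would tune the spectral profile of the channel so the model lies beneath the generalized (multi-type) Kesten--Stigum bound, $k\lambda_2^2 < 1$ for the sub-dominant eigenvalue $\lambda_2$ of the transition matrix. The conceptual point is that, for an engineered asymmetric channel, being below Kesten--Stigum does \emph{not} force impossibility: the root's signal is carried by higher-order, computationally structured information rather than by label frequencies, so recoverability and below-KS coexist. Below-KS then does positive work for us by killing the easy route --- linear statistics, and more generally the counting/majority strategies available to a shallow $\mathbf{TC}^0$ circuit, carry no signal about the root --- so the only way in is to evaluate the $G$-product. For average-case hardness I would exploit random self-reducibility: a worst-case instance factors as $(r_0^{-1} g_1 r_1)(r_1^{-1} g_2 r_2)\cdots(r_{n-1}^{-1} g_n r_n)=g_1 g_2 \cdots g_n$ with $r_0=r_n=e$ and $r_1,\dots,r_{n-1}$ uniform, so each randomized factor is marginally uniform and the induced broadcast sample is distributed (exactly, or within negligible total variation) as a genuine draw from the model. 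A detector beating random guessing on a $1/q+\epsilon$ fraction of model samples then solves the randomized instance with non-trivial probability, and standard amplification and self-correction recover the worst-case answer, placing $\mathbf{NC}^1$ inside the detector's class.

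The step I expect to be the main obstacle is reconciling all three requirements simultaneously inside only $16$ labels. I must pick a channel whose reconstruction map is genuinely the $G$-product (so that $\mathbf{NC}^1$-hardness transfers), whose noise is weak enough that the root survives to the leaves with probability $\ge 0.999$, yet whose spectrum sits below Kesten--Stigum; and I must fit the group-state-plus-structure encoding, together with the redundancy that forces the self-reduced instances to land in the model's support, into the $16$-label budget, which is where the construction is tightest. A final subtlety to verify is that even an arbitrarily small detection advantage --- not just exact MAP decoding --- already forces evaluation of the formula, so that the completeness claim is about detection itself and not a strictly harder task; this is what ties the average-case amplification back to genuine $\mathbf{NC}^1$-hardness of beating random guessing.
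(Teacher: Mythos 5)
Your high-level strategy matches the paper's: Barrington's theorem for a nonsolvable group, a telescoping worst-case-to-average-case reduction $(r_{i-1}^{-1}g_ir_i)$, a degenerate spectrum to sit below Kesten--Stigum, and a recursive bottom-up reconstruction for the $0.999$ possibility claim. But the proposal is missing the two ideas that actually carry the paper's argument. First, the hardness direction is not established by observing that ``the MLE of the root equals the value of a balanced $G$-product.'' A word-problem instance is a fixed word, not a sample from the model; to use a detection circuit you must exhibit a low-complexity randomized map from the word $\sigma_1,\dots,\sigma_{2^{d+1}}$ to a configuration of leaf labels whose joint distribution is \emph{exactly} that of the broadcast process, with the root equal to the (pair of) products. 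The paper constructs this explicitly via an alternative, top-down generation procedure in which the root's label is implicitly a long product and each level splits the product into shorter blinded segments, so that each leaf depends on only a few $\sigma_i$'s and logarithmically many random bits (hence the composition with a $\mathbf{TC}^0$ detector stays in $\mathbf{TC}^0$); the equivalence of distributions is a lemma proved by induction on depth. You assert the distributional match (``the induced broadcast sample is distributed as a genuine draw'') without constructing this map, and that construction is precisely the step the authors identify as the crux. Your self-reducibility observation handles uniformity of the \emph{factors}, not the conditional independence structure of the $k$ children of every internal node.

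Second, the $16$-label budget cannot be met by your encoding plan: a single element of a nonsolvable group already needs at least $60$ labels ($|A_5|=60$), and your design calls for a group state \emph{plus} structural/role information per node. The paper's construction uses pairs $(\sigma,\sigma')\in A_5^2$ ($3600$ labels, $k=60000$), where each child independently gets a uniformly random pair multiplying to $\sigma$ with probability $2/3$ and to $\sigma'$ with probability $1/3$; this keeps the children conditionally i.i.d.\ given the parent (your ``product of the children's labels recovers the parent's label'' rule would force dependence among siblings and is not a broadcast process), makes the marginal of each child's first coordinate uniform (hence $\lambda_2(M)=0$ and below-KS for free), and supports both the recursive two-most-common-products reconstruction and the reversed generation. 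The reduction to $16$ labels is then a separate quotient argument: pass to pairs of $S_5$-conjugacy classes of $A_5$ elements ($4^2=16$) and prove a symmetry lemma showing that a detector for the coarsened model yields a detector for the original one. Without the conjugacy-class quotient and the equivalent-generation lemma, the proposal does not close.
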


For a general markov process on a $k$-regular tree with a transmission matrix $M$, the Kesten-Stigum bound is $k (\lambda_2(M))^2 > 1$ where $\lambda_2(M)$ is the second largest eigenvalue of $M$. In our construction, the transmission matrix has a second eigenvalue equal to zero and thus no matter how large $k$ is, we are operating below the Kesten-Stigum bound. (Equivalently, no matter how large $k$ is, linear statistics are not enough to guess the label of the root with positive advantage over random guessing). More broadly, we conjecture that the detection problem is $\mathbf{NC}^1$-complete {\em anywhere} beneath the Kesten-Stigum bound, which is consistent with the fractal way that information is stored in such settings \cite{Mossel:01}, but we are only able to prove it for this particular $16$ label broadcast tree model.

Barrington famously showed that the word problem over nonsolvable groups is $\mathbf{NC}^1$-complete \cite{barrington1989bounded}. This leads to a natural average-case $\mathbf{NC}^1$-complete problem via telescopically multiplying by random group elements. We construct a model where the labels of the children can be multiplied to get the labels of the parents. While we can solve detection by multiplying group elements in some way, what is less obvious is how to show that any circuit for detection can be used to solve the word problem. The key idea is we can define an alternative but equivalent generation procedure that starts by labelling the root implicitly as the product of many group elements, and as we follow the process down the levels of the tree, the product simplifies and involves fewer elements until at the leaves it is a random function of a single group element. In this way, the generative process expresses the label of the root as a random function of the labels of the leaves, as opposed to the other way around. This is our most challenging result and perhaps the most surprising.

Finally, we study the circuit complexity of some of the remaining tasks associated with the broadcast tree model to complete the picture. First, it is natural to wonder if weaker circuit models can solve the detection problem. We show an unconditional lower bound against $\mathbf{AC}^0$:

\begin{theorem}[informal, see Theorem~\ref{thm:ac0main}]
For any $0 < \theta < 1$, there is no $\mathbf{AC}^0$ circuit for solving the detection problem in the Ising model on trees. 
\end{theorem}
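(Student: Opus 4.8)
The plan is to first dispose of the easy regime and then reduce detection to a shrinking-bias threshold problem that is provably intractable for constant-depth circuits. For the Ising model ($q=2$) detection coincides with root reconstruction, which is information-theoretically impossible when $k\theta^2<1$, so there the statement is vacuous. I would therefore assume we are above the Kesten--Stigum bound, $k\theta^2>1$, where majority vote (a $\mathbf{TC}^0$ circuit) achieves constant advantage; the content of the theorem is that no $\mathbf{AC}^0$ family can match even a vanishing fraction of this.

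The key reformulation writes each leaf as $X_\ell=R\,\sigma_\ell$, where $R$ is the root label and $\sigma_\ell=\prod_{e\in\path(\ell)}\epsilon_e$ is a product of i.i.d. edge-agreement variables $\epsilon_e\in\{\pm1\}$ of mean $\theta$, independent of $R$. For a $\pm1$-valued circuit $f$ the advantage is $\mathbb{E}[f(X)R]$, and expanding $f$ in the uniform-measure Fourier basis yields the clean identity
\[
\mathbb{E}_{\mathrm{bcast}}[f(X)R]=\sum_{S:\,|S|\text{ odd}}\hat f(S)\,\theta^{e(S)},
\]
where $e(S)$ counts the tree edges covered an odd number of times by the root-to-leaf paths of $S$. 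One checks $e(S)\ge h$ with equality exactly for singletons, so the leading contribution is the magnetization term $\theta^h\sum_\ell \hat f(\{\ell\})$ and every other term is suppressed by at least $\theta^{2}$. The same identity exposes, via Kesten--Stigum, that the informative statistic is a \emph{shrinking-gap} one: the leaf magnetization $n^{-1}\sum_\ell X_\ell\to 0$ almost surely, so detection is equivalent to resolving an imbalance of relative size $\theta^h\to0$ among the $n=k^h$ leaves --- a $\theta^h$-biased coin problem rather than a constant-gap one.

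The proof then combines two facts about $\mathbf{AC}^0$. First, LMN/H\aa stad Fourier concentration kills the high-degree part: for $|S|>t$ the weight $\theta^{e(S)}$ is tiny and $\sum_{|S|>t}\hat f(S)^2\le 2s\,2^{-t^{1/\delta}/20}$, so choosing $t=\Theta(h^\delta)$ makes the high-degree sum negligible even against the $2^{O(h)}$ growth of the relevant $L^2$ norm. Second --- and this is the crux --- the low-degree part, where the magnetization lives and Cauchy--Schwarz is useless (since approximate majority, which \emph{is} in $\mathbf{AC}^0$, can make $\sum_\ell\hat f(\{\ell\})$ as large as $\Theta(\sqrt n)$ under the uniform measure), must nevertheless fail to align with the broadcast spectrum. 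The mechanism is precisely that approximate majority resolves only constant-fraction gaps: I would invoke the coin-problem lower bound, that a depth-$\delta$ circuit distinguishing bias $\beta$ from unbiased requires size $\exp(\Omega(\beta^{-1/(\delta-1)}))$, instantiated at $\beta\asymp\theta^h$, to show that no polynomial-size constant-depth $f$ can convert its large level-one weight into advantage once the higher-order terms are forced to cancel it under the $\theta^{e(S)}$ weighting.

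The hard part is that the leaves are \emph{not} independent biased coins: each $\sigma_\ell$ is a parity of shared tree-paths, so the coin-problem bound does not apply verbatim and the harmful cross-level cancellation must be controlled rigorously. I expect the heart of the argument to be a tree-adapted random restriction that is symmetric under the global flip $X\mapsto -X$ --- hence leaks no information about $R$, avoiding the usual pitfall that revealing leaf values would trivialize the residual root-reconstruction --- and that decouples the subtrees enough to reduce to an essentially independent small-bias instance on which the switching lemma applies. Making this restriction respect the broadcast measure while preserving the detection advantage, and handling all higher-order statistics uniformly through the Fourier identity rather than one at a time, is where I expect the real work to lie.
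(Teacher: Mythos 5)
Your Fourier identity $\mathbb{E}[f(X)R]=\sum_{|S|\text{ odd}}\hat f(S)\,\theta^{e(S)}$ is correct, and you have correctly identified that the crux is the low-degree part and that a random restriction adapted to the tree is the right tool. But the argument as laid out does not close, and the gap is exactly where you flag it. Above the Kesten--Stigum bound the singleton term $\theta^{h}\sum_{\ell}\hat f(\{\ell\})$ can by itself be as large as $(k\theta^2)^{h/2}\cdot\mathrm{polylog}(n)=n^{\Omega(1)}$ (majority realizes this), so the low-degree sum is only small because of massive cancellation \emph{among} the terms $|S|\le t$; no term-by-term or Cauchy--Schwarz bound can work, and you give no mechanism for controlling that cancellation. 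The coin-problem lower bound you want to invoke is proved for i.i.d.\ coins, and the reduction from correlated broadcast leaves to independent $\theta^{h}$-biased coins is precisely the step you leave open. Worse, any restriction that "decouples the subtrees" by conditioning on internal labels destroys the very correlations that make the bias $\theta^{h}$ rather than $\theta^{O(1)}$, so it is not clear the reduction target even exists. As written, the proposal is a plan with its hardest step unexecuted.

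The paper's proof shows that none of this machinery is needed: the broadcast process itself, read from root to leaves, \emph{is} a composition of $d=\Theta(\log n)$ random restrictions $\Phi_{d'}$, each of which independently sets a $(1-\theta)$-fraction of the current level's coordinates to fresh uniform constants (carrying no information about $R$) and identifies each surviving coordinate with its parent's variable. Applying the switching lemma across $\Theta(\log n)$ of these restrictions per circuit layer collapses a depth-$b$ $\mathbf{AC}^0$ circuit to a decision tree of depth $n^{o(1)}$, and the leftover restrictions then kill its remaining relevant variables, so $f\circ\Phi_d\circ\cdots\circ\Phi_1$ is a \emph{constant} function of $X^{(0)}$ with probability $1-O(n^{-\delta})$; a constant has zero advantage, giving $\mathbb{P}[f(X^{(d)})=X^{(0)}]=1/2+O(n^{-\delta})$ unconditionally for every $0<\theta<1$. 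This works directly on the circuit and sidesteps both the Fourier cancellation and the coin problem entirely. If you want to salvage your route, you would need to first prove the restriction-collapse statement anyway --- at which point the Fourier decomposition and the coin-problem reduction become superfluous.
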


\noindent The proof is based on the observation that the generative process for the broadcast tree model can itself be thought of as a random restriction \---- a classic tool for proving circuit lower bounds \cite{furst1984parity}. The main difference is that we do not get to choose the parameters of the restriction ourselves, it is dictated by the model and only sets a constant fraction of the inputs as we go up one level of the tree. Luckily, we can define an alternative generative process that is equivalent to the broadcast tree model but uses random restrictions as an intermediary step. 

Despite the fact that $\mathbf{AC}^0$ circuits do not solve even the most basic type of inference problem in any interesting range of parameters, it turns out that, somewhat surprisingly, they can solve the forward problem of generation. 

\begin{theorem}[informal, see Theorem~\ref{thm:genmain}]
For any $\theta = a/2^b$ where $a$ and $b$ are integers, given uniformly random bits as input, there is an $\mathbf{AC}^0$ circuit for sampling from the Ising model on trees. 
\end{theorem}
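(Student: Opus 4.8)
The plan is to replace the sequential, top-down description of the broadcast process --- which naively seems to require depth proportional to the height $h$, since the label of a leaf depends on an entire chain of parent-to-child transitions --- with an equivalent \emph{static} description in which every label is a bounded-depth function of independent, locally sampled randomness. Concretely, I would attach to each edge $e$ an independent ``copy/reset'' indicator $C_e \in \{0,1\}$ with $\Pr[C_e = 1] = \theta$, and to each node $v$ an independent uniform ``fresh label'' $F_v \in \{0,1\}$. Reading the original process down the tree, a child agrees with its parent exactly when its edge is a copy edge and is otherwise refreshed uniformly; so if one declares the label of $v$ to be $F_{u(v)}$, where $u(v)$ is the topmost ancestor of $v$ reachable from $v$ by an unbroken chain of copy edges (the \emph{source} of $v$), then a straightforward induction down the tree (equivalently, a coupling argument) shows that this reproduces the broadcast tree distribution exactly.

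The first step is then to sample the primitives in $\mathbf{AC}^0$. The fresh labels are just uniform bits. For $C_e$ I would use the hypothesis $\theta = a/2^b$: draw $b$ uniform bits, read them as an integer $X \in \{0,\dots,2^b-1\}$, and set $C_e = 1$ iff $X < a$, so that $\Pr[C_e = 1] = a/2^b = \theta$ exactly. Since $b$ is a fixed constant, each $C_e$ is a function of constantly many input bits and hence is computed by a constant-size gadget (a small DNF). This is precisely where the dyadic bias is used: a non-dyadic $\theta$ cannot be produced exactly from finitely many uniform bits.

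The second, and conceptually central, step is to express each label as a constant-depth formula in the primitives. The source indicators $\mathbf{1}[u(v)=w]$, as $w$ ranges over the ancestors of $v$ (including $v$ itself), are mutually exclusive and exhaustive, and the event $u(v)=w$ with $w$ at distance $d$ above $v$ is simply the conjunction of the $d$ copy indicators along the path from $v$ to $w$ together with the reset indicator of the edge above $w$ (or $w$ being the root). Each such event is therefore an unbounded-fan-in AND of at most $h+1$ literals, and since exactly one source indicator is true we may write
\[
L_v \;=\; \bigvee_{w \text{ ancestor of } v} \bigl( \mathbf{1}[u(v)=w] \wedge F_w \bigr),
\]
the OR picking out $F_{u(v)}$ regardless of its value. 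Crucially, we never \emph{search} up the path for the first reset --- which would cost depth $\Theta(h)$; instead we test all ancestors in parallel, so the formula for each $L_v$ has constant depth. Folding in the constant-depth gadgets for the $C_e$'s keeps the total depth a fixed constant, and since each leaf has only $h = O(\log n)$ ancestors and each conjunction has $O(h)$ inputs, every $L_v$ is computed by a sub-circuit of size $O(\log^2 n)$, for polynomial total size over the $n$ nodes.

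The main obstacle --- and the one genuinely new idea --- is the reformulation in the first paragraph: the original process is inherently sequential along root-to-leaf paths, and it is not obvious that its output distribution can be generated without long chains of dependency. Once the process is re-expressed through independent per-edge copy indicators and per-node fresh labels, with the source map $u(\cdot)$ mediating all long-range correlations, that dependence collapses into a parallel selection that $\mathbf{AC}^0$ handles natively. The remaining points --- exact dyadic sampling and the bookkeeping that the depth and size bounds are uniform in $h$ --- are routine. (Here I take the tree to be the fixed complete $k$-regular tree of height $h$, so that only the labels, and not the tree structure, must be sampled.)
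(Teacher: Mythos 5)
Your proof is correct. It shares the paper's central move---replacing the sequential top-down broadcast by a static description in which every label is a function of independent, locally sampled randomness, so that each output bit depends on only $O(\log n)$ input bits---but the decomposition and the final circuit differ. The paper works in the flip formulation: it attaches to each vertex $v$ an independent bit $Y_v$ with $\mathbb{P}[Y_v=1]=(1+\theta)/2$ (sampled exactly from constantly many uniform bits, using dyadicity just as you do), writes $X'_v$ as the product of a uniform root bit and the $Y_w$ along $\path(\rho,v)$, and then, since this is a function of only $d+1=O(\log n)$ variables, computes it by brute force as an OR over all $2^{d+1}\le 2n$ assignments of ANDs---a depth-two circuit of size $O(n)$ per vertex. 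You instead use the copy/reset formulation and give the explicit selection formula $L_v=\bigvee_w\bigl(\mathbf{1}[u(v)=w]\wedge F_w\bigr)$ over the ancestors $w$ of $v$, where the source indicators are disjoint ANDs of path literals. Both couplings with the broadcast distribution are valid; yours avoids representing each label as a parity along a path and yields smaller per-vertex subcircuits ($O(\log^2 n)$ rather than $O(n)$), while the paper's brute-force DNF step is the more generic tool (any function of logarithmically many bits is a polynomial-size DNF, a trick the paper reuses for the \textsc{LinearizedBP} circuit). The only content of the full theorem you do not touch is the second clause on $2^{-n^c}$-approximate generation for non-dyadic $\theta$, which was not part of the statement you were asked to prove.
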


Thus the broadcast tree model on two labels is an interesting example where there is a wide discrepancy between the depth needed for generation vs. inference. This is reminiscent of the work of Babai \cite{babai1987random} and Boppana and Lagarias \cite{boppana1987one} who show that, while $\mathbf{AC}^0$ cannot compute parity on the uniform distribution, there is a depth one circuit whose outputs depend on two bits each that samples from a distribution whose first $n$ bits are uniform and whose last bit is their parity. 

%Recently, a variety of applied works have suggested particular deep generative models that seem to need comparable levels of depth for inference. Our work shows that super-constant depth is needed even when the generative process is simpler and can be solved by a constant depth circuit. Our work has its intellectual roots in the work of Mossel that uses the broadcast tree model to understand where inference provably requires decoding higher-order correlations, and makes some interesting connections to deep learning. 

%\note{Multiplying the labels of a child gives part of the label of its parent, but you can not tell which part easily. As a result, I do not see any natural way to solve detection by solving a word problem over the group.}

\subsection{More Related Work}

We note that while our depth lower bounds result apply to a natural inference problem, the results proving logarithmic lower bounds are conditional (on the fact that $\mathbf{NC}^1 \neq \mathbf{TC}^0$).
This should be compared to the unconditional lower bounds for deep nets~\cite{Telgarsky:16} and to worst case~\cite{Hastad:87} and average case~\cite{HRRT:17} lower bounds in circuit complexity. 
In fact, part of the motivation for our work comes from the work of the second author~\cite{Mossel:19deep} who suggested that the broadcast model is a particularly natural data generative model that has provable reconstruction algorithms and for which one can prove rigorously that depth is needed for inference. The reconstruction algorithms of the broadcast process, are often referred to as phylogenetic reconstruction algorithm. Polynomial time algorithm for reconstructing phylogenies were established in~\cite{ErStSzWa:99a,ErStSzWa:99b} and phase transition related to the Kesten-Stigum bound in the model were established in~\cite{Mossel:03,Mossel:04a} and follow up work. The paper~\cite{Mossel:19deep} does not prove depth lower bounds in the sense of the current paper. Rather, it shows that for a range of values of $\theta$,  in a semi-supervised broadcast setting, algorithms that can only access low-order moments of the labelled data are unable to classify better than random, while there exists algorithms that use high-order moments and are able to label accurately.
In a concurrent work~\cite{JKLM:19}, it was shown that message passing algorithms that use only bounded memory of bits per node, do not achieve the Kesten-Stigum Bound even for the Ising Model ($q=2$). This proves a conjecture from~\cite{EvKePeSc:00}. However, these results do not have implications for the circuit complexity of the problem. 

There is also a close connection between the types of problems we study here and the coin problem in pseudorandomness \cite{brody2010coin}, which asks: Suppose we are given a coin which is promised to have bias either $1/2 + \delta$ or $1/2 -\delta$ along with $n$ independent tosses and our goal is to guess which way the coin is biased and to guess correctly with (say) probability at least $2/3$. What is the smallest $\delta$ for which a given computational model (e.g. $\mathbf{AC}^0$ \cite{shaltiel2010hardness, aaronson2009bqp}, width $w$ ROBPs \cite{brody2010coin}) can succeed? In fact we can think of this as a broadcast problem on a $n$-ary depth one tree with two labels where the label of the root represents whether the coin has positive or negative bias. 

With an unrestricted computational model, the majority function is optimal. And thus the coin problem is interesting in models that cannot compute the majority function and in turn leads to bounds on the fourier coefficients of the functions that they can compute and is a key ingredient in various PRGs. In the broadcast tree model, the labels of the leaves are no longer independent conditioned on the root but rather have a hierarchical structure to the strength of their dependencies. As it turns out, in light of our results, this problem can be much harder. We show that it is $\mathbf{NC}^1$-complete for a particular broadcast problem on $16$ labels. Optimistically, and in analogy with the coin problem, we could ask: Could proving unconditional lower bounds against $\mathbf{TC}^0$ for the broadcast tree problem lead to non-trivial PRGs?

\section{Preliminaries}

%\fbox{\parbox{16cm}{Wikipedia puts the numbers for a circuit class in a superscript, like $\mathbf{AC}^0$. Also, we could use $\mathbb{P}$ for probabilities.}}

\subsection{The broadcast tree model}

In this paper we consider the classical tree broadcast model on 
regular trees and binary labels. Throughout we will use the following notation. We write $T_k(d)$ for the $d$-level $k$-ary tree.   We will identify such a tree $T_k(d)$ with a
subset of $\N^*$, the set of finite strings of natural numbers,
with the property that if
$v \in T$ then any prefix of $v$ is also in $T$. In this way, the root of the
tree is naturally identified with the empty string, which we will denote by $\rho$.
We will write $uv$ for the concatenation of the strings $u$ and $v$, and
$L_r(u)$ for the $r$th-level descendents of $u$; that is,
$L_r(u) = \{uv \in T: |v| = r\}$. Also, we will write
$\C(u) \subset \N$ for the indices of $u$'s children relative to itself.
That is, $i \in \C(u)$ if and only if $ui \in L_1(u)$. We write $L_r$ for $L_r(\rho)$ and $\parent(v)$ 
for the parent of node $v$. 

\begin{definition}[Broadcast process on a tree]
Given a parameter $\theta \in [-1, 1]$ and a $k$-ary tree of $d$ level $T_k(d)$, the {\em broadcast process on  $T$} is
 a two-state Markov process $\{\sigma_u : u \in T\}$ defined as follows:
let $\sigma_\rho$ be $1$ or $0$ with probability $\frac{1}{2}$. Then, for each $u$ such that $\sigma_u$ is defined,
independently for every $v \in L_1(u)$ let $\sigma_v = \sigma_u$ with probability
$\theta + (1-\theta)/2$ and $\sigma_v = 1-\sigma_u$ otherwise.
\end{definition}

In other words, in the broadcast model, the root is randomly assigned a label in $\{0,1\}$, and then each other vertex is assigned its parent's label with probability $\theta$ and an independent uniformly chosen label with probability $1-\theta$.
Of course, this is equivalent to keeping the bit with probability $1/2 + \theta/2$ and flipping it to the opposive value with 
probability $1/2 - \theta/2$.

This broadcast process has been extensively studied in probability, where the major question is
whether the labels of vertices far from the root of the tree give
any information on the label of the root,~\cite{KestenStigum:66,BlRuZa:95}. See also~\cite{EvKePeSc:00,Mossel:04,MezardMontanari:06}.
A similar question was studied in various communities including bio-informatics~\cite{Felsenstein:04} and AI~\cite{Pearl88} from an algorithmic perspective, where the goal is to estimate (the posterior) of the root given the labels of vertices far from the root. It is well known that Belief Propagation is an exact linear time algorithm for computing the posterior.

We will mainly be focusing on the asymptotic behavior of the broadcast model as $d$ increases with all other parameters held constant, and we will commonly set $n=k^d$. We will be discussing the circuit complexity of multiple tasks associated with the broadcast model on the tree. To simplify notation we write $X^{(r)}$ for the vector of labels at level $r$:
$ X^{(r)} := (\sigma_v : |v| = r) $. 

The most important task associated with the model is inference of the root given $X^{(d)}$. As mentioned earlier, Belief propagation is used for this task. The output of Belief propagation is a posterior distribution $\mathbb{P}[X^{(0)} = \dot | X^{(d)} = x]$. For a fixed $d$ and $k$ the posterior is always bounded away from $0$ and $1$. Indeed if $k$ is even, the posterior can often assign equal probability to the two root values.
Rounding the posterior allows to determine the more likely root value. The probabilistic nature of the inference problem, leads to a number of complexity formulations. First, in the worst-case formulation, we are looking for circuits that estimate the root correctly whenever the posterior is far enough from $(1/2,1/2)$. 
In terms of average case, there is a natural distribution over the inputs, i.e, the distribution given by the broadcast process. It is thus natural to formulate an average case version of the problem where the inputs are drawn from the broadcast distribution and the objective is to estimate the root correctly with almost the same probability that BP does. Finally, in the average case setup we may settle for less, i.e., inferring the root correctly with probability bounded away from $1/2$. 
The formal definition of the $3$ problems follow.

%Since our main interest is in analyzing classical complexity measures for inference tasks, we are mostly interested in decision problems. 

%drawing a sample from a given instance of the tree broadcast model. However, the main focus of this paper will be the question of how hard it is to determine $X^{(0)}$ from $X^{(d)}$ when $X$ is drawn from the TBM with a given set of parameters. Of course, it is ambiguous what exactly we require an algorithm to do in order to be considered a success at this. We could require it to determine $X^{(0)}$ with accuracy $1/m+\Omega(1)$, determine $X^{(0)}$ with an accuracy that is at least as good as any other algorithm, or even require it to give a reasonable guess of the value of $X^{(0)}$ for every possible value of $X^{(d)}$. As such, we define the following.

%\fbox{\parbox{16 cm}{\em I still do not like posterior functions as a term. Maybe we could call them conclusive functions or something.}}

\begin{definition}
We say that a series of functions $f_d:\{0,1\}^{L_d}\to\{0,1\}$ are {\em posterior} functions if 
%recovers the root of $TBM(m,k,M)$ with strongly optimal accuracy 
%if there exists $\delta_d=o(1)$ such that if %$(X^{(0)},\cdots,X^{(d)})\sim TBM_d(m,k,M)$ then
\[
\mathbb{P}[X^{(0)}=f(x)|X^{(d)}=x]\ge \mathbb{P}[X^{(0)}=\BP(x)|X^{(d)}=x]-\delta_d\]
for every $d$ and every $x\in\{0,1\}^{L_d}$, where 
$\BP(x) := \argmax_{a \in \{0,1\}} P[X^{0} = a | X^{d} = x]$ is the optimal Bayes posterior, i.e., the one obtained by applying Belief Propagation and rounding, 
and $\delta_d \to 0$ as $d \to \infty$. 
\end{definition}
%We note that $\BP(x) := \argmax_{a \in \{0,1\}} P[X^{0} = a | X^{d} = x]$ is the optimal Bayes posterior, i.e., the one obtained by applying Belief Propagation and rounding. 

\begin{definition}
We say that a series of functions $f_d:\{0,1\}^{L_d}\to\{0,1\}$ are {\em inference} functions if 
%A series of functions $f_d:\{0,\cdots,d-1\}^{k^d}\to\{0,\cdots,d-1\}$ recovers the root of $TBM(m,k,M)$ with optimal accuracy if for every series of functions $f'_d:\{0,\cdots,d-1\}^{k^d}\to\{0,\cdots,d-1\}$, in the limit as $d\to\infty$, if $(X^{(0)},\cdots,X^{(d)})\sim TBM_d(m,k,M)$ then
\[
\mathbb{P}[f(X^{(d)})=X^{(0)}]\ge \mathbb{P}[\BP(X^{(d)})=X^{(0)}]-\delta_d,
\]
where $\delta_d \to 0$ as $d \to \infty$
\end{definition}
Thus a function is an inference function if they find the most likely root with (almost) the same overall probability as Belief Propagation does. 

\begin{definition}
We say that a series of functions $f_d:\{0,1\}^{L_d}\to\{0,1\}$ are {\em detection} functions if there exists $\delta>0$ and $d_0$ such that for all $d\ge d_0$,
%if $(X^{(0)},\cdots,X^{(d)})\sim TBM_d(m,k,M)$ then
\[
\mathbb{P}[f(X^{(d)})=X^{(0)}]\ge 1/2+\delta
\]
\end{definition}

In other words, a series of detection functions determines the root's label with accuracy $1/2+\Omega(1)$, a series of inference functions determines the root's label with an accuracy within $o(1)$ of the best possible, and a series of posterior functions 
 determines the root's label with an accuracy within $o(1)$ of the best possible conditioned on any possible value of $X^{(d)}$. Clearly posterior functions are also inference functions.  
When the reconstruction problem is unsolvable, there are no detection functions. If it is solvable, then inference functions are also detection functions. 

In addition to inference problem, we are also interested in the generation problem, in other words, what is the computation complexity of generating $X^{(d)}$ given access to random bits. We address the generation question is section~\ref{sec:generation}
%If any series of functions recovers the root with nontrivial accuracy, then anything that recovers it with optimal accuracy also recovers it with nontrivial accuracy, because it must be at least as good as that series. If no series of functions recovers it with nontrivial accuracy, then every series of functions recovers it with optimal accuracy, because the optimal accuracy is $1/m$.

%Our conclusions about the generation problem is that there are $\mathbf{AC}^0$ circuits that can generate $X^{(d)}$ but that no $\mathbf{NC}^0$ circuit can generate $X^{(d)}$ unless $\theta$ is $0$ or $\pm 1$. Our main results on the inference problem are that $\mathbf{AC}^0$ circuits are ineffective at determining the root's label, $\mathbf{TC}^0$ circuits can be somewhat effective at determining the root's label, and that $\mathbf{NC}^1$ circuits can be highly effective at recovering the root's label. More formally, we have the following.

%\begin{theorem}
%If $0\le \theta<1$ and $f$ is a detection function, then $f$ is not in $\mathbf{AC}^0$.
%\end{theorem}

%\begin{theorem}
%There exists $C>0$ such that if $\theta^2 k>C$ then there is an inference function in $\mathbf{TC}^0$.
%\end{theorem}

%\begin{theorem}
%For all $\theta$ and $k$, there is a posterior function in $\mathbf{NC}^1$.
%\end{theorem}

%\begin{theorem}
%There exist $\theta$ and $k$ for which all posterior functions are $\mathbf{NC}^1$-hard to compute.
%\end{theorem}

%\fbox{{\em To be continued.}}

\subsection{Circuit Classes}

Here we give the formal definitions for the circuit classes that we will be interested in:

\begin{definition}
The circuit class $\mathbf{AC}^0$ is the class of constant depth circuits with a polynomial number of AND, OR and NOT gates, where the AND and OR gates have unbounded fan-in.
\end{definition}

It is well-known that there are explicit functions (such as the parity function) for which we can prove lower bounds against $\mathbf{AC}^0$ \cite{furst1984parity}.

\begin{definition}
The circuit class $\mathbf{NC}^1$ is the class of logarithmic depth circuits with a polynomial number of AND, OR and NOT gates, where the AND and OR gates have fan-in two. 
\end{definition}

In the broadcast tree model, the depth of the tree is logarithmic in the number of leaves. It follows that the posterior distribution on the root can always be computed in $\mathbf{NC}^1$.  %The usual approach is through {\em random restrictions} where for every input $x_i$ we leave it unset with probability $p$ and otherwise we set it to zero with probability $\frac{1-p}{2}$ and set it to one with the remaining probability $\frac{1-p}{2}$. The main insight is that if the parameters are chosen appropriately, with high probability the $\mathbf{AC}^0$ circuit becomes much simpler (while the parity function remains a parity on fewer inputs). 

\begin{definition}
A linear threshold function $f: \{0, 1\}^m \rightarrow \{0, 1\}$ takes the form $f(x) = \mbox{sgn}(w^T x - \theta)$ where $w \in \mathbb{R}^m$ and $\theta \in \mathbb{R}$. The circuit class $\mathbf{TC}^0$ is the class of constant depth circuits with a polynomial number of linear threshold function gates with unbounded fan-in. 
\end{definition}

The class $\mathbf{TC}^0$ is contained in $\mathbf{NC}^1$ and can compute any symmetric function of its inputs. In many ways, $\mathbf{TC}^0$ represents the frontier in circuit complexity. Impagliazzo, Paturi and Saks \cite{impagliazzo1997size} showed that depth $d$ $\mathbf{TC}^0$ circuits  with $m$ inputs need at least $m^{1+c^{-d}}$ wires to compute the parity function for some constant $c > 0$. Chen and Tell \cite{chen2019bootstrapping} showed that bootstrapping $\mathbf{TC}^0$ lower bounds just beyond this would yield super-polynomial lower bounds. Miles and Viola \cite{miles2015substitution} gave a candidate pseudorandom function computable in $\mathbf{TC}^0$ which helps explain the difficulty in proving lower bounds against $\mathbf{TC}^0$. 

\section{Lower bounds against $\mathbf{AC}^0$ for detection}

We show that there is no $\mathbf{AC}^0$ circuit that solves the detection problem for any non-trivial choice of parameters. In order to prove this, we are going to define a series of random projections that preserve the probability distribution of $X^{(d)}$ but reduce any circuit in $\mathbf{AC}^0$ to a constant with high probability. For the most part, the proof that these projections reduce the circuit to a constant will be a fairly standard argument using the switching lemma \cite{furst1984parity, yao1985separating, hastad1986almost}. However, due to the nature of the $X^{(d')}$, each projection will only fix a constant fraction of the variables, which will force us to apply $\Theta(\log n)$ successive projections every time we wish to reduce the circuit depth by one. The key observation is that we can preserve the probability distribution of $X^{(d)}$ by setting each vertex's label to its parent's label with probability $\theta$ and a random value otherwise. We prove:

\begin{theorem}\label{thm:ac0main}
Let $f:\{0,1\}^{L_d}\rightarrow \{0,1\}$ be computed by an $\mathbf{AC}^0$ circuit. Then there exists $\delta>0$ such that 
$\mathbb{P}[f(X^{(d)})=X^{(0)}]=1/2+O(n^{-\delta})$
\end{theorem}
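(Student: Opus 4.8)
The plan is to exhibit a coupling in which generating $X^{(d)}$ is literally a sequence of random restrictions applied from the leaves upward, and then to run the standard switching-lemma depth-reduction argument against those restrictions. The starting point is the observation highlighted just before the theorem: I can generate the configuration level by level by deciding, independently for each edge, whether it is a \emph{copy} edge (with probability $\theta$) or a \emph{resample} edge (with probability $1-\theta$), drawing a fresh uniform bit in the resample case. Read from the bottom, one level of this process is exactly a random restriction of the current variables: a vertex whose incoming edge resamples is frozen to an independent uniform constant, while a vertex whose incoming edge copies is identified with its parent and becomes a variable one level higher. I would define this alternative generative process formally, verify it reproduces the law of $X^{(d)}$, and check that after revealing all resample bits down to the root, $X^{(d)}$ is a deterministic function of $\sigma_\rho$ together with the revealed constants, so that the fully restricted $f$ collapses to one of $0,\,1,\,\sigma_\rho,\,1-\sigma_\rho$.

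The difficulty is that a single level keeps each variable alive (a copy edge) with the \emph{constant} probability $\theta$, whereas the switching lemma only tames wide bottom gates when the surviving probability $p$ is polynomially small. I would therefore group the levels into blocks of $t=\Theta(\log n)$ consecutive levels chosen so that $\theta^{t}=n^{-\delta_0}$ for a small constant $\delta_0>0$. Since a level-$d$ leaf survives an entire block precisely when all $t$ edges on its path are copy edges, each block behaves like a random restriction whose $*$-probability is the polynomially small $p=n^{-\delta_0}$, after which the survivors are projected onto their block-ancestors. A preliminary restriction first bounds the bottom fan-in to $O(\log n)$; then one application of H\aa stad's switching lemma per block converts the bottom CNF/DNF into a bounded-depth decision tree, merges it with the layer above, and reduces the circuit depth by one, with total failure probability at most $(5pw)^{s}\cdot(\text{size}) = n^{-\Omega(1)}$ after a union bound over the gates.

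Because an $\mathbf{AC}^0$ circuit has constant depth $D$, only $D$ blocks --- hence $Dt=\Theta(\log n)$ levels --- are needed to collapse $f$ to a bounded-depth decision tree, and since the tree itself has depth $d=\log_k n=\Theta(\log n)$ this fits comfortably for large $n$; this budgeting is exactly where $\delta_0$, and ultimately $\delta$, must be taken small enough. A few further blocks drive the remaining queried variables to constants, so with probability $1-O(n^{-\delta})$ the restricted circuit is independent of $\sigma_\rho$. On that event $f(X^{(d)})$ is a function of the resample bits alone and equals the uniform root bit $X^{(0)}=\sigma_\rho$ with probability exactly $1/2$. Writing the advantage as $\tfrac12\big(\mathbb{P}[f|_{\mathrm{restr}}\equiv\sigma_\rho]-\mathbb{P}[f|_{\mathrm{restr}}\equiv 1-\sigma_\rho]\big)$ and bounding the non-collapse event by $O(n^{-\delta})$ then yields $\mathbb{P}[f(X^{(d)})=X^{(0)}]\le 1/2+O(n^{-\delta})$.

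The main obstacle I anticipate is not the depth recursion but justifying the switching lemma against these broadcast-induced restrictions: unlike the textbook setting, the alive/frozen decisions of two leaves sharing edges within a block are correlated, and survivors are \emph{identified} rather than left free. I would handle this by always applying the switching lemma one level at a time, where the per-edge copy/resample choices across the current variable set are genuinely independent, and by checking that the parent-identification projections never increase decision-tree depth (a projection can only shrink terms or collapse variables). Controlling the correlations introduced across a block --- so that the composition of $t$ independent single-level restrictions can legitimately be analyzed as a single $p=n^{-\delta_0}$ restriction --- while simultaneously guaranteeing that the $\Theta(\log n)$ available levels suffice to complete the collapse, is the delicate part of the argument.
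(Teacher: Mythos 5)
Your high-level strategy is the same as the paper's: reinterpret the broadcast process bottom-up as a sequence of random projections (copy with probability $\theta$, freeze to a fresh uniform bit otherwise), show the composition collapses any $\mathbf{AC}^0$ circuit to a constant with probability $1-O(n^{-\delta})$, and convert that into an advantage bound of $O(n^{-\delta})$. The final accounting step is also correct. However, the place you yourself flag as ``the delicate part'' is a genuine gap, and your proposed fix does not close it. The block of $t=\Theta(\log n)$ consecutive levels is \emph{not} a product random restriction with $*$-probability $p=\theta^{t}$: two leaves whose paths to their block-ancestor share edges have positively correlated survival events, and, worse, all survivors within a block are \emph{identified} with a single common variable (their block-ancestor) rather than remaining distinct free variables. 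H\aa stad's switching lemma, and the $(5pw)^{s}$ bound you quote, are proved for independent per-variable restrictions with distinct survivors; extending them to correlated, identifying projections is a substantial piece of work in its own right (this is precisely the difficulty that projection switching lemmas \`a la Rossman--Servedio--Tan were invented to address), and nothing in your sketch supplies it. Your fallback --- apply the lemma one level at a time, where the per-edge choices are genuinely independent --- restores independence but destroys the numerics: a single level has \emph{constant} survival probability $\theta$, so one application cannot yield a bounded-depth decision tree with failure probability $n^{-\Omega(1)}$; you only get decision trees of depth $\Theta(\log n)$, and then the claim that $D$ blocks suffice to collapse the circuit no longer follows as stated.

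The paper resolves exactly this tension, and its resolution is what is missing from your write-up. It never batches levels into a single polynomially-biased restriction. Instead it (i) first spends $\Theta(\log n)$ single-level projections to reduce every bottom gate to constant fan-in $w$ (a direct union-bound argument, not a switching lemma); (ii) proves an elementary switching-type lemma, by induction on $w$, showing that a $w$-DNF composed with \emph{one} constant-bias projection level becomes a decision tree of depth $O(\log p)$ with probability $1-O(1/p)$ --- accepting logarithmic rather than constant decision-tree depth, which is tolerable because such a tree converts back into a polynomial-size CNF/DNF and the depth reduction can recurse; and (iii) after the circuit has been flattened to a single decision tree of depth $O(\delta_2\log n)$, which depends on only $n^{O(\delta_2)}$ variables, it uses the remaining $\Theta(\log n)$ levels to freeze all of those variables and reach a constant. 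If you want to salvage your block-based version you would need to prove a switching lemma for these correlated identifying projections from scratch; otherwise you should adopt the one-level-at-a-time lemma together with the constant-fan-in preprocessing and the final cleanup phase.
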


\noindent We defer the proof to Appendix~\ref{app:ac0}. As usual, the key idea is to prove that $f$ can be approximated by a small DNF, although here the input to $f$ comes from the broadcast tree model.

\section{$\mathbf{NC}^1$-completeness of posterior functions}

%Rather than continuing to analyse what the requirements to compute the root community with nontrivial accuracy are, we will consider a harder variant of the problem, namely the following.

%\begin{definition}
%Let $f:\{0,1\}^n\rightarrow\{0,1\}$ be a function. $f$ solves approximate reconstruction if for every $x_1,\cdots,x_n$ such that $\mathbb{P}[X^{(0)}=1|X^{(d)}=x]\not\in (1/3,2/3)$, it is the case that $f(x_1,\cdots,x_n)$ is the community that $X^{(0)}$ is more likely to be in if  $X^{(d)}=x$.
%\end{definition}

In this section we will prove that 
\begin{theorem}\label{thm:nc1main}
For all $\theta$ and $k$ and in the Ising tree model, there are posterior functions in $\mathbf{NC}^1$. 
Moreover there are $\theta$ and $k$ for which posterior functions Ising model is $\mathbf{NC}^1$-hard problem.  

 \end{theorem}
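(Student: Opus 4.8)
The statement has two halves: an upper bound (posterior functions lie in $\mathbf{NC}^1$ for every $\theta,k$) and a matching hardness (some $\theta,k$ make the problem $\mathbf{NC}^1$-hard). The upper bound is the easy direction, and I would dispatch it as already sketched in the preliminaries: belief propagation computes the exact posterior by a single bottom-up sweep of $T_k(d)$, whose depth is $d=\log_k n$. In log-likelihood coordinates each internal node sends up the message $\phi_\theta\big(\sum_{c\in\C(v)}\mathrm{msg}_c\big)$, where $\phi_\theta$ is the fixed saturating transfer function of the symmetric channel and a leaf observing label $b$ contributes $\pm\log\frac{1+\theta}{1-\theta}$. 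For rational $\theta$ the root posterior ratio is therefore the value of a fan-in-two arithmetic formula of depth $O(\log n)$ and size $\mathrm{poly}(n)$ over $+,\times$ and integer constants; evaluating such a formula and comparing the resulting $\mathrm{poly}(n)$-bit integers to read off $\BP(x)$ is in $\mathbf{NC}^1$. For irrational $\theta$ I would replace $\theta$ by a $\mathrm{poly}(d)$-bit rational approximation and absorb the error into the slack $\delta_d$, so that the sign of the root magnetization is computed correctly whenever it is bounded away from $1/2$ by more than $\delta_d$. This exhibits posterior functions in $\mathbf{NC}^1$.

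The content is the hardness half, which I would prove by an $\mathbf{AC}^0$ (indeed projection-like) reduction from the Boolean Formula Value Problem for formulas in negation normal form with both polarities of each variable supplied as literals, which is $\mathbf{NC}^1$-complete. The plan is to simulate an arbitrary monotone formula of depth $O(\log n)$ by a broadcast tree, encoding a Boolean value at a node by the sign of its outgoing message, with $\mathrm{TRUE}/\mathrm{FALSE}$ carried as a pair of message values separated by a constant gap. The key observation is that the combine step $\phi_\theta(\sum \mathrm{msg}_c)$ is a monotone, saturating threshold operation, so at each internal node I dedicate two of its $k$ children-subtrees to carry the two operands and use the remaining children as bias inputs whose leaf labels (part of the worst-case configuration $x$) tune the effective threshold so that the node behaves as an AND gate or as an OR gate. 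Unused subtrees are padded to contribute a controlled neutral message, and the literals of the formula are injected at the leaves of the operand subtrees. Because AND and OR together with access to negated literals are functionally complete, this realizes the formula's value as the sign of the root message, i.e.\ as $\BP(x)$; since the map from (formula, inputs) to $(T_k(d),x)$ is purely structural and computable in $\mathbf{AC}^0$, computing posterior functions is $\mathbf{NC}^1$-hard, and combined with the upper bound it is $\mathbf{NC}^1$-complete.

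The main obstacle, and where I would spend most of the effort, is robustness of the gate simulation across the $\Theta(\log n)$ levels of the tree. I must choose a constant $k$ and a fixed $\theta$ bounded away from $0$ and $\pm 1$, together with an encoding of $\mathrm{TRUE}/\mathrm{FALSE}$ as message values, so that (i) an honest operand subtree always delivers one of the two designated values with a constant separation, (ii) the bias children genuinely switch the node between AND and OR behavior for every admissible operand magnitude, and (iii) the output again lands in the designated $\mathrm{TRUE}/\mathrm{FALSE}$ band, so that the constant signal gap neither decays toward zero nor saturates into ambiguity as it propagates upward. The saturation of $\phi_\theta$ is precisely what keeps the magnitudes bounded and prevents the gap from collapsing; verifying that a single pair $(\theta,k)$ makes all three invariants hold simultaneously, and that the padded neutral subtrees can be built to the correct depth on the regular tree, is the crux of the argument.
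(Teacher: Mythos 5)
Your hardness argument is essentially the paper's: the paper also simulates a logarithmic-depth formula gate by gate, fixing $\theta=9/10$, $k=6$, encoding \textsc{true}/\textsc{false} as ``posterior $\ge 19/20$'' vs.\ ``posterior $\le 1/20$,'' dedicating two of the six subtrees to each operand and using the remaining two as constant bias subtrees to switch between AND and OR (NOT is handled by inverting leaf labels), and proving by induction the invariant you correctly identify as the crux \--- that four out of six strongly-biased children force the parent's posterior back into the designated band, so the gap never decays or saturates. Since the leaves are set to constants, inputs, or negated inputs, the reduction is a projection, as you require.

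The gap is in your membership argument. You assert that the root posterior is the value of an $O(\log n)$-depth arithmetic formula over the integers and that ``evaluating such a formula and comparing the resulting $\mathrm{poly}(n)$-bit integers'' is in $\mathbf{NC}^1$. That step is not justified: the intermediate likelihoods in the BP recursion have $\Theta(k^{j})$ bits at level $j$ above the leaves, so evaluating the recursion level by level costs $\Omega(\log n)$ Boolean depth \emph{per level} near the root and $\Theta(\log^2 n)$ overall; exact evaluation of a log-depth arithmetic formula with polynomially long intermediate values is not a known $\mathbf{NC}^1$ computation. The alternative of truncating to bounded precision (which your saturating-transfer-function phrasing suggests) reintroduces rounding errors that must be controlled across all $\Theta(\log n)$ levels simultaneously for \emph{every} worst-case input $x$, and you give no argument for that. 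This is exactly the obstruction the paper flags (``the obvious approach \dots could fail due to rounding errors'') and works around with a genuinely different construction: a randomized $\mathbf{NC}^1$ circuit $F:\{0,1\}^{L_{d'}}\to\{0,1,?\}$ built recursively so that, conditioned on not outputting `$?$', its output is an \emph{exact} sample from the posterior of the root given the leaves; the failure probability is driven down by running constantly many independent copies per node, the posterior probability is then estimated by tallying $n^4$ independent root-level samples, and the randomness is fixed nonuniformly by a union bound. You would need either to supply a genuine $\mathbf{NC}^1$ evaluation procedure for the nested exact arithmetic (or a precision analysis valid for all inputs), or to switch to a sampling-and-amplification scheme of this kind.
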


%{\color{blue} In other words, we want an algorithm that determines what community the root is most likely to be in whenever there is a community that the root is at least $2/3$  likely to be in. We claim that this problem is $\mathbf{NC}^1$-complete. 
We begin by proving the first part of the theorem \---- i.e. that computing the posterior can be solved in $\mathbf{NC}^1$. The obvious approach to establish this would be to try to compute the probability distribution of each node's label based on the probability distributions of its children's labels. However, this could fail due to rounding errors. Instead, we will show that for each node, there exists a random $\mathbf{NC}^1$ function that sometimes outputs a label for that node, such that the probability distribution of the label it output given that it outputs one is the same as the probability distribution of the vertex's label. A little more precisely, for each $d'$ we will show that there exists a random function $F:\{0,1\}^{L_{d'}}\to \{0,1,?\}$ that can be computed by an NC circuit of depth $O(d')$ such that for every $x\in\{0,1\}^{L_{d'}}$, the probability that $F(x)= '?'$ is reasonably small and
\[
\mathbb{P}[F(x)=1|F(x)\ne'?'] = \mathbb{P}[X^{(0)}=1|X^{(d')}=x]
\]

In order to prove this, we will induct on $d'$. If we assume that we have such a probability distribution for $d'-1$, then we can use it to guess the value of $X^{(1)}$ based on the value of $X^{(d')}$. Then, we can also guess which children of $v_1^{(0)}$ have the same label as it, and if any choice of $X^{(0)}$ is consistent with all of these guesses, we can conclude that $X^{(0)}$ has that value. This would give a suitable probability distribution for $d'$, except that it has an excessively high probability of returning $'?'$. Fortunately, we can deal with that by trying it multiple times and returning the first value in $\{0,1\}$ that we get. We defer the proof to Appendix~\ref{app:nc1part1}.

%\note{I am not sure what "computing the posterior means." Is it defined as "what a posterior function does"? Does it mean giving an estimate of $\mathbb{P}[X^{(0)}=1|X^{(d)}=x]$ that is always within $o(1)$ of the true value? In the later case, we need to show that the error is always in $o(1)$, not just that it can always distinguish cases in which it is at least $1/2+\delta_d$ from cases where it is at most $1/2-\delta_d$.}

%\begin{remark}
%If we wanted a more precise output than simply the most likely community of the root, we could compare exactly how many of the $F_i$ returned $1$ and $0$ in order to estimate the probability that the root is in community $1$. For any $c>0$, it would be possible to ensure that our estimate was within $O(n^{-c})$ of the true probability for all possible inputs, provided we used $n^{2c+2}$ functions $F_i$ instead of $n^2$.
%\end{remark}

For the second part of the theorem we interpret any node that is very likely to have a label of $1$ to be a variable that is actually set to $1$, and similarly for the label $0$.  Then, we will construct gadgets for AND and OR, at which point it will be easy to translate an arbitrary $\mathbf{NC}^1$ circuit to an $\mathbf{NC}^0$ formula for $X^{(d)}$ in terms of the circuit's inputs. We defer the proof to Appendix~\ref{app:nc1part2}

%\fbox{Wait, is $\mathbf{NC}^1$ supposed to be uniform? I may have to amend something.}

%\fbox{Also, I should consider formally defining what it means to track a function.}

\section{A $\mathbf{TC}^0$ circuit for inference}
The previous result implies that if $\mathbf{TC}^0\ne \mathbf{NC}^1$ then no $\mathbf{TC}^0$ circuit can compute a posterior function in the Ising tree model. However we can still hope that $\mathbf{TC}^0$ circuits attempting to determine $X^{(0)}$ can still perform well in the average case and can compute an inference function.

A natural approach is to guess that the root has the same label as the majority of the leaves, which gives the right answer with probability $1+\Omega(1)$ if $\theta>1/\sqrt{k}$. However, this is not an inference function. In particular, it achieves worse error even in an average-case sense. Alternatively we could compute an inference function using belief propagation but the naive way to encode this as a circuit would lead to logarithmic depth. The key idea is that the function computed by belief propagation is robust to injecting noise at the leaves. We use this idea by first guessing that each node at depth $\lfloor\log_k(\log_2(n))\rfloor$ has the same label as the majority of the leaves descended from it. Then we guess the value of $X^{(0)}$ by computing the output of belief propagation (on the smaller depth tree) using a look up table. We are able to prove that this circuit is indeed a posterior function when $k\theta^2$ is sufficiently large and we conjecture that it is for any $k \theta^2 > 1$.

More precisely we will build a $\mathbf{TC}^0$ circuit that encodes the following algorithm.

\begin{algorithm}
{\sc LinearizedBP}(d, k, $\theta$, $X^{(d)}$, $f$)
\begin{enumerate}
\item Let $d'=\lfloor\log_k(\log_2(n))\rfloor$.

\item For each $i\in L_{d'}$, randomly select $x^\star_i\in\{0,1\}$ and set 

\[x_i=
\begin{cases}
1 &\text{ if } \sum_{j\in L_{d-d'}(i)} X^{(d)}_j> k^{d-d'}/2\\
0 &\text{ if } \sum_{j\in L_{d-d'}(i)} X^{(d)}_j< k^{d-d'}/2\\
x^\star_i &\text{ if } \sum_{j\in L_{d-d'}(i)} X^{(d)}_j= k^{d-d'}/2\\
\end{cases}
\]

\item Output $f(x)$
\end{enumerate} 
\end{algorithm}
First of all, note that each value of $n$ has a unique corresponding value of $d'$, and each of the $x_i$ can be computed from the inputs and a random bit by a threshold gate. $k^{d'}\le\log_2(n)$, so there are at most $n$ possible values of $x$. That means that we can use an AND gate to check for each possible value of $x$ and then OR together the ones for which $f(x)=1$. That means that for any fixed series of functions $f_d:\{0,1\}^{k^{\lfloor\log_k(\ln(n))\rfloor}}\to\{0,1\}$, there is a $\mathbf{TC}^0$ circuit that computes {\sc LinearizedBP}(d, k, $\theta$, $X^{(d)}$, $f$) given access to $\log_2(n)$ random bits. Furthermore, we conjecture the following.

\begin{conjecture}
There exists a series of functions $f_d:\{0,1\}^{k^{\lfloor\log_k(\ln(n))\rfloor}}\to\{0,1\}$ such that if $X'={\sc LinearizedBP}(d, k, \theta, X^{(d)}, f_d)$ then
\[\lim_{n\to\infty} \mathbb{P}[X'=X^{(0)}]- \mathbb{P}[\BP(X^{(d)})=X^{(0)}]=0,
%\sum_{x\in \{0,1\}^n} \max\left(\mathbb{P}[X^{(0)}=0,X^{(d)}=x],\mathbb{P}[X^{(0)}=1,X^{(d)}=x]\right)=0\]
\]
where $\BP(x) : \{0,1\}^{L_d} \to \{0,1\}$ returns the more likely posterior label of the root
\[
BP(x) = a \; \mbox{ if } \; \mathbb{P}[X^{(0)} = a | X^{(d)} = x] > \mathbb{P}[X^{(0)} = 1- a | X^{(d)} = x] 
\]
\end{conjecture}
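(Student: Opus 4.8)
The plan is to reduce the conjecture to a robust-reconstruction statement, namely that reconstruction in the broadcast process remains asymptotically as accurate under a constant level of leaf noise as it is with no noise, all the way down to the Kesten--Stigum bound. First I would make precise the effect of Step 2 of {\sc LinearizedBP}. Conditioned on $X^{(d')}$, the subtrees hanging off distinct nodes $i\in L_{d'}$ are independent copies of a depth-$(d-d')$ broadcast process, so the computed bits $x=(x_i)_{i\in L_{d'}}$ are conditionally independent given $X^{(d')}$. Moreover, by the $\mathbb{Z}/2$-symmetry of the model (flipping every label simultaneously is a measure-preserving involution, and the tie-break $x^\star_i$ respects it), the probability $\eta_{d-d'}$ that the majority vote $x_i$ disagrees with $X^{(d')}_i$ does not depend on the value of $X^{(d')}_i$. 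Hence $x$ is exactly $X^{(d')}$ observed through a memoryless binary symmetric channel $\mathrm{BSC}(\eta_{d-d'})$. Since $k\theta^2>1$, the census (majority) estimator detects the root of each subtree, so $\eta_{d-d'}\to\eta^\star$ for some $\eta^\star<1/2$ as $d\to\infty$. I would then take $f_d$ to be the rounded belief-propagation posterior for this noisy model: run BP up the depth-$d'$ tree where each leaf message is the likelihood ratio of a $\mathrm{BSC}(\eta_{d-d'})$ observation. As $|L_{d'}|=k^{d'}\le\log_2 n$, this $f_d$ has at most $n$ distinct inputs and is realizable as the polynomial-size look-up table demanded by {\sc LinearizedBP}. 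Writing $X'=f_d(x)$, the quantity $\mathbb{P}[X'=X^{(0)}]$ is then precisely the MAP accuracy of reconstructing the root of a depth-$d'$ tree from its leaves seen through $\mathrm{BSC}(\eta_{d-d'})$.

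Next I would sandwich $\mathbb{P}[X'=X^{(0)}]$ between matching bounds. For the upper bound, note that $x$ is a randomized function of $X^{(d)}$ (majority of leaves plus independent tie-breaking bits), so $X^{(0)}\to X^{(d)}\to x$ is a Markov chain; the data-processing inequality for MAP accuracy gives $\mathbb{P}[X'=X^{(0)}]\le\mathbb{P}[\BP(X^{(d)})=X^{(0)}]$, so the limit of the difference is at most $0$. For the lower bound I would first discard the noise: since $X^{(0)}\to X^{(d')}\to X^{(d)}$ is Markov, reconstructing the root from the \emph{noiseless} intermediate labels $X^{(d')}$ is at least as accurate as from $X^{(d)}$, and as $d',d\to\infty$ both accuracies converge to the common reconstruction limit $p_\infty:=\lim_d\mathbb{P}[\BP(X^{(d)})=X^{(0)}]$. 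It therefore suffices to compare noisy to noiseless reconstruction on the depth-$d'$ tree, i.e. to prove the robustness claim that for every fixed $\eta<1/2$ the MAP accuracy $q_{r,\eta}$ of reconstructing the root of a depth-$r$ tree from its leaves through $\mathrm{BSC}(\eta)$ satisfies $\lim_{r\to\infty}q_{r,\eta}=p_\infty$. Monotonicity of accuracy in the noise level (a larger crossover probability is a channel degradation) then lets me pass from the fixed $\eta$ to the varying $\eta_{d-d'}\le\eta^\star+o(1)$, forcing $\mathbb{P}[X'=X^{(0)}]\to p_\infty$ and hence the claimed limit.

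The robustness claim is where the real work, and the obstacle, lie. I would attack it through the distributional recursion for the BP root-magnetization, using the observation that a $\mathrm{BSC}(\eta)$ leaf observation is equivalent to attaching a pendant edge with transmission parameter $1-2\eta$, so that the noisy and noiseless problems differ only in the boundary (initial) condition of the recursion. The aim is to show that the influence of this boundary condition decays as it propagates up the tree, so that both recursions converge to the same fixed-point magnetization distribution, with the contraction rate governed by $k\theta^2$. This is exactly the strategy behind~\cite{MoNeSl:14b}, and it is precisely here that the argument currently demands $k\theta^2>C$ for a large constant $C$: near the Kesten--Stigum bound the recursion contracts too weakly for the known potential or coupling arguments to force uniqueness of the fixed point. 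I expect this to be the main difficulty. Janson and Mossel~\cite{JansonMossel:04} already establish \emph{solvability} of noisy reconstruction (positive advantage survives noise) for all $k\theta^2>1$, so the open content is upgrading ``positive advantage'' to ``asymptotically optimal advantage'' down to the bound; closing this gap would immediately yield the conjecture via the sandwich above.
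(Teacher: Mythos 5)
You should first note that the statement you were asked to prove is presented in the paper as a \emph{conjecture}, and the paper itself does not prove it; it only proves the weaker Theorem~\ref{thm:tc0main}, which establishes the conclusion under the additional hypothesis $k\theta^2>C'$ for a large constant $C'$. Within that provable range, your proposal is essentially the paper's argument: you observe that conditioned on $X^{(d')}$ the bits $x_i$ are independent and, by the global spin-flip symmetry, form a $\mathrm{BSC}$-corrupted copy of $X^{(d')}$; you bound the crossover probability away from $1/2$ (the paper does this quantitatively via the second-moment bound of Lemma~\ref{lem:deviation}); you take $f_d$ to be the optimal rule on the small tree, realizable as a look-up table since $k^{d'}\le\log_2 n$; and you invoke robustness of root reconstruction to leaf noise (the paper's Proposition~\ref{noisybp}) together with monotonicity of accuracy under channel degradation to sandwich $\mathbb{P}[X'=X^{(0)}]$ against $\lim_{d}P_{0,d}$. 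Your data-processing upper bound and the Markov-chain step $X^{(0)}\to X^{(d')}\to X^{(d)}$ are both correct and mirror the paper's proof of the theorem.

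The genuine gap \--- which you correctly and explicitly flag rather than paper over \--- is the robustness claim for all $k\theta^2>1$: that for every fixed $\eta<1/2$ the limiting MAP accuracy of reconstructing the root from leaves observed through $\mathrm{BSC}(\eta)$ equals the noiseless limiting accuracy. This is known only for $k\theta^2$ above a large constant, the contraction and coupling arguments for the BP magnetization recursion are not known to work near the Kesten--Stigum bound, and the Janson--Mossel result gives only solvability (a positive advantage survives noise), not asymptotic optimality. So your write-up does not prove the conjecture, but neither does the paper: it is an honest reduction of the conjecture to exactly the open robustness problem the authors themselves identify, and restricted to $k\theta^2>C'$ it reproduces the proof of Theorem~\ref{thm:tc0main}.
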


In other words, we believe that {\sc LinearizedBP} can compute $X^{(0)}$ with optimal accuracy. If $k\theta^2\le 1$, then it is known that no algorithm can compute $X^{(0)}$ from $X^{(d)}$ with nontrivial accuracy, so this algorithm uninterestingly attains optimal accuracy. In this section, we will prove that there exists $C>1$ such that {\sc LinearizedBP} can attain optimal accuracy whenever $k\theta^2>C$. The case where $1<k\theta^2\le C$ remains open. The first step towards proving that it can attain optimal accuracy for large values of $k\theta^2$ is to prove that when the algorithm is run, $x$ is a reasonably accurate approximation of $X^{(d')}$. For that, we need the following standard second moment lemma which we include for completeness in Appendix~\ref{app:deviation} (similar lemmas were proven in previous work including~\cite{EvKePeSc:00}). 
%{\em El: The lemma proof should definitely go into the appendix}

\begin{lemma}\label{lem:deviation}
For any $d$, $k$, and $\theta$ such that $k\theta^2>2$, 
\[\mathbb{P}\left[\sum_{i=1}^{k^d} X^{(d)}_i\le k^d/2\middle | X^{(0)}=1\right]\le\frac{1}{\theta^2k-1}\]
\end{lemma}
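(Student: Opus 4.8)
The plan is to run a second moment argument (Chebyshev's inequality) on the signed leaf sum. First I would pass to $\pm 1$ variables by setting $Y_v = 2\sigma_v - 1$, so that the conditioning $X^{(0)}=1$ becomes $Y_\rho = +1$, and observe that the event $\sum_{i} X^{(d)}_i \le k^d/2$ is exactly $\{W \le 0\}$ where $W := \sum_{i \in L_d} Y_i$ (since $\sum_i X^{(d)}_i = (k^d + W)/2$). The whole statement then reduces to bounding $\mathbb{P}[W \le 0 \mid Y_\rho = 1]$.

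The two inputs I need are the first and second conditional moments of the leaf spins, both of which follow from the Markov structure of the broadcast process. Since each edge keeps the spin with probability $(1+\theta)/2$ and flips it otherwise, $\mathbb{E}[Y_v \mid Y_u] = \theta Y_u$ for a child $v$ of $u$, and iterating along the root-to-leaf path gives $\mathbb{E}[Y_i \mid Y_\rho = 1] = \theta^d$; hence $\mu := \mathbb{E}[W \mid Y_\rho=1] = k^d \theta^d = (k\theta)^d$, which is positive since $\theta > 0$. For the second moment I would condition on the spin $Y_w$ at the most recent common ancestor $w$ of two leaves $i,j$: by the tree Markov property $Y_i$ and $Y_j$ are conditionally independent given $Y_w$, so if $w$ lies at level $\ell$ then $\mathbb{E}[Y_i Y_j \mid Y_\rho=1] = \theta^{2(d-\ell)}$, using $Y_w^2 = 1$ (so the value is in fact independent of the root).

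Next I would assemble $\mathbb{E}[W^2 \mid Y_\rho=1] = \sum_{i,j} \mathbb{E}[Y_iY_j\mid Y_\rho=1]$ by grouping the ordered pairs according to the level $\ell$ of their most recent common ancestor. There are $k^\ell$ candidate ancestors at level $\ell$, each serving as the most recent common ancestor of $k^{2(d-\ell)}(1-1/k)$ ordered pairs, while the $k^d$ diagonal terms contribute $1$ each. Writing $\lambda := k\theta^2$ and substituting $m = d-\ell$, the off-diagonal contribution telescopes into $(1-1/k)\,k^d\sum_{m=1}^{d}\lambda^m$. Subtracting $\mu^2 = k^d\lambda^d$ and crudely bounding $(1-1/k) < 1$, the variance collapses to $\operatorname{Var}(W \mid Y_\rho=1) < k^d\sum_{m=0}^{d-1}\lambda^m < k^d\lambda^d/(\lambda-1)$.

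Finally, Chebyshev's inequality gives $\mathbb{P}[W \le 0 \mid Y_\rho=1] = \mathbb{P}[\mu - W \ge \mu \mid Y_\rho=1] \le \operatorname{Var}(W\mid Y_\rho=1)/\mu^2$, and plugging in the variance bound together with $\mu^2 = k^d\lambda^d$ yields exactly $1/(\lambda - 1) = 1/(k\theta^2-1)$; the hypothesis $k\theta^2 > 2$ is used only to make the right-hand side a nontrivial (i.e.\ $<1$) probability, and the argument in fact goes through for any $\lambda > 1$. The only genuinely delicate step is the pair-counting bookkeeping for $\mathbb{E}[W^2]$ and verifying that the resulting double sum telescopes cleanly into the geometric series above; everything else is routine.
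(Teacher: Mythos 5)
Your proof is correct and follows essentially the same route as the paper's: a second-moment (Chebyshev) bound on the leaf sum conditioned on the root, using the tree's correlation structure to control the variance. The only difference is bookkeeping---you compute the variance by summing pairwise spin correlations grouped by the level of the most recent common ancestor, whereas the paper derives the same bound via a recursion on subtree variances; both give a variance of order $k^d(k\theta^2)^d/(k\theta^2-1)$ and the identical final estimate.
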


By symmetry, this also implies that $\mathbb{P}\left[\sum_{i=1}^{k^d} X^{(d)}_i\ge k^d/2\middle | X^{(0)}=0\right]\le\frac{1}{\theta^2k-1}$. So, that gives us a bound on $\mathbb{P}[x_i\ne X^{(d')}_i]$ when the algorithm is run. That leaves the task of showing that we can determine $X^{(0)}$ with optimal accuracy from a noisy version of $X^{(d')}$. In order to discuss the accuracy with which one can do that, we will need to define the following.

\begin{definition}
Let $0\le s\le 1/2$ and $d$ be a positive integer. Also, let $X'\in\{0,1\}^{L_d}$ such that for each $i$, $X'_i$ is independently set equal to $1-X^{(d)}_i$ with probability $s$ and $X^{(d)}_i$ otherwise.
\[P_{s,d}=\sum_{x\in \{0,1\}^{L_d}} \max(\mathbb{P}[X^{(0)}=0,X'=x],\mathbb{P}[X^{(0)}=1,X'=x])\]
\end{definition}
In other words, $P_{s,d}$ is the maximum accuracy with which we can determine $X^{(0)}$ from a noisy version of $X^{(d)}$ in which each bit is flipped with probability $s$.  Mossel et al. \cite{MoNeSl:16b} show the following:

\begin{proposition}\label{noisybp}\cite{MoNeSl:16b}
There exists $C>0$ such that if $k\theta^2>C$ then
\[\lim_{s\to 1/2}\inf \lim_{d\to\infty}\inf P_{s,d}=\lim_{d\to\infty}\inf P_{0,d}\]
\end{proposition}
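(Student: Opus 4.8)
The plan is to recast $P_{s,d}$ as the expected absolute magnetization of the root posterior and to track how that magnetization evolves up the tree under the belief-propagation recursion, showing that for $k\theta^2$ large the recursion forgets how much noise was injected at the leaves. First I would record the decision-theoretic identity
\[
P_{s,d}=\tfrac12+\tfrac12\,\mathbb{E}\bigl|M_d^{(s)}\bigr|,
\qquad
M_d^{(s)}:=\mathbb{P}[X^{(0)}=1\mid X']-\mathbb{P}[X^{(0)}=0\mid X'],
\]
which holds because for each observed configuration the Bayes-optimal guess succeeds with probability $\max(p,1-p)=\tfrac12+\tfrac12|2p-1|$. Flipping each leaf independently with probability $s$ is a binary symmetric channel, and composing with such a channel only degrades information, so $P_{s,d}$ is nonincreasing in $s$ with $P_{s,d}\le P_{0,d}$; moreover $P_{0,d}$ is nonincreasing in $d$ by the Markov structure $X^{(0)}\to X^{(d)}\to X^{(d+1)}$, whence $\liminf_d P_{0,d}=\lim_d P_{0,d}$. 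This already yields the inequality ``$\le$'' in the proposition, so it remains to prove, for each fixed $s<1/2$, the matching lower bound $\liminf_{d\to\infty}P_{s,d}\ge\lim_{d\to\infty}P_{0,d}$; letting $s\to1/2$ then gives the claim.

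Next I would set up the magnetization recursion. With $M_1,\dots,M_k$ the magnetizations of the root's $k$ child-subtrees, belief propagation gives
\[
M_\rho=\frac{\prod_{i=1}^k(1+\theta M_i)-\prod_{i=1}^k(1-\theta M_i)}{\prod_{i=1}^k(1+\theta M_i)+\prod_{i=1}^k(1-\theta M_i)},
\]
and the noise enters only through the base case $M_{\mathrm{leaf}}=\pm(1-2s)$. Viewing this as a map $T$ on symmetric laws on $[-1,1]$, the point mass $\delta_0$ is always a fixed point. Linearizing $M_\rho\approx\theta\sum_i M_i$ and using the edge symmetry $\mathbb{P}[\sigma_v=\sigma_u]=\tfrac{1+\theta}{2}$, the conditional first moment $m_d:=\mathbb{E}[M_d\mid X^{(0)}=1]$ satisfies $m_d\approx k\theta^2\,m_{d-1}$ near $0$. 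Hence when $k\theta^2>1$ the trivial fixed point is repelling: starting from the strictly positive value $m_0=(1-2s)^2$, the recursion amplifies the magnetization by a factor bounded below by something exceeding $1$ until it reaches a macroscopic level, and this ``escape'' costs only a finite number $O\bigl(\log\tfrac{1}{1-2s}\bigr)$ of levels, which is harmless since $s<1/2$ is fixed while $d\to\infty$. A second-moment estimate of exactly the flavor of Lemma~\ref{lem:deviation} is what makes this amplification rigorous and prevents $m_d$ from collapsing back to $0$; note that $m_d\le\mathbb{E}|M_d|=2P_{s,d}-1$, so a lower bound on $m_d$ already lower-bounds accuracy.

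The heart of the matter is to show that once the magnetization is macroscopic the recursion is driven to a single, noise-independent limit. The plan is to prove that for $k\theta^2>C$ the map $T$ has a unique nontrivial symmetric fixed law $\mu_\ast$ and is a contraction (in a Wasserstein- or $\chi^2$-type metric) on the laws that are bounded away from $\delta_0$; this is the step that forces $C$ to be a possibly large constant rather than $1$, and it is the one I expect to absorb essentially all the work, following \cite{MoNeSl:16b}. Granting it, both the clean boundary $M_{\mathrm{leaf}}=\pm1$ and the noisy boundary $M_{\mathrm{leaf}}=\pm(1-2s)$ escape $\delta_0$ and then converge to the same $\mu_\ast$, so $\lim_d P_{s,d}=\tfrac12+\tfrac12\,\mathbb{E}_{\mu_\ast}|M|=\lim_d P_{0,d}$ for every fixed $s<1/2$, which is exactly the reduction established above. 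The main obstacle is precisely this contraction/uniqueness estimate: near the Kesten--Stigum threshold $\delta_0$ is only weakly repelling and the nontrivial fixed point need not be globally attracting, so controlling the full distribution of $M_d$ (rather than just its first two moments) and proving a genuine contraction is delicate, and it is exactly what the constant-factor gap $k\theta^2>C$ buys.
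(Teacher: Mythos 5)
The paper does not actually prove this proposition: it is imported wholesale from \cite{MoNeSl:16b}, so there is no internal argument to compare yours against. Judged on its own terms, your outline is a faithful reconstruction of the strategy behind that reference. The easy half is right and complete: the identity $P_{s,d}=\tfrac12+\tfrac12\mathbb{E}|M_d^{(s)}|$, the data-processing inequality giving $P_{s,d}\le P_{0,d}$ (composing two binary symmetric channels is again a binary symmetric channel, so the noise level is monotone in the degradation order), and the monotonicity of $P_{0,d}$ in $d$ together reduce the proposition to the lower bound $\liminf_d P_{s,d}\ge\lim_d P_{0,d}$ for each fixed $s<1/2$. The magnetization recursion, the base case $m_0=(1-2s)^2$, and the identification of $k\theta^2>1$ as the condition for $\delta_0$ to be repelling are all correct.

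The gap is that the entire technical content of the proposition sits in the step you introduce with ``Granting it'': the claim that for $k\theta^2>C$ the distributional recursion has a unique nontrivial symmetric fixed law and contracts toward it from both the clean boundary $\pm1$ and every noisy boundary $\pm(1-2s)$. Nothing in your sketch, and nothing elsewhere in this paper (Lemma~\ref{lem:deviation} is only a Chebyshev bound on the leaf majority, not a distributional contraction), supplies that estimate; it is precisely the theorem of \cite{MoNeSl:16b} that the proposition exists to quote. Two smaller points to flag if you were to flesh this out: (i) the linearization $m_d\approx k\theta^2 m_{d-1}$ controls only the first moment near $0$, and upgrading ``the first moment escapes'' to ``the law leaves every neighborhood of $\delta_0$ and enters the contraction region'' needs the second-moment identity $\mathbb{E}[M_d\mid X^{(0)}=1]=\mathbb{E}[M_d^2]$ and some uniformity in $s$; (ii) you must check that the clean boundary $\pm1$ lies in the basin of the nontrivial fixed point, which is not automatic from repulsion at $\delta_0$. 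So your proposal is an accurate roadmap of the cited argument rather than a proof of it, with the hard step named but assumed.
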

 In other words, if $k\theta^2$ is sufficiently large then the maximum accuracy with which $X^{(0)}$ can be determined from a highly noisy estimate of $X^{(d')}$ is the same as the maximum accuracy with which $X^{(0)}$ can be determined from $X^{(d')}$. That allows us to prove that {\sc LinearizedBP} is optimal for large values of $k\theta^2$. More formally, we have the following:
\begin{theorem}\label{thm:tc0main}
There exists $C'>0$ such that if $k\theta^2>C'$ then there exists a function $f$ for which {\sc LinearizedBP} run on $f$ is an inference function for the Ising model on trees. 
\end{theorem}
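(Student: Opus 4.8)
The plan is to choose $f_d$ to be the Bayes-optimal reconstruction of $X^{(0)}$ from the compressed vector $x$ produced in Step 2 of {\sc LinearizedBP}, and then to show that the resulting accuracy is squeezed between $\mathbb{P}[\BP(X^{(d)})=X^{(0)}]$ from above and a quantity converging to the same limit from below. The two engines are Lemma~\ref{lem:deviation}, which controls the quality of the majority-vote estimates, and Proposition~\ref{noisybp}, which says that (for $k\theta^2$ large) moderate noise on the level-$d'$ labels does not degrade the asymptotic reconstruction accuracy. Since the preamble already shows that {\sc LinearizedBP} run on any fixed series $f_d$ lies in $\mathbf{TC}^0$, it suffices to exhibit such an $f$.

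First I would analyze the noise. Fix a node $i\in L_{d'}$; its estimate $x_i$ is the majority vote (with independent random tie-breaking) of the $k^{d-d'}$ leaves of the complete $k$-ary subtree of depth $d-d'$ rooted at $i$. By the Markov property of the broadcast process, conditioned on $X^{(d')}$ the subtrees hanging off distinct level-$d'$ nodes are independent, so the events $\{x_i\neq X^{(d')}_i\}$ are conditionally independent given $X^{(d')}$. Because every such subtree is an identically distributed copy and the model is invariant under globally flipping all labels, the conditional error probability $s_0:=\mathbb{P}[x_i\neq X^{(d')}_i\mid X^{(d')}]$ is a single number, independent of $i$ and of the value of $X^{(d')}_i$. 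Applying Lemma~\ref{lem:deviation} (and its symmetric counterpart) to the depth-$(d-d')$ subtree, and noting that random tie-breaking contributes at most half the tie probability, gives $s_0\le s^\star:=\tfrac{1}{\theta^2 k-1}$. Consequently the pair $(X^{(0)},x)$ has exactly the joint law of $(X^{(0)},X')$, where $X'$ is $X^{(d')}$ passed through the symmetric channel with flip probability $s_0$, so taking $f_d$ to be the Bayes-optimal predictor of $X^{(0)}$ from $X'$ makes the accuracy of {\sc LinearizedBP} equal to $P_{s_0,d'}$.

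It remains to squeeze $P_{s_0,d'}$. On one side, $x$ is a randomized function of $X^{(d)}$, so by the data-processing inequality the accuracy of any predictor built from $x$ is at most the Bayes-optimal accuracy from $X^{(d)}$ itself; that is, $P_{s_0,d'}\le P_{0,d}=\mathbb{P}[\BP(X^{(d)})=X^{(0)}]$. On the other side, $P_{s,d}$ is non-increasing in $s$ (additional channel noise is a degradation), so $s_0\le s^\star$ gives $P_{s_0,d'}\ge P_{s^\star,d'}$. Now fix $C'\ge\max(2,C)$, where $C$ is the constant of Proposition~\ref{noisybp}. Since $d'=\lfloor\log_k\log_2 n\rfloor\to\infty$ as $d\to\infty$, and since $P_{0,d'}$ is non-increasing in $d'$ hence convergent to some limit $P_\infty$, Proposition~\ref{noisybp} together with monotonicity in $s$ forces $P_{s^\star,d'}\to P_\infty$; the same monotone limit gives $P_{0,d}\to P_\infty$. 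Combining, $P_{s^\star,d'}\le P_{s_0,d'}\le P_{0,d}$ with both ends tending to $P_\infty$, so $\mathbb{P}[\BP(X^{(d)})=X^{(0)}]-\mathbb{P}[{\sc LinearizedBP}=X^{(0)}]=P_{0,d}-P_{s_0,d'}\to 0$, which is exactly the inference-function condition.

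The main obstacle is the noise-analysis step: one must argue carefully that the majority-vote estimates act as a genuine i.i.d.\ symmetric channel on $X^{(d')}$ --- conditional independence of the subtrees, symmetry of the flip probability in the underlying label, and the reduction of tie-breaking to the bound of Lemma~\ref{lem:deviation} --- so that Proposition~\ref{noisybp} applies verbatim. The remaining inequalities are soft monotonicity and data-processing facts, and the only quantitative input is that $k\theta^2$ exceeds the threshold $C$ of Proposition~\ref{noisybp} (and $2$, for Lemma~\ref{lem:deviation}).
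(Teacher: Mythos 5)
Your proposal is correct and follows essentially the same route as the paper's proof: bound the per-node flip probability by $\tfrac{1}{\theta^2k-1}$ via Lemma~\ref{lem:deviation}, identify the accuracy of the optimal $f_d$ with $P_{s,d'}$, and squeeze between $P_{s^\star,d'}$ and the Bayes-optimal accuracy using Proposition~\ref{noisybp} together with monotonicity in $s$ (your explicit treatment of the conditional independence of the subtree majorities is a point the paper glosses over). The only nit is that $C'\ge\max(2,C)$ guarantees only $s^\star<1$ rather than $s^\star\le 1/2$ as the definition of $P_{s,d}$ requires; the paper takes $C'=\max(C,4)$ for exactly this reason, and the fix costs nothing.
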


\begin{proof}
First, let $C'=\max(C,4)$.  First we observe that for any $d$, when {\sc LinearizedBP} is run, each bit $x_i$ is independently set equal to $X^{(d')}_i$ with some advantage over random guessing and set to the opposite value otherwise. Let $s_d=\mathbb{P}[x_i\ne X^{(d')}_i]$. Next, let $f_d$ be the function that maximizes the probability that {\sc LinearizedBP} outputs the correct label for the root. Let $q$ be the probability that it succeeds. Then we have
$$ q =\sum_{x'\in \{0,1\}^{L_{d'}}} \max(\mathbb{P}[X^{(0)}=0,x=x'],\mathbb{P}[X^{(0)}=1,x=x'])=P_{s_d,d'}$$
Now, let $s'=\frac{1}{\theta^2k-1}$. Proposition~\ref{noisybp} shows that $s_d\le s'$ for all $d$, and adding more noise can never make it easier to determine $X^{(0)}$, so for every $d$, it must be the case that
\[P_{0,d'}\ge P_{s_d,d'}\ge P_{s',d'}\]
Combining that with the previous theorem shows that
\[\lim_{d'\to\infty}\inf P_{0,d'}\ge \lim_{d'\to\infty}\inf P_{s',d'}\ge \lim_{s\to 1/2}\inf \lim_{d'\to\infty}\inf P_{s,d'}=\lim_{d'\to\infty}\inf P_{0,d'}\]
Also, $P_{0,d'}$ is a nonincreasing function of $d'$, so $P_{0,d'}$ converges. So, 
\[\lim_{s\to 1/2}\sup \lim_{d'\to\infty}\sup P_{s,d'}\le \lim_{d'\to\infty}\sup P_{s',d'}\le \lim_{d'\to\infty}\sup P_{0,d'}=\lim_{d'\to\infty} P_{0,d'}\]
That implies that all of these sequences converge to $\lim_{d'\to\infty} P_{0,d'}$, and thus that {\sc LinearizedBP} computes $X^{(0)}$ with optimal accuracy.
\end{proof}

%\fbox{{\em This would still attain optimal accuracy if I always have $x_i=1$ in the event of a tie}}

%\fbox{{\em However, that would be harder to prove.}}

\section{$\mathbf{NC}^1$ hardness of detection with many labels}

So far, we have been assuming that there are only two labels that could be assigned to a vertex. However, we could instead have $m$ labels for arbitrary $m$. That leads to the following definition

\begin{definition}[Generalized broadcast process on a tree]
Given parameters $m>0$ and an $m\times m$ matrix $M$ with nonnegative entries and columns that add up to $1$, the {\em generalized broadcast process on  $T$} is
 an $m$-state Markov process $\{\sigma^\star_u : u \in T\}$ defined as follows:
let $\sigma^\star_\rho$ be drawn uniformly at random from $\{1,\cdots,m\}$. Then, for each $u$ such that $\sigma^\star_u$ is defined,
independently for every $v \in L_1(u)$ let $\sigma^\star_v = 1$ with probability $M_{i,\sigma^\star_u}$ for each $i$.
\end{definition}

In other words, in the generalized broadcast model, the root is randomly assigned a label in $\{1,\cdots,m\}$, and then each other vertex is assigned a label with a probability distribution corresponding to the column of $M$ indexed by its parent's label. 
 Note that the previous case is simply the instance of this where $m=2$ and $M=\theta I+\frac{1-\theta}{2} J$, where $J$ is the matrix with all entries equal to $1$. There is an important difference between the case when there are just two labels and when there are more. It turns out there are many natural cases where it is possible to detect the label of the root, but not by taking the majority vote of the labels of the leaves. The function computed by Belief Propagation is generally more complicated and the main result of this section is to show that this manifests as a phase transition in the circuit complexity of solving detection. When there are many labels, we will show that the problem becomes $\mathbf{NC}^1$ hard. 
 
% For the most part, our results about the previous case carry over. However, one notable difference is that there are values of $m$ and $M$ such that there are detection functions in $\mathbf{NC}^1$ but not in $\mathbf{TC}^0$ unless $\mathbf{TC}^0=\mathbf{NC}^1$.

First, we need a problem that is $\mathbf{NC}^1$-hard in the average case. In a celebrated result, Barrington showed that deciding whether the word problem (i.e. if a given word is the identity or not) over a finite nonsolvable group is $\mathbf{NC}^1$-complete \cite{barrington1989bounded}. We will work with the alternating group $A_5$:

%In order to prove that $\mathbf{TC}^0$ circuits fail at reconstructing the root's community with nontrivial accuracy if $\mathbf{TC}^0\ne \mathbf{NC}^1$, we need a problem that is $\mathbf{NC}^1$-hard in the average case. For this, we use the problem of computing the product of a series of elements of the alternating group $A_5$. This problem is NC-1 complete in the following sense.

\begin{proposition}\label{prop:barr} \cite{barrington1989bounded}
For every $c\in A_5$ such that $c\ne 1$, determining whether a product of elements of $A_5$, $\prod_{i=1}^n \sigma_i$ is $c$ or the identity given that it is one of them is $\mathbf{NC}^1$-complete.
\end{proposition}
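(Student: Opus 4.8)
The plan is to prove both halves of completeness: that the word problem lies in $\mathbf{NC}^1$, and that it is $\mathbf{NC}^1$-hard. The membership direction is routine. Since $A_5$ is a fixed finite group, I would encode each element in $O(1)$ bits and realize the multiplication map $A_5\times A_5\to A_5$ as a constant-size Boolean circuit (a lookup table written as a DNF). Given $\sigma_1,\dots,\sigma_n$, I evaluate the product $\prod_i\sigma_i$ by a balanced binary tree of such multiplication gadgets: the tree has depth $\lceil\log_2 n\rceil$, each internal node contributes constant depth, so the whole computation has depth $O(\log n)$ and polynomial size, and a final $O(1)$-size comparison of the result against $c$ decides the promise. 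For hardness I would reduce from an arbitrary $\mathbf{NC}^1$ circuit, which after balancing may be taken to be a fan-in-two formula $\Phi$ over $\{\wedge,\vee,\neg\}$ of depth $O(\log n)$. The central notion is that of a sequence of $A_5$-valued instructions, each depending on a single input bit, that \emph{$g$-computes} a Boolean function $h$: the product of the instructions equals the fixed $5$-cycle $g$ on inputs with $h=1$ and equals the identity $e$ when $h=0$. I would build such a sequence for $\Phi$ by structural induction.

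The induction uses three gadgets. The base case, a literal, is a single instruction emitting $g$ or $e$ according to that bit. Negation is handled by appending the constant instruction $g^{-1}$, which turns a $g$-computing sequence for $h$ into a $g^{-1}$-computing sequence for $\neg h$. A \emph{conjugation} step turns a $g$-computing sequence into an $h$-computing one for any $A_5$-conjugate $\theta g\theta^{-1}$, by inserting the constant instructions $\theta$ and $\theta^{-1}$ at the two ends. The crux is the AND gadget: I fix two $5$-cycles $\sigma,\tau$ whose commutator $[\sigma,\tau]=\sigma\tau\sigma^{-1}\tau^{-1}$ is again a $5$-cycle, then concatenate a $\sigma$-computing sequence for $h_1$, a $\tau$-computing sequence for $h_2$, a $\sigma^{-1}$-computing sequence for $h_1$, and a $\tau^{-1}$-computing sequence for $h_2$; by direct expansion the product is $[\sigma,\tau]$ exactly when $h_1=h_2=1$ and collapses to $e$ otherwise, so this $[\sigma,\tau]$-computes $h_1\wedge h_2$, and disjunction follows by De Morgan. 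Unwinding the recursion, a formula of depth $O(\log n)$ yields a sequence of length $4^{O(\log n)}=\mathrm{poly}(n)$, and since each instruction reads a single input bit the reduction is a projection, computable in (uniform) $\mathbf{AC}^0$.

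The construction outputs a fixed $5$-cycle $g$, so to reach an arbitrary target $c\neq 1$ I would exploit that $A_5$ is simple: the conjugacy class of $g$ generates $A_5$, hence $c=\prod_j a_j g^{\varepsilon_j}a_j^{-1}$ for constantly many factors with $\varepsilon_j\in\{\pm1\}$ and fixed $a_j$. Substituting the $g$-computing sequence for $\Phi$ into each occurrence of $g$ yields a sequence whose product is $c$ when $\Phi=1$ and $e$ when $\Phi=0$, completing the reduction for every $c\neq 1$. I expect the main obstacle to be the group theory underlying the AND gadget: verifying that $A_5$ genuinely contains two $5$-cycles with a $5$-cycle commutator — this is exactly where non-solvability is used, since iterated commutators would eventually die in a solvable group — and tracking the two $A_5$-conjugacy classes of $5$-cycles carefully enough that the cycles demanded by successive AND and conjugation steps are always realizable inside $A_5$. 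I would pin this down with an explicit choice such as $\sigma=(1\,2\,3\,4\,5)$ together with a suitable $\tau$, checking by direct computation that $[\sigma,\tau]$ is a $5$-cycle and that the relevant class is closed under the inverse and commutator operations the induction invokes.
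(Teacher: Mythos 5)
The paper offers no proof of this proposition---it is quoted directly from Barrington \cite{barrington1989bounded}---and your proposal is a faithful reconstruction of Barrington's original argument: balanced-tree multiplication of constant-size group gadgets for membership in $\mathbf{NC}^1$, the commutator gadget over $5$-cycles for AND, and the extension to arbitrary $c\neq 1$ via simplicity of $A_5$. The only point to nail down is the conjugation step: inserting $\theta$ and $\theta^{-1}$ as end instructions forces $\theta\in A_5$, but since $A_5$ is normal in $S_5$ you may instead conjugate \emph{every} instruction by an arbitrary $\theta\in S_5$ (the conjugated instructions stay in $A_5$), which makes all $5$-cycles interchangeable and dissolves the two-conjugacy-class worry you flag at the end.
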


Conveniently, this problem has a simple worst-case to average-case reduction:

\begin{theorem}
Let $f_r: A_5^r\rightarrow A_5$ be a family of functions. Suppose there exists $\epsilon>0$ independent of $r$ such that when $\Sigma_1,\cdots,\Sigma_r$ are independently drawn from $A_5$ according to the uniform distribution, $$\mathbb{P}[f_r(\Sigma_1,\cdots,\Sigma_r)=\prod_{i=1}^r \Sigma_i]\ge 1/60+\epsilon$$ If $\mathbf{TC}^0\neq \mathbf{NC}^1$ then there is no $\mathbf{TC}^0$ circuit that computes $f$.
\end{theorem}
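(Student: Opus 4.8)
The plan is to prove the contrapositive: assuming a $\mathbf{TC}^0$ family $\{f_r\}$ with the stated average-case advantage, I will build a $\mathbf{TC}^0$ family solving an $\mathbf{NC}^1$-complete word problem, contradicting $\mathbf{TC}^0\neq\mathbf{NC}^1$ via Proposition~\ref{prop:barr}. The heart of the argument is a random self-reduction that turns a worst-case product into a uniformly random instance while implicitly tracking the answer. Given a worst-case word $\sigma_1,\ldots,\sigma_r$ with unknown product $P_0=\prod_i\sigma_i$, I draw independent uniform masks $g_1,\ldots,g_{r-1}\in A_5$, set $g_0=g_r=e$, draw an independent uniform shift $t\in A_5$, and feed $f_r$ the randomized word $\Sigma_i=g_{i-1}^{-1}\sigma_i'g_i$ with $\sigma_1'=t\sigma_1$ and $\sigma_i'=\sigma_i$ for $i\geq 2$. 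Telescoping gives $\prod_i\Sigma_i=tP_0$ without ever computing the length-$r$ product, and the map $(t,g_1,\ldots,g_{r-1})\mapsto(\Sigma_1,\ldots,\Sigma_r)$ is a bijection onto $A_5^r$, so the $\Sigma_i$ are exactly independent and uniform and $f_r$'s guarantee applies. The corrected guess $G=t^{-1}f_r(\Sigma)$ then satisfies $G=P_0 W$, where $W=(\prod_i\Sigma_i)^{-1}f_r(\Sigma)$ is an error term whose law $\mu_r$ depends only on $f_r$ and satisfies $\mu_r(e)=\mathbb{P}[f_r\text{ correct}]\geq 1/60+\epsilon$. Each query is constant depth: the masking is coordinatewise constant-size group multiplication, $f_r\in\mathbf{TC}^0$ by hypothesis, and the product $tP_0$ is known by construction.

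Repeating with fresh randomness yields i.i.d. samples from $P_0\mu_r$, and the remaining task is to recover $P_0$. The difficulty is that $\mu_r$ can be adversarial, so $P_0$ is a priori only identifiable up to the left-stabilizer $H_r=\{h:h\mu_r=\mu_r\}$. Two structural facts rescue the argument. First, since $\mu_r(e)\geq 1/60+\epsilon$, we have $\|\mu_r-\mathrm{unif}\|_{TV}\geq\epsilon$, and because averaging all left-translates of $\mu_r$ gives the uniform distribution, $\sum_{c\neq e}\|\mu_r-c\mu_r\|_{TV}\geq 60\|\mu_r-\mathrm{unif}\|_{TV}\geq 60\epsilon$; hence $H_r$ is a proper subgroup and some translate is $\Omega(\epsilon)$-separated, so distinct cosets yield statistically distinguishable sample distributions. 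I can moreover estimate $\mu_r$ itself in $\mathbf{TC}^0$ by running $G$ on self-generated words of known product (again via telescoping masks). Second, I exploit that $A_5$ is simple: rerunning the reduction in each conjugated framing, replacing $\sigma_i$ by $a\sigma_i a^{-1}$ (whose product is $aP_0a^{-1}$), reveals the coset $P_0(a^{-1}H_r a)$, and intersecting over all $a\in A_5$ gives $P_0\cdot\bigcap_a a^{-1}H_r a=P_0\cdot\{e\}$, since the normal core of a proper subgroup of a simple group is trivial. Constant-many framings therefore pin down $P_0$ exactly.

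To package this into a lower bound, each coset-identification is a statistical test among at most $60$ known distributions, so $\mathrm{poly}(r)$ samples and a layer of threshold gates output $P_0$ with probability $1-2^{-\Omega(r)}$; a union bound over the $\leq 2^{O(r)}$ inputs then lets me hardwire one good random string, giving a deterministic $\mathbf{TC}^0$ circuit computing $\prod_i\sigma_i$ for every word and in particular solving the promise problem of Proposition~\ref{prop:barr}. This contradicts $\mathbf{TC}^0\neq\mathbf{NC}^1$.

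The step I expect to be the main obstacle is making the statistical recovery of $P_0 H_r$ robust and uniform in $r$. The exact stabilizer is delicate: an adversarial $f_r$ can embed small, vanishing perturbations in $\mu_r$ that make some non-stabilizing translates arbitrarily close to $\mu_r$, so the minimum inter-coset separation need not be bounded below by a fixed polynomial in $1/r$. The fix I would pursue is to replace $H_r$ by a robust $\beta$-stabilizer (translations moving $\mu_r$ by at most $\beta\approx\epsilon/10$ in total variation), use the $\epsilon$-far-from-uniform bound together with the subgroup lattice of $A_5$ to show the induced clustering behaves like a genuine subgroup with trivial normal core, and carry the $\Omega(\epsilon)$ separation through the conjugation–intersection step. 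Verifying that this approximate-subgroup form of the simplicity argument still isolates $P_0$, with every estimate constant-depth and polynomial-size, is the crux; by comparison the telescoping self-reduction and the simplicity-based disambiguation are routine.
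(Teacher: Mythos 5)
Your first step---the telescoping random self-reduction $\Sigma_i=b_{i-1}^{-1}\sigma_i b_i$ that makes the instance uniform on $A_5^r$ while keeping the product known up to the masks---is exactly the masking the paper uses. The divergence, and the gap, is in what you do with the oracle afterwards. You try to recover the worst-case product $P_0$ exactly from i.i.d.\ samples of $P_0\mu_r$, which forces you to control the unknown error law $\mu_r$: you must show that its (approximate) left-stabilizer is a proper subgroup with a quantitative separation between cosets, and then run the conjugation/normal-core argument on that approximate subgroup. As you yourself concede, the separation between stabilizing and non-stabilizing translates can be made arbitrarily small by an adversarial $f_r$ (e.g.\ of order $2^{-r}$), so the exact-stabilizer version of the argument fails outright, and the robust-$\beta$-stabilizer replacement is precisely the step you have not carried out. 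It is probably salvageable---nest the sets $H_j=\{h:\|h\mu_r-\mu_r\|_{TV}\le 2^{j}\beta\}$ over constantly many scales, pigeonhole a level with $H_j=H_{j+1}$ to obtain a genuine proper subgroup with an $\Omega(\beta)$ gap to its complement, and only then intersect conjugate cosets---but as written the proof is incomplete at exactly the point you identify as the crux, so it does not yet establish the theorem.

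The paper's proof never needs to understand $\mu_r$ because it never tries to decode an arbitrary product. It first draws $c$ uniformly from $A_5\setminus\{1\}$ and \emph{then} builds the Barrington program for $h_n(x)$ whose product is $1$ or $c$; after masking, the true answer is one of the two known values $b_r$ or $cb_r$, and the reduction outputs $0$ or $1$ only if $f$ returns one of these, abstaining otherwise. Since the masked word is uniform for every fixed program, $c$ is independent of $f$'s input, so whenever $f$ errs its output collides with the wrong candidate with probability at most $1/59$; each run is therefore correct with probability at least $1/60+\epsilon$ and incorrect with probability strictly below $1/60$. Majority vote over polynomially many independent runs amplifies this gap to $1-o(2^{-n})$, and a union bound over inputs fixes the randomness inside a $\mathbf{TC}^0$ circuit. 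This sidesteps the stabilizer, simplicity, and approximate-subgroup machinery entirely. If you want to keep your decoding-based route you must supply the robust-stabilizer lemma in full; otherwise, restructure the second half around the two-candidate promise form of Barrington's problem as the paper does.
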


\begin{proof}
For the sake of contradiction, we will assume that there is a $\mathbf{TC}^0$ circuit that computes $f$. Let $h_n:\{0,1\}^n\rightarrow \{0,1\}$ be an $\mathbf{NC}^1$-complete family of functions. Consider the following randomized algorithm attempting to compute $h_n(x)$. First, generate a random $c\in A_5\backslash \{1\}$. 
Next, the completeness of $h_n$ implies there there exists $r$ polynomial in $n$ and $\sigma\in A_5^r$ such that $\prod_{i=1}^r \sigma_i=c$ if $h_n(x)=1$ and $\prod_{i=1}^r \sigma_i=1$ if $h_n(x)=0$ (note that $\sigma$ depends on $c$ and $x$ and the computation of $\sigma$ is in $\mathbf{NC}^0$). 
%Consider the functions $g_n : A_5^n \to \{0,1\}$ that return the value $1$ if $\prod_{i=1}^r \sigma_i=c$ and return the value $0$ if $\prod_{i=1}^r \sigma_i=1$. 
%From Proposition~\ref{prop:barr} these collection of functions is complete. 
Now 
randomly select $b_i\in A_5$ for each $1\le i\le r$. Next compute 
$$f(\sigma_1 b_1, b_1^{-1} \sigma_2 b_2, b_2^{-1} \sigma_3 b_3,\cdots, b_{r-1}^{-1} \sigma_r b_r).$$ 
If it is equal to $b_r$, conclude that $h_n(x)=0$, if it is $c b_r$ then conclude that $h_n(x)=1$, and output nothing otherwise. No matter what the value of $\sigma$ is, the probability distribution of $(\sigma_1 b_1, b_1^{-1} \sigma_2 b_2,\cdots, b_{r-1}^{-1} \sigma_r b_r)$ is the uniform distribution on $A_5^r$. Hence we have that $$\mathbb{P}[f(\sigma_1 b_1,\cdots,b_{r-1}^{-1} \sigma_r b_r)=\sigma_1\sigma_2,\cdots,\sigma_r b_r]\ge 1/60+\epsilon$$ Thus, this algorithm computes $h_n(x)$ correctly with a probability of at least $1/60+\epsilon$. Furthermore, $c$ is independent of  $(\sigma_1 b_1,\cdots, b_{r-1}^{-1} \sigma_r b_r)$, and thus of what $f$ will return if it computes the product incorrectly. So, this algorithm computes $h_n(x)$ incorrectly with a probability of at most $1/60$. Thus if we repeat this process a large polynomial number of times and take the majority vote, we can compute $h_n(x)$ correctly with probability at least $1-o(2^{-n})$. Thus there must be some choices of our random variables for which this computes $h_n(x)$ correctly for every $x$. This whole procedure can be carried out by a $\mathbf{TC}^0$ circuit, so $\mathbf{TC}^0$=$\mathbf{NC}^1$.
\end{proof}

Now that we have a problem that is $\mathbf{NC}^1$-hard in the average case, we need a way to reduce this to the problem of determining the label of the root for some choice of parameters. In order to do that, we consider the following instance of the generalized broadcast process on a tree. There is one label for every ordered pair $(\sigma, \sigma')\in A_5^2$, and $k=60000$. Given a vertex with a parent with label $(\sigma,\sigma')$, we select a random $b\in A_5$. Then, we set its label to $(b,b^{-1}\sigma)$ with probability $2/3$ and $(b,b^{-1}\sigma')$ with probability $1/3$. In other words, each child of a vertex is assigned a random ordered pair that multiplies to $\sigma$ with probability $2/3$ and a random ordered pair that multiplies to $\sigma'$ with probability $1/3$. For the rest of this section, we will assume that $\sigma^\star$ was generated by the generalized broadcast process with these parameters.

Note that it is straightforward to implement this process with an $\mathbf{NC}^1$ circuit because the tree has logarithmic depth.
Moreover, we argue that detection is information-theoretically possible. The key idea is for any $d'$, if we can determine the labels of the vertices at depth $d'$ so that each label is correct (independently) with probability $0.99$ then for any vertex at depth $d'-1$ we can determine its label with probability at least $0.99$. We do this by taking the two most common products of the elements among its children's suspected labels and by a Chernoff bound it is easy to see that this procedure succeeds with probabiity at least $0.99$. Furthermore because the subtrees of each vertex at depth $d'-1$ are disjoint, the probability our guess is correct is independent. Now we can continue this process until we reach the root. This type of recursive reconstruction arguments are by now standard, see e.g. {\em Mossel, Mossel-Peres}

Next we will give an alternative procedure for sampling from the generalized broadcast tree model. This will allow us to embed the word problem for $A_5$ equivalently as the problem of guessing the label of the root. 

%Note that for any $d'$, if we can determine the labels of vertices at depth $d'$ with $99\%$ accuracy, then for any  vertex at depth $d'-1$, we can determine its label with at least $99\%$ accuracy by taking the two most common products of the elements in its children's suspected labels. As such, it is possible to recover the root's label from the leaves' communities with an accuracy of at least $99\%$. In order to encode a product of elements in $A_5$ as an instance of the root reconstruction problem, we will use the following algorithm to assign every vertex in the tree a label that is a function of variables $\sigma_1,\cdots,\sigma_{2^{d+1}}$.

\begin{algorithm}
productTreeConstructionAlgorithm(d):
\begin{enumerate}
\item Set $\overline{X}^{(0)}=(\sigma_1\cdot \sigma_2\cdot\cdots\cdot \sigma_{2^d},$  $\sigma_{2^d+1}\cdot \sigma_{2^d+2}\cdot\cdots\cdot \sigma_{2^{d+1}})$.

\item For $d' = 1$ to $d$
\begin{enumerate}
\item For each $i \in L_{d'}$:
\begin{enumerate}
\item There will exist a constant $1\le j\le 2^{d+1}$ and constants $b,b',b''\in A_5$ such that $$\overline{X}^{(d'-1)}_{\parent(i)}=(b'\cdot \sigma_j\cdot\cdots\cdot \sigma_{j+2^{d-d'+1}-1}\cdot b,\mbox{ }b^{-1}\cdot \sigma_{j+2^{d-d'+1}}\cdot\cdots\cdot \sigma_{j+2^{d-d'+2}-1}\cdot b'')$$

\item Randomly select $b'''\in A_5$.

\item With probability $2/3$, set $$\overline{X}^{(d')}_i=(b'\cdot \sigma_j\cdot\cdots\cdot \sigma_{j+2^{d-d'}-1}\cdot b''',\mbox{ }(b''')^{-1}\cdot \sigma_{j+2^{d-d'}}\cdot\cdots\cdot \sigma_{j+2^{d-d'+1}-1}\cdot b)$$ Otherwise, set $$\overline{X}^{(d')}_i=(b^{-1}\cdot \sigma_{j+2^{d-d'+1}}\cdot\cdots\cdot \sigma_{j+3\cdot 2^{d-d'}-1}\cdot b''',\mbox{ }(b''')^{-1}\cdot \sigma_{j+3\cdot 2^{d-d'}}\cdot\cdots\cdot \sigma_{\cdot 2^{d-d'+2}-1}\cdot b'')$$
\end{enumerate}
\end{enumerate}

\item Return $\overline{X}^{(d)}$.
\end{enumerate}
\end{algorithm}

In step 2.a.i we asserted that every element of $\overline{X}^{(d')}$ will have the form
$$(b'\cdot \sigma_j\cdot\cdots\cdot \sigma_{j+2^{d-d'}-1}\cdot b,\mbox{ }b^{-1}\cdot \sigma_{j+2^{d-d'}}\cdot\cdots\cdot \sigma_{j+2^{d-d'+1}-1}\cdot b'')$$
It is easy to see that this is true for $\overline{X}^{(0)}$ and throughout the process, $\overline{X}^{(d'-1)}$ will always be set to an expression of this form. The key fact is:

\begin{lemma}\label{lem:gen}
Let $\sigma\in A_5^{2^{d+1}}$ and $x_0=\left(\prod_{i=1}^{2^d} \sigma_i,\prod_{i=2^d+1}^{2^{d+1}} \sigma_i\right)$. %Let $\overline{X}^{(d)}$ be the output of $productTreeConstructionAlgorithm(d)$. 
Then for every $x\in (A_5^2)^n$,
\[\mathbb{P}\left[X^{(d)}=x\middle|X^{(0)}=x_0\right]=\mathbb{P}\left[\overline{X}^{(d)}(\sigma)=x\right]\]
\end{lemma}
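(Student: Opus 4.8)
The plan is to prove the lemma by induction on $d'$, showing that the randomized procedure \texttt{productTreeConstructionAlgorithm} realizes exactly the generalized broadcast process started from the fixed root label $x_0$. First I would nail down the structural invariant asserted in step 2.a.i: by induction, every $\overline{X}^{(d')}_i$ is a pair $(b'\cdot\sigma_j\cdots\sigma_{j+2^{d-d'}-1}\cdot b,\ b^{-1}\cdot\sigma_{j+2^{d-d'}}\cdots\sigma_{j+2^{d-d'+1}-1}\cdot b'')$ for some contiguous block $\{j,\dots,j+2^{d-d'+1}-1\}$ and elements $b,b',b''\in A_5$ determined by the choices made at the strict ancestors of $i$. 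The base case is immediate from the definition of $\overline{X}^{(0)}$ (take $j=1$ and $b=b'=b''=1$), and each of the two branches in step 2.a.iii manifestly produces an expression of the level-$d'$ form with the associated block halved. Note that, although the tree is $k$-ary with $k=60000$, the block attached to a node shrinks by a factor of two at each level regardless of $k$, and distinct children may be attached to the same block; the argument is purely node-by-node, so the value of $k$ is irrelevant.

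The heart of the argument is to show that, conditioned on the parent's label $\overline{X}^{(d'-1)}_{\parent(i)}=(\sigma,\sigma')$, the label $\overline{X}^{(d')}_i$ has exactly the distribution prescribed by the generalized broadcast transition \--- namely $(b,b^{-1}\sigma)$ with probability $2/3$ and $(b,b^{-1}\sigma')$ with probability $1/3$ for a uniform $b\in A_5$ \--- and, crucially, that this conditional law depends only on the value $(\sigma,\sigma')$ and not on the internal bookkeeping $(j,b,b',b'')$. Two observations do this. First, a telescoping cancellation shows that the product of the two coordinates of $\overline{X}^{(d')}_i$ equals $\sigma$ in the probability-$2/3$ branch and $\sigma'$ in the probability-$1/3$ branch; equivalently, the second coordinate is $(\text{first coordinate})^{-1}$ times the selected parent coordinate. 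Second, the first coordinate of $\overline{X}^{(d')}_i$ has the form $(\text{element fixed by the ancestors})\cdot b'''$, and since $b'''$ is drawn uniformly and independently of everything at or above $\parent(i)$, right-multiplication by $b'''$ makes the first coordinate uniform on $A_5$ and independent of both the branch choice and the parent's label. Putting these together, the conditional law of $\overline{X}^{(d')}_i$ is precisely $(b,\ b^{-1}\cdot(\sigma\text{ w.p. }2/3,\ \sigma'\text{ w.p. }1/3))$ with $b$ uniform, matching the transition and forgetting the internal representation.

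Finally I would assemble these into the distributional identity. Since each node selects its own $b'''$ and its own branch independently, the level-$d'$ labels are conditionally independent across nodes given the level-$(d'-1)$ labels, exactly as for the Markov broadcast process. The broadcast process conditioned on $X^{(0)}=x_0$ is itself a top-down Markov chain driven by the generalized transition and started at $x_0$; since $\overline{X}^{(0)}=x_0$ and, by the previous paragraph, each downward step of the algorithm applies precisely this transition node-by-node and independently, an induction on $d'$ yields that the joint law of $\overline{X}^{(d')}$ equals the law of $X^{(d')}\mid X^{(0)}=x_0$ for all $d'$; taking $d'=d$ gives the lemma. I expect the main obstacle to be the middle step: one must verify both that the conditional transition matches \emph{and} that the auxiliary data $(j,b,b',b'')$ genuinely washes out, so that the induced process is Markov in the label alone rather than in the hidden representation. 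The telescoping identity together with the independence and uniformity of $b'''$ are the two facts that make this go through.
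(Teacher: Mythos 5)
Your proposal is correct and follows essentially the same route as the paper's proof: induction on the level, verifying that step 2.a.iii assigns each child a pair whose product equals the parent's first (resp.\ second) coordinate with probability $2/3$ (resp.\ $1/3$), that the pair is uniform subject to this constraint (via the fresh uniform $b'''$), and that children are conditionally independent given their parents. You simply spell out the telescoping and uniformity checks that the paper leaves as ``easy to check.''
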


Thus $productTreeConstructionAlgorithm(d)$ is an equivalent way to sample from the generalized broadcast tree model that we defined earlier. 

\begin{proof}
We will prove by induction on $d'$ that the distribution of $X^{(d')}$ given $X^{(0)}=x_0$ is identical to the distribution of $\overline{X}^{(d')}(\sigma)$ for every $d'$. If $d'=0$, then $\overline{X}^{(d')}=x_0$, so the base case holds. Now, assume that it holds for $d'-1$. 

It is easy to check that the way we have defined step 2.a.iii every 
vertex at depth $d'$ is assigned a label whose product is equal to the first permutation in its parent's label with probability $2/3$ and the second permutation in its parent's label with probability $1/3$. Moreover the pair of permutations is chosen uniformly at random subject to this constraint. Finally each element of $\overline{X}^{(d')}$ is independent conditioned on the value of its parent. These are exactly the key properties that defined our generalized broadcast tree model, and hence completes the proof. %With probability $2/3$, it is assigned a value that multiplies to the first element of its parents value and with probability $1/3$ it is assigned a value that multiplies to the second element of its parent's value. 
%Furthermore, for any fixed value of $\sigma$, each element of $\overline{X}^{(d')}$ is equally likely to evaluate to any element of $A_5^2$ with the desired product, even conditioned on the full value of $X^{(d'-1)}$. So, the probability distribution of $\overline{X}^{(d')}(\sigma)$ is exactly the same as the probability distribution of $X^{(d')}$ given $X^{(0)}=x_0$. The claim follows by induction, and the $d'=d$ case gives us the desired conclusion.
\end{proof}

%So, any algorithm that can recover $X^{(0)}$ from $X^{(d)}$ can be used in conjunction with this algorithm to multiply lists of elements in $A_5$. In particular, we have the following.

Now we are ready to prove that any algorithm for solving the detection problem for our generalized broadcast tree model can be used to solve the word problem over $A_5$ with some advantage over random guessing:

\begin{theorem}
Let $g_d: (A_5^2)^{k^d}\rightarrow A_5$ be a family of functions. Suppose there exists $\epsilon > 0$ independent of $d$ such that  $$\mathbb{P}[g_d(X^{(d)})=X^{(0)}]\ge \frac{1}{|A_5|^2} +\epsilon$$ If $\mathbf{TC}^0\neq\mathbf{NC}^1$ then $g$ is not in $\mathbf{TC}^0$.
\end{theorem}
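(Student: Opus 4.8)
The plan is to turn a $\mathbf{TC}^0$ detection circuit $g_d$ into a $\mathbf{TC}^0$ circuit that computes the product of uniformly random elements of the group $G=A_5^2$, and then to invoke the worst-case-to-average-case reduction of the previous theorem, which is group-agnostic and applies to $G$ because $G$ is nonsolvable (in fact perfect, since $A_5$ is). The observation that makes $G$ the right group is that multiplication in $A_5^2$ is componentwise: if we rename the driving variables of the product tree construction by setting $\sigma_i=a_i$ and $\sigma_{2^d+i}=b_i$ for $1\le i\le 2^d$, then the root $\overline{X}^{(0)}=(\prod_i a_i,\ \prod_i b_i)$ is exactly the single product $\prod_{i=1}^{2^d}(a_i,b_i)$ in $G$. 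Thus detecting the root is the same task as computing a length-$2^d$ product in $G$, and the baseline $1/|A_5|^2=1/|G|$ in the statement is precisely the random-guessing probability for that product.

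By Lemma~\ref{lem:gen}, running the product tree construction on uniform independent inputs $(a_i,b_i)$ and feeding the resulting leaves $\overline{X}^{(d)}(\sigma)$ into $g_d$ reproduces exactly the detection experiment, so $g_d(\overline{X}^{(d)})$ equals $\prod_i(a_i,b_i)$ with probability at least $1/|G|+\epsilon$. The substantive step is to check that the whole map from the inputs to $g_d(\overline{X}^{(d)})$ lies in $\mathbf{TC}^0$. Here I would use that the segment of $\sigma$'s carried by a node halves at every level, so at the leaves each label has the form $(b'\,\sigma_j\,b''',\ (b''')^{-1}\sigma_{j+1}\,b)$ and depends on only two of the driving variables together with a constant number of the auxiliary elements $b,b',b'''$, each a single (possibly inverted) random group element inherited along the root-to-leaf path. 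I would fix all internal randomness of the construction in advance: the $2/3$-versus-$1/3$ choices, which fix the routing of each $\sigma_j$ to its leaves, and the auxiliary elements $b'''$. With this randomness fixed, each leaf label becomes an explicit function of the two inputs it reads, obtained by a constant number of $A_5$ multiplications by constants, hence an $\mathbf{NC}^0$ function; composing this layer with the $\mathbf{TC}^0$ circuit $g_d$ stays in $\mathbf{TC}^0$. Averaging over the fixed randomness preserves the overall success probability, so some fixing yields a deterministic $\mathbf{TC}^0$ circuit $f$ with $\mathbb{P}[f=\prod_i(a_i,b_i)]\ge 1/|G|+\epsilon$ over uniform inputs.

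It then remains to run the previous theorem verbatim with $G$ in place of $A_5$. Nonsolvability of $G$ lets Barrington's theorem express any $\mathbf{NC}^1$-complete $h_n$, for a uniformly chosen target $c\in G\setminus\{1\}$, as a word over $G$ whose product is $c$ or $1$ according to $h_n(x)$; conjugating the factors by fresh uniform $b_i\in G$ makes them uniform and independent, so $f$ returns the product with advantage $\epsilon$ over $1/|G|$, while the independence of $c$ from $f$'s view bounds the probability of being misled by an incorrect product by $1/|G|$. Repeating polynomially often and taking a majority vote computes $h_n$ in $\mathbf{TC}^0$, forcing $\mathbf{TC}^0=\mathbf{NC}^1$ and contradicting the hypothesis.

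I expect the main obstacle to be the middle paragraph: rigorously certifying that the product tree construction localizes every leaf to a constant number of group elements, and that the routing induced by the fixed $2/3$-$1/3$ choices together with the telescoping of the auxiliary $b$'s can be hard-wired into a constant-depth circuit. Once that localization is established, the componentwise-product identity and the repetition of the earlier reduction for the larger nonsolvable group are routine.
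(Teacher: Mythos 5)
Your proposal is correct in outline and agrees with the paper on the first two steps, but it takes a genuinely different route at the crucial last step. Like the paper, you use Lemma~\ref{lem:gen} to re-express the detection experiment as computing the root from the output of the product-tree construction, and you argue that the composed map is in $\mathbf{TC}^0$; your worry about localization resolves exactly as you expect, since the three boundary elements $b,b',b''$ of any node's label are each a single ancestor's $b'''$ (possibly inverted) rather than a product, so after fixing the construction's internal randomness each leaf is an $\mathbf{NC}^0$ function of two driving variables and an averaging argument fixes the randomness without losing the $1/|A_5|^2+\epsilon$ advantage. Where you diverge is in converting the pair-valued output into a word-problem instance. The paper stays inside $A_5$: it writes $g'_d=(g^{[1]}_d,g^{[2]}_d)$ and splits into cases \--- either $g^{[1]}_d$ alone computes the first product with probability at least $\sqrt{1/|A_5|^2+\epsilon}>1/60$ (then the second half of the inputs is fixed), or, conditioned on $g^{[1]}_d$ being correct, $g^{[2]}_d$ does, and a rejection-sampling step (plant a known first product and resample the $B_i$ until $g^{[1]}_d$ confirms it) extracts a single-$A_5$-product solver \--- so that the previously proved $A_5$ theorem applies as a black box. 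You instead observe that the pair of products is one product in $G=A_5\times A_5$ and rerun the worst-case-to-average-case reduction over $G$; this is cleaner, avoiding both the case split and the rejection-sampling gadget, and it even matches the $1/|A_5|^2$ baseline more directly. The price is that the reduction is not quite ``verbatim'': you must justify Barrington's theorem over $G$ for an arbitrary non-identity target $c=(c_1,c_2)$, which does not follow from nonsolvability alone for general groups. It does hold here \--- run an $A_5$-program with target $c_1$ in the first coordinate and one with target $c_2$ in the second and concatenate; equivalently, the set of programmable targets is a normal subgroup of $A_5\times A_5$ meeting both factors nontrivially, hence all of $G$ \--- but this deserves an explicit sentence. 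With that addition (and inheriting the same small-$\epsilon$ bookkeeping in the amplification step that the paper's own reduction uses), your argument is a legitimate and arguably simpler alternative.
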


\begin{proof}
For the sake of contradiction we will assume that $g\in \mathbf{TC}^0$. Let $\Sigma_1,\cdots,\Sigma_{2^{d+1}}$ be chosen randomly. We can interpret $productTreeConstructionAlgorithm(d)$ as outputting a random formula that labels the leaves of the generalized broadcast tree model. The key point is both the depth of the tree and the number of bits of randomness that determine the value at any leaf are both logarithmic. Thus $X^{(d)}$ can be computed by a $\mathbf{TC}^0$ circuit. Now let $g'_d$ be the composition of $g_d$ and $productTreeConstructionAlgorithm(d)$. 

Because $g_d$ solves the detection problem we have that
\[\mathbb{P}\left[g'_d(\Sigma_1,\cdots,\Sigma_{2^{d+1}})=\left(\prod_{i=1}^{2^d} \Sigma_i,\prod_{i=2^d+1}^{2^{d+1}} \Sigma_i\right)\right]\ge \frac{1}{|A_5|^2}+\epsilon\]
where the randomness is over both the choice of the $\Sigma_i$'s and $g'$ which depends on the generation process. For the sake of simplifying the notation, let $g'_d(\sigma)=(g^{[1]}_d(\sigma),g^{[2]}_d(\sigma))$. Now there are two cases:

In the first case suppose that $g^{[1]}_d$ gets nontrivial advantage over random guessing. In particular suppose
$$\mathbb{P}\left[g^{[1]}_d(\Sigma_1,\cdots,\Sigma_{2^{d+1}})=\prod_{i=1}^{2^d} \Sigma_i\right]\ge\sqrt{\frac{1}{|A_5|^2}+\epsilon}$$
There must exist a specific choice $\Sigma_{2^d + 1} = \sigma_{2^d +1}, \cdots, \Sigma_{2^{d+1}} = \sigma_{2^{d+1}}$ and setting of the randomness in the generation process which achieves nontrivial advantage over random guessing. Even when we fix these values, the function is still in $\mathbf{TC}^0$ and hence we conclude $\mathbf{TC}^0=\mathbf{NC}^1$.

In the second case, we must have
$$\mathbb{P}\left[g^{[2]}_d(\Sigma_1,\cdots,\Sigma_{2^{d+1}})=\prod_{i=2^d+1}^{2^{d+1}} \Sigma_i\middle|g^{[1]}_d(\Sigma_1,\cdots,\Sigma_{2^{d+1}})=\prod_{i=1}^{2^d} \Sigma_i\right]\ge\sqrt{\frac{1}{|A_5|^2}+\epsilon}$$
The idea is we want to use $g^{[2]}_d$ to solve an $\mathbf{NC}^1$ hard problem, but to do so using the above inequality we need to decide if the output of $g^{[1]}_d$ is correct. Now we can once again use an average-case reduction to reduce to the case when we know the product of the inputs to $g^{[1]}_d$ and thus check its own output. 

In particular for any $\sigma_1,\cdots,\sigma_{2^{d+1}}\in A_5$ and randomly generated $B_1,\cdots,B_{2^{d+1}}$, let
$$\Sigma'=(\sigma_1 B_1,B_1^{-1}\sigma_2 B_2, B_2^{-1} \sigma_3 B_3,\cdots,B_{2^{d+1}-1}^{-1}\sigma_{2^{d+1}}B_{2^{d+1}})$$
The distribution of $\Sigma'$ is uniform on $A_5^{2^{d+1}}$ so we have
$$\mathbb{P}\left[g^{[2]}_d(\Sigma')=B_{2^d}^{-1}\left(\prod_{i=2^d+1}^{2^{d+1}} \sigma_i\right)B_{2^{d+1}}\middle|g^{[1]}_d(\Sigma')=\left(\prod_{i=1}^{2^d} \sigma_i\right)B_{2^d}\right]\ge\sqrt{\frac{1}{|A_5|^2}+\epsilon}$$
Now we can choose $\sigma_1,\cdots,\sigma_{2^d}$ such that we already know their product and we can repeatedly generate $B_1,\cdots,B_{2^{d+1}}$ until we find one for which $$g^{[1]}_d(\Sigma')=\left(\prod_{i=1}^{2^d} \sigma_i\right)B_{2^d}$$
Now if we guess that $\prod_{i=2^d+1}^{2^{d+1}}\sigma_i$ is equal to $B_{2^d} g^{[2]}_d(\Sigma') B_{2^{d+1}}^{-1}$ we will get nontrivial advantage over random guessing. As before there must be some choice of the randomness (in this case the values of $B_1, \cdots, B_{2^{d+1}}$ and the randomness in the generation process) where the probability of computing the product is at least average. This again implies that $\mathbf{TC}^0=\mathbf{NC}^1$.
\end{proof}

So, this is a set of parameters for which one can determine the root's label from the leaves' labels with very high accuracy in the average case. However, unless $\mathbf{TC}^0$=$\mathbf{NC}^1$, there is no $\mathbf{TC}^0$ algorithm that can determine the root's label with an accuracy that is nontrivially higher than that attained by guessing blindly. With some more work, we could prove that this also holds for sufficiently slight perturbations of these parameters. In Appendix~\ref{app:labelred} we show how to reduce the number of labels to $16$ by using symmetry arguments and working with conjugacy classes of permutations instead. 

\section{Difficulty of generation} \label{sec:generation}

In this paper, we are mostly concerned with depth lower (and upper) bounds for estimating $X^{(0)}$ given $X^{(d)}$. However, we also study the generation problem, i.e., the complexity of generating $X^{(d)}$ given a sequence of random bits as an input. More formally: 

\begin{definition}
We say that a series of functions $f_d:\{0,1\}^{m(d)} \to \{0,1\}^{L_d}$ are {\em generation} functions if under the uniform distribution over the inputs, it holds that $f_d(x)$ has the distribution $X^{(d)}$ for all $d$.
%if $(X^{(0)},\cdots,X^{(d)})\sim TBM_d(m,k,M)$ then
%\[
%\mathbb{P}[f(X^{(d)})=X^{(0)}]\ge 1/2+\delta
%\]
We call such functions $(\delta_d)_{d=1}^{\infty}$-approximate-generation functions if the total variation distance between the distribution of $f_d(x)$ and $X^{(d)}$ is bounded by $\delta_d$ for all $d$
\end{definition}

Despite the fact that the tree has logarithmic depth, it turns out that generation can be accomplished in $\mathbf{AC}^0$ easily.

%Before we discuss the computational complexity of determining $X^{(0)}$ from $X^{(d)}$, we will consider the complexity of generating a string of bits with the appropriate probability distribution for $X^{(d)}$ from $X^{(0)}$ and some independent, random bits. {\color{blue} For any given $i$, $X^{(d)}_i$ is completely determined by $X^{(0)}$ and the behavior of the $d$ edges leading from $v^{(0)}_1$ to $v^{(d)}_i$. As such, we can brute force the task of determining $X^{(0)}_i$ from those pieces of information with an $\mathbf{AC}^0$ circuit.} More formally, we have the following.

\begin{theorem}\label{thm:genmain}
If $\theta$ is a dyadic number: $\theta = a/2^b$ for some integers $a$ and $b$, then there exists generation functions in $\mathbf{AC}^0$.
Moreover, for all $\theta$, and any constant $c > 0$, there exists $2^{-n^c}$--approximate-generation functions in $\mathbf{AC}^0$. 
\end{theorem}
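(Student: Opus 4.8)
The plan is to exploit the alternative description of the broadcast process given immediately after its definition: the root gets a uniform bit, and each non-root vertex copies its parent with probability $\theta$ and otherwise takes an independent uniform bit. I would make this explicit by introducing, for every non-root vertex $v$, an independent \emph{copy indicator} $c_v\in\{0,1\}$ with $\mathbb{P}[c_v=1]=\theta$, and for every vertex $v$ (including the root) an independent uniform \emph{fresh bit} $r_v$. Then $\sigma_\rho=r_\rho$, and $\sigma_v=\sigma_{\parent(v)}$ when $c_v=1$ while $\sigma_v=r_v$ when $c_v=0$. Unwinding this, the label of any vertex $v$ equals $r_{w(v)}$, where $w(v)$ is the \emph{source} of $v$: the nearest ancestor of $v$ (inclusive) whose incoming edge is fresh, or the root if every edge on the root-path is a copy. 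This is exactly the broadcast law, so it suffices to realise the map $v\mapsto r_{w(v)}$ (for leaves $v$) by an $\mathbf{AC}^0$ circuit fed with the $c$'s and $r$'s.

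For the exact dyadic case I would first sample each $c_v$: writing $\theta=a/2^b$ with $b=O(1)$, I read $b$ fresh input bits, interpret them as an integer in $\{0,\dots,2^b-1\}$, and set $c_v=1$ iff this integer is less than $a$; comparison with the fixed constant $a$ on $O(1)$ bits is a constant-size circuit. The crux is then computing $\sigma_v=r_{w(v)}$ in constant depth. Let $v=u_0,u_1=\parent(v),\dots,u_d=\rho$ be the root-path, of length $d=\log_k n$. Exactly one index $j$ is the source, so I can write $\sigma_v=\bigvee_{j=0}^{d}(S_j\wedge r_{u_j})$, where $S_j=\overline{c_{u_j}}\wedge\bigwedge_{i<j}c_{u_i}$ for $j<d$ and $S_d=\bigwedge_{i<d}c_{u_i}$ encodes ``the source is $u_j$''. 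Each $S_j$ is an AND of at most $d+1=O(\log n)$ literals and the outer OR has $d+1$ terms, so this is a depth-two formula; crucially, although the fan-ins are $\Theta(\log n)$, unbounded fan-in makes this harmless, and there is no priority-encoder depth blowup. Summed over the $n$ leaves this uses $O(n\log^2 n)$ gates. (For $\theta<0$ I instead make $c_v$ a ``flip'' indicator with probability $|\theta|$; the only change is that $\sigma_v$ picks up the parity of the flip-chain length $j$, which is constant within each term $S_j$, so I simply replace $r_{u_j}$ by $\overline{r_{u_j}}$ for odd $j$.)

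For the ``moreover'' part I keep the same circuit but replace the exact coin by an approximate one: using $m$ fresh bits I set $c_v=1$ iff their integer value is less than $\lfloor\theta\,2^m\rfloor$, which yields $c_v\sim\mathrm{Bernoulli}(\theta')$ with $|\theta-\theta'|\le 2^{-m}$; comparison with a fixed $m$-bit constant is still in $\mathbf{AC}^0$, of size $\mathrm{poly}(m)$. Since the leaf labels are a common deterministic function of the coins and fresh bits under both the true parameter $\theta$ and the approximation $\theta'$, the total variation distance between the generated distribution and $X^{(d)}$ is at most the total variation distance between the two coin vectors, which by subadditivity over the $N=O(n)$ independent coordinates is at most $N\cdot|\theta-\theta'|\le N\cdot 2^{-m}$. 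Choosing $m=\Theta(n^c+\log n)=\mathrm{poly}(n)$ drives this below $2^{-n^c}$, while keeping the whole circuit polynomial size and constant depth.

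The main obstacle is conceptual rather than computational: recognising that ``find the nearest fresh ancestor and copy its bit'', which superficially looks like a sequential prefix task of logarithmic length, is in fact a shallow DNF, because the source position can be guessed and each guess verified by a single AND of path-literals — and that logarithmic fan-in costs nothing in $\mathbf{AC}^0$. The remaining points are routine: checking that integer comparison against a hardwired constant lies in $\mathbf{AC}^0$ at polynomial precision, and that the total variation error tensorises so that a union bound over the $O(n)$ biased coins suffices.
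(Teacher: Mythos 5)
Your proof is correct, but it takes a genuinely different route from the paper's. The paper uses the keep/flip decomposition: each edge independently carries a bit $Y_w$ that is $1$ with probability $(1+\theta)/2$, and the leaf label is the root bit combined with the parity of the flips along the root-to-leaf path. Since that parity involves $d+1 = O(\log n)$ bits, the paper computes it in constant depth by brute force, enumerating all $2^{d+1}\le 2n$ assignments to the path bits with one AND gate each and ORing the accepting ones \emdash{} $O(n)$ gates per leaf. You instead use the copy/fresh decomposition, under which the leaf label is not a parity but a \emph{selection}: it equals the fresh bit $r_{w(v)}$ of the nearest freshly-resampled ancestor, and the source position can be guessed and verified by a single AND of path literals, giving a depth-two formula of size $O(\log^2 n)$ per leaf with no enumeration at all. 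Your observation that "find the nearest fresh ancestor" is a shallow DNF rather than a sequential scan is the real content here, and it buys a smaller circuit; your handling of $\theta<0$ by switching to a flip indicator of probability $|\theta|$ and hardwiring the parity of the chain length into each term is a detail the paper glosses over. The approximate case is handled the same way in both arguments (approximate the biased coin by thresholding $\mathrm{poly}(n)$ uniform bits against a hardwired constant, then bound total variation by a union bound over the $O(n)$ independent coins), though you make the tensorization of the TV error explicit where the paper leaves it implicit in its choice of $n^c+\lceil\log_2(2n)\rceil$ bits. Both proofs are complete and correct.
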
 

%Let $W_1,\cdots,W_{(kn-k)/(k-1)}$ be independent random variables such that for each $i$, $W_i$ is $1$ with probability $\epsilon$ and $0$ otherwise. %There exists a function $f$ that can be computed by an $\mathbf{AC}^0$ circuit such that if $X'=f(X^{(0)},W_1,W_2,\cdots,W_{(kn-k)/(k-1)})$ then for all $x^0\in %\{0,1\}$ and $x\in\{0,1\}^n$,
%\[\mathbb{P}[X^{(0)}=x^0,X^{(d)}=x]=\mathbb{P}[X^{(0)}=x^0,X'=x]\]
%\end{theorem}

\begin{proof}
Assume first that $\theta$ and therefore $(\theta \pm 1)/2$ are dyadic. 
This means that there exists a function 
$g : \{0,1\}^s \to \{0,1\}$ of a bounded number of bits such that $\mathbb{P}[g = 1] = (\theta+1)/2$. 
We apply a copy of $g$ independently for each vertex of the tree thus obtaining a collection of independent random variables $(Y_v)$. 
%First, associate $v_i^{(d')}$ with $W_{i+k^{d'-1}+k^{d'-2}+\cdots+k}$ for all $0<d'\le d$ and $1\le i\le k^{d'}$. This associates each non root vertex with a different one of the $W_i$, and $\mathbb{P}[X_i^{(d')}\ne X_{\lceil i/k\rceil}^{(d'-1)}]=\mathbb{P}[W_{i+k^{d'-1}+\cdots+k}=1]$. 
So, if we set $Y^{\prime (0)}$ to be a uniformly random bit and then define $X^{\prime}_v = \prod_{w \in \path(\rho,v)} Y_w$,
 then the probability distribution of $X^\prime$ is identical to the probability distribution of $X$. Furthermore, for each $v$, there are at most $d$ elements of $Y$ that effect the value of $X^{\prime}_v$. That means that there are only $2^{d+1}\le 2n$ possible values of $X^{(0)}$ and the elements of $Y$ that effect $X^{\prime(d)}_i$. As such, we only need $O(n)$ gates to have an AND for every possible combination of values of $X^{(0)}$ and these $Y_v$, at which point we can OR together all of the ones for values that result in $X^{\prime}_v = 1$. Doing this for every $v$ merely multiplies the number of gates by $n$, and this clearly has constant depth. This proves the first part of the theorem. 
 
 The second part of the theorem is similar, except we now approximate coin tosses of bias $(\theta+1)/2$. It is easy to see that an approximation to error $2^{-n^c}$ is achievable in $\mathbf{AC}^0$ in constant depth and size polynomial in $n$. 
 This is done by generating a polynomial number of unbiased bits $Z_1,\ldots,Z_{n^c+\lceil \log_2(2n)\rceil}$  and considering them as the binary expansion of a number in $[0,1]$. We then declare the bias-coin toss to be $1$ if the resulting number is bigger than $(1+\theta)/2$ and $0$ otherwise. The threshold computation 
 $\sum Z_i > (1+\theta)/2$ can be carried out by an OR of AND gates. 
 %Thus, we can define $f$ such that $f(X^{(0)},W_1,\cdots,W_{(nk-k)/(k-1)})=X^{\prime (d)}$ in order to satisfy the conditions of this theorem.
\end{proof}

%\fbox{\parbox{16cm}{I added a $\lceil \log_2(2n)\rceil$ term to the number of coins used to simulate each biased coin in order to ensure that all of them put together were off by at most $2^{-n^c}$. Incidentally, is there a term for functions that can be computed by a circuit with constant fanin and depth $O(\log\log n)$, because this computation can be performed by one (except for simulating the biased coin flips less accurately).}}

\begin{remark}
If we consider a computational model where the inputs have bias $\theta$ instead of $1/2$, then the proof above provides generation functions in $\mathbf{AC}^0$.
\end{remark}

%\begin{remark}
%If we require that each of the $W_i$ is $1$ with probability $1/2$, then that forces $\mathbb{P}[X^{(0)}=x^0,X'=x]$ to be an integer multiple of a power of $2$. For most values of $\epsilon$, that would ensure that the probability distribution of $f(X^{(0)},W_1,W_2,\cdots,W_m)$ will not exactly match the %probability distribution of $X^{(m)}$. However, it is possible to determine which of two numbers is larger with an $\mathbf{AC}^0$ circuit. Also, %$\mathbb{P}[W_{i}/2+W_{i+1}/4+\cdotsW_{i+m'-1}/2^{m'}<\epsilon]=\lceil 2^{m'} \epsilon\rceil /2^{m'}$ for all $i$ and $m'$. That means that we can use an $\mathbf{AC}^0$ %circuit to compute $W'_1,\cdots,W'_{(nk-k)/(k-1)}$ that are each independently set to $1$ with a probability within $2^{-m'}$ of $\epsilon$ from $W$. %Then, we can plug $X^{(0)}$ and $W'$ into $f$. So, for any constant $c$, there exists an $\mathbf{AC}^0$ circuit that takes $X^{(0)}$ and a list of boolean %variables that are independently set to $1$ with probability $1/2$ as input, and has an output probability distribution that is within $2^{-n^c}$ %of the probability distribution of $X^{(d)}$.
%\end{remark}

Now that we have established that $\mathbf{AC}^0$ circuits are capable of drawing strings from the correct probability distribution, the logical next question is whether or not $\mathbf{NC}^0$ circuits can do the same. As it turns out, they generally cannot. The key issue is that each bit output by an $\mathbf{NC}^0$ circuit is affected by a constant number of input bits. 

\begin{theorem}\label{thm:nc0gen}
Let $f_n:\{0,1\}^{m_n}\to\{0,1\}^{L_d}$ be a series of functions that can be computed by an $\mathbf{NC}^0$ circuit. Also, let $W_1,\cdots,W_{m_n}$ be independently generated random variables and $X'=f_n(W)$. If $0<\theta<1$ then
\[\sum_{x\in\{0,1\}^{L_d}} \min\left(\mathbb{P}[X^{(d)}=x],\mathbb{P}[X'=x]\right)=O\left(e^{-\sqrt{n}}\right)\]
\end{theorem}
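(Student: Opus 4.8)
The plan is to bound the overlap by exhibiting a statistical test that the two distributions pass with very different probabilities. Since $\sum_x \min(\mathbb{P}[X^{(d)}=x],\mathbb{P}[X'=x]) = 1 - d_{TV}(X^{(d)},X')$ and, for any event $E$, $\sum_x\min(\cdot,\cdot)\le \mathbb{P}[X^{(d)}\in E^c]+\mathbb{P}[X'\in E]$, it suffices to produce an event $E$ (allowed to depend on the circuit) that the genuine broadcast distribution satisfies with probability $1-O(e^{-\sqrt n})$ while every output distribution of an $\mathbf{NC}^0$ circuit satisfies with probability $O(e^{-\sqrt n})$. The single engine I would use throughout is the defining locality property: each output $X'_v$ is a function of at most $c$ of the independent inputs, so if we write $S_v$ for the set of inputs feeding $X'_v$, then any two collections of leaves with disjoint dependency sets are \emph{exactly} independent under $X'$.

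First I would isolate a clean, testable feature of the true distribution. Writing each edge transmission as an independent flip, $\sigma_v=\sigma_\rho\oplus\bigoplus_{e\in\path(\rho,v)}F_e$ with $\mathbb{P}[F_e=1]=(1-\theta)/2$, one sees that for sibling leaves $u,v$ with common parent $w$ the difference satisfies $\sigma_u\oplus\sigma_v=F_{(w,u)}\oplus F_{(w,v)}$; since these flip variables feed no other leaf, $\sigma_u\oplus\sigma_v$ is a biased coin that is independent of every other leaf label, and every pair of leaves has strictly positive correlation whenever $0<\theta<1$. By the locality engine, for $X'$ to reproduce even these pairwise correlations, every correlated pair of leaves must satisfy $S_u\cap S_v\neq\emptyset$. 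Since the total number of leaf--input incidences is at most $cn$, a pigeonhole count forces the shared-input graph on the leaves to be far from sparse, which is where I expect the $\sqrt n$ threshold to originate: at most $O(\sqrt n)$ inputs can be shared by $\sqrt n$ or more leaves, so one can greedily extract a collection of $\Theta(\sqrt n)$ leaves (or gadgets) avoiding these ``popular'' inputs and hence pairwise input-disjoint.

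The heart of the argument, and the step I expect to be the main obstacle, is turning this rigidity into an exponentially small overlap. The subtlety — the very reason the exact sampler of Theorem~\ref{thm:genmain} must have depth $\Theta(\log n)$ — is that a single family's marginal is \emph{not} an obstruction: a constant-locality circuit can reproduce the joint law of one parent's children exactly by giving them a shared input and private noise bits of bias $(1\pm\theta)/2$, and even the sibling-difference coins are independent of the rest in both models, so no statistic confined to a single gadget can separate the distributions. The separating invariant must therefore capture genuinely long-range, hierarchical (ultrametric) correlation: the $\Theta(\sqrt n)$ input-disjoint leaves extracted above are \emph{mutually independent} under $X'$, whereas under the broadcast law they carry strictly positive pairwise correlations and, jointly, a tightly concentrated behavior that independence forbids. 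My plan is to test a concentration statistic on these $\Theta(\sqrt n)$ coordinates whose value under mutual independence is bounded away from its value under the true correlation pattern. The delicate point — where I would spend the most care — is to choose the gadgets so that locality genuinely forces the statistic to factorize across all $\Theta(\sqrt n)$ of them, while the broadcast correlations keep it bounded away from the factorized value by a constant; reconciling the need for input-disjointness (to get independence and a product bound) with the need for shared structure (to witness the long-range order) is precisely the tension that makes this the crux.

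Finally, I would assemble the per-gadget discrepancies: because the $\Theta(\sqrt n)$ chosen coordinates are independent under $X'$ and the corresponding indicator events are independent and concentrated under the broadcast law, a Chernoff estimate over $\Theta(\sqrt n)$ independent trials yields an event $E$ with $\mathbb{P}[X^{(d)}\in E^c]$ and $\mathbb{P}[X'\in E]$ both $O(e^{-\sqrt n})$, giving the stated overlap. The approximate-generation variant then follows by the triangle inequality for total variation, since an additive $\delta_d$ error shifts the overlap by at most $\delta_d$.
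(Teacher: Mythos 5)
Your high-level strategy --- use constant locality to extract a large family of mutually independent output coordinates, then show that mutual independence is incompatible with the broadcast law and finish with a Chernoff bound --- is exactly the paper's strategy, but two of your steps fail as stated. First, the extraction step: you claim that since at most $O(\sqrt n)$ inputs feed $\sqrt n$ or more outputs, one can greedily find $\Theta(\sqrt n)$ leaves ``avoiding these popular inputs'' and hence pairwise input-disjoint. This is false: a single input may feed \emph{every} output, in which case no leaf avoids it and no two leaves are input-disjoint. The paper's fix is to \emph{condition} on the values of the small set $T$ of high-fan-out inputs rather than avoid them, obtaining a set $S$ of outputs that are independent conditioned on any assignment to $T$, and then to sum the conditional overlap bound over all $2^{|T|}$ assignments. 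Making the $2^{|T|}$ union-bound factor lose to the exponential decay is what forces the multi-scale threshold argument (the $2c$ levels $b_0\ge\cdots\ge b_{2c}$ with $b_i=e^{\ln(n)^{(2c-i)/2c}}$, using that total fan-out is at most $cn$ to find a gap scale $i$ with $|S|\ge n/(2cb_i)$ and $|T|\le cn/b_{i-1}$). Relatedly, your quantitative target is off: the theorem's $e^{-\sqrt n}$ does not come from $\sqrt n$ independent trials. The paper's overlap bound for a conditionally independent set $S$ is $2e^{-c_1|S|^{1+c_2}/n^{c_2}}$ with $c_2=1-4\log_k\theta>1$, which is vacuous unless $|S|\gg n^{c_2/(1+c_2)}\gg\sqrt n$; one genuinely needs $|S|=n^{1-o(1)}$, both to make the per-assignment bound nontrivial (the relevant pairwise correlations $\theta^{2(d-d')}$ shrink as the chosen leaves spread out) and to absorb the $2^{|T|}$ factor.

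Second, the separating statistic, which you explicitly defer as ``the crux,'' is left unconstructed, and it is the substantive half of the proof. The paper resolves it by a case split on the marginals of the independent coordinates $\{X'_i:i\in S\}$. If a constant fraction have $\mathbb{E}[X'_i]\ge 1/2+\delta$ (or $\le 1/2-\delta$) with $\delta=\theta^{d-d'}/4$ and $d'=\lfloor\log_k(|S|/6)\rfloor$, it passes to a subfamily $S''$ with distinct ancestors at level $d''$ and compares the empirical mean against the broadcast bound $\mathbb{E}[X^{(d)}_i\mid X^{(d'')}]\le 1/2+\delta/2$, applying Chernoff on both sides. Otherwise a constant fraction are nearly unbiased, and it pairs up leaves sharing an ancestor at depth $d'$: independence of unbiased coordinates caps $\mathbb{P}[X'_i=X'_{i'}]$ at $1/2+2\delta^2$ while the broadcast law forces $\mathbb{P}[X^{(d)}_i=X^{(d)}_{i'}]\ge 1/2+8\delta^2$, and again Chernoff over the disjoint pairs separates the distributions. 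Your proposal correctly identifies both ingredients (bias and pairwise agreement) and correctly observes that no single-gadget statistic can work, but without the case split and the explicit choice of the scale $d'$ tied to $|S|$, the argument does not close. Your closing remark about approximate generation is also not relevant to this theorem, which concerns exact $\mathbf{NC}^0$ samplers.
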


We defer the proof of this theorem to Appendix~\ref{app:nc0}. It turns out to be much easier to prove the simpler result that $\mathbf{NC}^0$ fails when it is given uniformly random bits as input, just because some pairs of bits in the output of the broadcast tree model have weak but non-zero correlations.

\bibliographystyle{plain}
\bibliography{all2,my}

\appendix

\section{Lower bounds against $\mathbf{NC}^0$}\label{app:nc0}

If the random bits are each set to $1$ with probability $1/2$, then this means that the probability that any pair of outputs take on any two values must be an integer multiple of $2^{-2c}$ where $c$ is the largest number of input bits affecting a single output bit. However, some of the elements of $X^{(d)}$ have correlations that are less than $2^{-2c}$, so this cannot get the probability distribution right. More formally, we have the following.

\begin{lemma}
Let $f:\{0,1\}^m\to\{0,1\}^n$ be a function that can be computed by an $\mathbf{NC}^0$ circuit,  $W_1,\cdots,W_m$ be independent random variables 
 that are set to $1$ with probability $1/2$ and $0$ with probability $1/2$, and $X'=f(W)$. If $0<\theta<1$ then
 \[
 d_{TV}(X',X^{(d)}) = \Omega(1).
 \]
%\[\sum_{x\in\{0,1\}^n} %\min\left(\mathbb{P}[X^{(d)}=x],\mathbb{P}[X'=x]\right)=1-\Omega(1)\]
\end{lemma}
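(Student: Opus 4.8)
The plan is to localize the discrepancy to a single pair of leaves: I will exhibit two leaves whose joint distribution under $X^{(d)}$ provably cannot be matched by any $\mathbf{NC}^0$ map from independent fair coins, and then quantify the mismatch. Throughout, let $c\ge 1$ denote the locality of the family, so that each output bit of $f_n$ depends on at most $c$ input bits. The first ingredient is a lattice (or ``grid'') constraint on the pairwise marginals of $X'$. For any two output coordinates $u,v$, the bit $X'_u$ is a function of a set $S_u$ of at most $c$ inputs and $X'_v$ of a set $S_v$ of at most $c$ inputs, so $(X'_u,X'_v)$ depends only on the $\le 2c$ fair coins $W_i$ with $i\in S_u\cup S_v$. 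Pushing the uniform law on $\{0,1\}^{S_u\cup S_v}$ forward through $f$ shows that every value $\mathbb{P}[X'_u=a,X'_v=b]$ is an integer multiple of $2^{-|S_u\cup S_v|}$, hence of $2^{-2c}$; so all four pairwise probabilities lie on the lattice $2^{-2c}\mathbb{Z}$.

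Next I compute the true pairwise marginal and show it lies off this lattice. Working with $\pm 1$ spins, two leaves whose least common ancestor sits at tree-distance $2r$ satisfy $\mathbb{E}[s_u s_v]=\theta^{2r}$, so in $\{0,1\}$ coordinates $\mathbb{P}[X^{(d)}_u=1,X^{(d)}_v=1]=\tfrac{1+\theta^{2r}}{4}$. Since $\tfrac14\in 2^{-2c}\mathbb{Z}$ (here $c\ge1$ is used), the distance of this quantity to $2^{-2c}\mathbb{Z}$ equals $\tfrac14$ times the distance of $\theta^{2r}$ to the coarser lattice $2^{2-2c}\mathbb{Z}$. The crucial step is to choose $r$ so that $\theta^{2r}$ is bounded away from $2^{2-2c}\mathbb{Z}$: I take $r$ to be the least positive integer with $\theta^{2r}\le 2^{1-2c}$, i.e.\ at most half the lattice spacing. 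Because consecutive correlations differ by the factor $\theta^2<1$, minimality of $r$ forces $\theta^{2r}>\theta^2\cdot 2^{1-2c}$, and together with $\theta^{2r}\le 2^{1-2c}$ this pins $\theta^{2r}$ into the interior of the first lattice cell $(0,2^{2-2c})$, at distance at least $\theta^2\,2^{1-2c}$ from both endpoints $0$ and $2^{2-2c}$. This $r$ is a constant depending only on $\theta$ and $c$, so for every $d\ge r$ a pair of leaves at tree-distance $2r$ exists (descend to depth $d-r$ and branch into two distinct subtrees). Hence $\mathbb{P}[X^{(d)}_u=1,X^{(d)}_v=1]$ lies at distance $\ge \theta^2\,2^{-1-2c}$ from $2^{-2c}\mathbb{Z}$, whereas by the first paragraph the analogous probability for $X'$ lies exactly on $2^{-2c}\mathbb{Z}$.

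Finally I convert this single off-lattice marginal into a total-variation lower bound. Marginalizing a distribution on $\{0,1\}^{L_d}$ onto the two coordinates $(u,v)$ is a deterministic post-processing, and total variation distance cannot increase under a common post-processing, so
\[
d_{TV}(X',X^{(d)})\ \ge\ d_{TV}\big((X'_u,X'_v),(X^{(d)}_u,X^{(d)}_v)\big)\ \ge\ \big|\mathbb{P}[X'_u{=}1,X'_v{=}1]-\mathbb{P}[X^{(d)}_u{=}1,X^{(d)}_v{=}1]\big|\ \ge\ \theta^2\,2^{-1-2c},
\]
a positive constant independent of $n$. As this holds for all $d\ge r$, i.e.\ for all sufficiently large $n$, it gives the claimed $\Omega(1)$ bound.

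The main obstacle is the choice of pair in the second paragraph. The two obvious candidates both fail to give a bound bounded away from zero: always using sibling leaves (correlation $\theta^2$) breaks down when $\theta$ is close to $1$, since then $\tfrac{1+\theta^2}{4}$ can sit essentially on the lattice, while always using the most distant pair (correlation $\theta^{2d}$) fails because $\theta^{2d}\to 0$, pushing the marginal arbitrarily close to the lattice point $\tfrac14$. The resolution is to exploit the freedom to pick the \emph{scale} $r$: since the available correlations $\theta^{2},\theta^{4},\dots,\theta^{2d}$ form a geometric progression of ratio $\theta^2$, and the coarse dyadic cell $(0,2^{2-2c})$ has width equal to two multiplicative steps of this progression, some $\theta^{2r}$ necessarily lands in the cell's interior for every $\theta\in(0,1)$ and every $d\ge r$. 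Verifying this interior-landing bound uniformly in $\theta$ and $d$ is the one place the argument requires care; everything else is the routine lattice and data-processing reasoning above.
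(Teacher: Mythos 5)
Your proposal is correct and follows essentially the same route as the paper's proof: both fix the locality constant $c$, observe that any pairwise statistic of the $\mathbf{NC}^0$ output over fair coins must lie on the lattice $2^{-2c}\mathbb{Z}$, and then choose the \emph{minimal} tree-distance scale at which the leaf--leaf correlation $\theta^{2r}$ drops below the lattice spacing, so that it is trapped strictly inside a lattice cell at distance $\Omega(\theta^2 2^{-2c})$ from the nearest lattice point. The only cosmetic differences are that the paper works with the agreement probability $\tfrac12+\theta^{2d'}/2$ rather than the joint probability $\tfrac{1+\theta^{2r}}{4}$, and your write-up is somewhat more explicit about the interior-landing bound and the data-processing step.
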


\begin{proof}
There must exist a constant $c$ such that each output of $f$ is affected by at most $c$ of its inputs. Now, let $d'$ be the smallest positive integer such that $\theta^{2d'}\le 2^{-2c}$. For any $n\ge k^{d'}$, 
\[\mathbb{P}[X^{(d)}_{1^d}=X^{(d)}_{1^{d-d'}k^{d'}}]=1/2+\theta^{2d'}/2\]
However, $\mathbb{P}[X'_{1^d}=X'_{1^{d-d'}k^{d'}}]$ must be an integer multiple of $2^{-2c}$. So, it must be the case that
\[|\mathbb{P}[X^{(d)}_{1^d}=X^{(d)}_{1^{d-d'}k^{d'}}]-\mathbb{P}[X'_{1^d}=X'_{1^{d-d'}k^{d'}}]|\ge \theta^{2d'}/2\]
The desired conclusion follows.
\end{proof}

%\note{Is $1^{d-d'}k^{d'}$ a good way to denote a string of $d-d'$ 1s followed by $d'$ {\em k}s?}

This lemma is somewhat unsatisfying in that it leaves open the possibility that the generation process might be doable in $\mathbf{NC}^0$ if we are given access to independent bits with any desired given biases. We study this case next. 

%The proof above raises the following question. Do generation functions exist if we allow access to independent bits with any desired given biases? This more general case is defined and discussed in *** where we show that there are no generation functions in $\mathbf{NC}^0$ even if we assume that the input $x$ is sampled from a product distribution with any desired biases.

To prove lower bounds against  $\mathbf{NC}^0$ in this more general setup, we will still use the property that each bit output by the $\mathbf{NC}^0$ circuit is affected by a constant number of input bits. Also, each input bit could effect anywhere from $1$ output bit to all of them. That means that if we divide the interval $[1,n]$ into a sufficiently large collection of subintervals, there must be at least one, $[a,b]$, such that less than half of the outputs of the circuit are affected by an input that affects a number of outputs in that range. Then, we can find a set of $\Omega(n/a)$ outputs that only have dependencies as a result of inputs that affect more than $b$ outputs. That allows us to show that for any fixed assignment of values to those inputs the overlap between the probability distributions of $X^{(d)}$ and the output of the circuit is very small. Then, we can add together these overlaps for every assignment of values to those variables and show that it is still small because there are at most $\Omega(n/b)$ inputs that affect that many outputs. Our first step towards proving this will be to show that any $\mathbf{NC}^0$ circuit with a large number of outputs has a large subset of its outputs that are independent conditioned on the values of a relatively small number of inputs. More formally, we have:

\begin{lemma}
Let $f_n:\{0,1\}^{m_n}\to\{0,1\}^n$ be a series of functions that can be computed by an $\mathbf{NC}^0$ circuit, and $c$ be the maximum number of inputs that any output is affected by. Also, let $W_1,\cdots,W_{m_n}$ be independently generated random variables and $X'=f_n(W)$. Next, let $n\ge b_0\ge b_1\ge b_2\ge\cdots\ge b_{2c}\ge 1$. For any given $n$, there exists $0< i\le 2c$, $S\subseteq\{1,\cdots,n\}$ and $T\subseteq\{1,\cdots,m\}$ such that $|S|\ge \frac{n}{2cb_{i}}$, $|T|\le cn/b_{i-1}$, and $\{X'_j: j\in S\}$ are independent conditioned on any fixed value of $\{W_j:j\in T\}$.
\end{lemma}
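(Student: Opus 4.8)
The plan is to classify the inputs by their fan-out and to locate a ``gap'' in the fan-out spectrum that lets us peel off a large, conditionally independent family of outputs. For each input $\ell$ write $d_\ell$ for its fan-out, i.e.\ the number of output bits that depend on $W_\ell$. Since each of the $n$ outputs depends on at most $c$ inputs, the total dependency count satisfies $\sum_\ell d_\ell \le cn$. The high-fan-out set $T := \{\ell : d_\ell > b_{i-1}\}$ will serve as the conditioning set: each $\ell \in T$ contributes more than $b_{i-1}$ to the sum, so $|T| \le cn/b_{i-1}$, which is exactly the required bound no matter which $i$ we end up choosing.

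First I would record the key structural fact: the outputs $\{X'_j : j \in S\}$ are mutually independent conditioned on $\{W_\ell : \ell \in T\}$ as soon as their residual input-neighborhoods (the inputs they depend on, after deleting those in $T$) are pairwise disjoint. This is immediate from the independence of the $W_\ell$: once $W_T$ is fixed, each surviving $X'_j$ is a function of a private, disjoint block of still-random inputs. Thus the whole problem reduces to finding an index $i$ and a large set $S$ of outputs whose residual neighborhoods are pairwise disjoint and consist only of inputs of fan-out at most $b_i$.

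The main step is choosing $i$ by an averaging argument over the $2c$ intervals $I_i := (b_i, b_{i-1}]$ for $i = 1,\ldots,2c$. For a fixed $i$, call an input \emph{medium} if its fan-out lies in $I_i$, \emph{small} if it is $\le b_i$, and \emph{large} if it is $> b_{i-1}$; every input is exactly one of these. Letting $m_i$ be the number of outputs touched by some medium input, I would bound $\sum_{i=1}^{2c} m_i \le \sum_j (\text{number of distinct intervals hit by the inputs of } j) \le cn$, since each output has at most $c$ inputs and the intervals are disjoint. Hence by pigeonhole some $i$ satisfies $m_i \le n/2$, and for that $i$ at least $n/2$ outputs are untouched by any medium input, so their residual inputs are all small (fan-out $\le b_i$). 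From this pool of $\ge n/2$ candidates I would build $S$ greedily: repeatedly add a candidate to $S$ and discard every candidate sharing one of its (at most $c$) small inputs. Since each small input has fan-out $\le b_i$, each selection discards at most $c\,b_i$ candidates, yielding $|S| \ge (n/2)/(c b_i) = n/(2 c b_i)$, while by construction the selected outputs have pairwise disjoint small residual neighborhoods.

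The part requiring the most care is making the two size constraints hold \emph{for the same} $i$: the upper bound on $|T|$ wants $b_{i-1}$ large whereas the lower bound on $|S|$ wants $b_i$ small, and it is precisely the averaging over the $2c$ consecutive intervals that produces a single index reconciling both through the fixed ratio dictated by the $b_i$'s. I expect the only genuinely delicate points to be verifying that ``untouched by a medium input'' exactly forces the residual neighborhood to be small, and that disjointness of these small neighborhoods yields full mutual (not merely pairwise) conditional independence; both follow from the independence of the $W_\ell$ together with the disjointness of the interval decomposition, but they should be stated explicitly.
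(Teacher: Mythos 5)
Your proposal is correct and follows essentially the same route as the paper's proof: partition the inputs into $2c$ fan-out bands determined by the $b_i$, use pigeonhole (the paper sums the fan-outs of each band against $\sum_j s_j \le cn$, you count touched outputs directly, but the conclusions coincide) to find a band whose ``medium'' inputs touch at most $n/2$ outputs, greedily extract $n/(2cb_i)$ of the untouched outputs with pairwise disjoint small-input neighborhoods, and condition on the at most $cn/b_{i-1}$ high-fan-out inputs. All the quantitative bounds match the statement, so no gaps.
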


\begin{proof}
Choose an $n$, refer to $m_n$ as $m$, and for each $j$, let $s_j$ be the number of bits in the output of $f_n$ that are affected by the value of $W_j$. Also, assume without loss of generality that $s_1\ge s_2\ge\cdots\ge s_{m}$. Next, for each $0\le i\le 2c$, let $j_i$ be the smallest positive integer such that $s_{j_i}\le b_i$, or $m+1$ if $s_j> b_i$ for all $j$. Now, observe that
$$\sum_{i=1}^{2c}\sum_{j=j_{i-1}}^{j_i-1} s_j =\sum_{j=j_0}^{j_{2c}-1} s_j \le \sum_{j=1}^m s_j \le cn $$
So, there must exist $i$ such that $\sum_{j=j_{i-1}}^{j_i-1} s_j\le n/2$. That means that there are at least $n/2$ elements of $X$ that are not affected by $W_j$ for any $j_{i-1}\le j<j_i$. For any such element of $X$, there are at most $c(b_i-1)$ other elements of $X$ that are affected by any of the elements of $W_{j_i},W_{j_i+1},\cdots,W_m$ that affect it. So, we can find at least $n/2cb_i$ elements of $X$ such that for all $j_{i-1}\le j\le j_i-1$, $W_j$ does not affect any of them, and for all $j\ge j_i$, at most one of these elements is affected by $W_j$. Also, $j_{i-1}\le cn/b_{i-1}+1$. So, that leaves at most $cn/b_{i-1}$ elements of $W$ that affect more than one of these elements of $X$.
\end{proof}

That establishes that the output of any such $\mathbf{NC}^0$ circuit contains a large number of elements that are independent conditioned on the values of a relatively small number of inputs. Ultimately, we will want to show that the probability distribution of the corresponding elements of $X^{(d)}$ must have negligible overlap with the probability distribution of these outputs. In order to do this, we will need to establish that the probability distribution of any large subset of the elements of $X^{(d)}$ has very low overlap with the probability distribution of any set of independent random variables. The main idea behind that argument is that elements of $X^{(d)}$ corresponding to nearby leaves are correlated. So, any two independent random variables corresponding to nearby leaves must either be excessively biased towards one label or have too low a probability of being equal to each other. As such, we state the following result:

\begin{lemma}
For any fixed values of $0<\theta<1$ and $k>1$, there exist constants $c_1,c_2>0$ such that the following holds. Let $S\subseteq L_d$, and let $X'_i\in\{0,1\}$ be a random variable for each $i\in L_d$ such that $\{X'_i:i\in S\}$ are independent. Then
\[\sum_{x\in\{0,1\}^S} \min\left(\mathbb{P}\left[X^{(d)}_i=x_i\text{ for } i\in S\right],\mathbb{P}\left[X'_i=x_i\text{ for } i\in S\right]\right)\le 2e^{-c_1 |S|^{1+c_2}/n^{c_2}}\]
\end{lemma}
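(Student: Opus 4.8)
The plan is to bound the stated overlap by the Hellinger affinity
\[
\sum_{x\in\{0,1\}^S}\sqrt{\mathbb{P}\!\left[X^{(d)}_i=x_i,\ i\in S\right]\ \mathbb{P}\!\left[X'_i=x_i,\ i\in S\right]},
\]
using $\min(a,b)\le\sqrt{ab}$, and then to drive this quantity down by exhibiting many disjoint pairs of nearby leaves whose broadcast correlation no product distribution can reproduce. Writing spins $s_v=2\sigma_v-1$, two leaves $u,v$ whose least common ancestor sits at depth $d-t$ satisfy $\mathbb{E}[s_us_v]=\theta^{2t}$, whereas under any product law the analogous correlation factors as $\mathbb{E}[s'_u]\,\mathbb{E}[s'_v]$. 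Since every single leaf of the broadcast tree is marginally unbiased, a product cannot simultaneously match an unbiased marginal and a nonzero pairwise correlation; this incompatibility is the entire source of the gap.

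First I would extract the pairs. Fix the scale $r=\lceil\log_k(2n/|S|)\rceil$ and group the leaves of $S$ by their ancestor at depth $d-r$. There are $k^{d-r}=n/k^r\le |S|/2$ groups, so pigeonholing the $|S|$ leaves into them and discarding at most one leaf per nonempty group leaves at least $|S|/4$ leaves that can be matched into disjoint pairs $(u_j,v_j)$, each pair lying in a common depth-$(d-r)$ subtree and hence at tree distance at most $2r$. Thus every pair has correlation $\rho_j:=\mathbb{E}[s_{u_j}s_{v_j}]=\theta^{\mathrm{dist}(u_j,v_j)}\ge\theta^{2r}\ge c'(|S|/n)^{2\log_k(1/\theta)}=:\rho$. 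For a single such pair, expanding the affinity against an arbitrary product on $\{0,1\}^2$ and applying Cauchy--Schwarz coordinatewise gives the bound $\tfrac12(\sqrt{1+\rho_j}+\sqrt{1-\rho_j})\le 1-\rho_j^2/8$, valid for every product, using only that the pair is unbiased with correlation $\rho_j$. If the pairs were independent under $X^{(d)}$, I could multiply these bounds over the $m\ge|S|/4$ pairs to obtain affinity $\le(1-\rho^2/8)^m\le e^{-m\rho^2/8}$, and since $m\rho^2\gtrsim |S|\,(|S|/n)^{4\log_k(1/\theta)}=|S|^{1+c_2}/n^{c_2}$ with $c_2=4\log_k(1/\theta)>0$, this is exactly the claimed estimate (the leading factor $2$ absorbing the final two-tail constant).

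The hard part is precisely that the pairs are \emph{not} independent under the broadcast law: all of them inherit correlations through their common ancestors, and (as one checks already on a correlated pair against a point mass) the affinity against a product is not submultiplicative across correlated blocks, so the per-pair factors cannot simply be multiplied. I would resolve this by conditioning. Choosing the pairs so that their internal connecting paths are edge-disjoint and lie strictly below a separating level $\ell^\star$ just above every pair's least common ancestor, the pairs become mutually independent once the labels at level $\ell^\star$ are fixed, while each pair retains a connected correlation $\rho_j-\mathbb{E}[s_{u_j}]\,\mathbb{E}[s_{v_j}]\ge \rho_j(1-\theta^2)$ that is bounded below \emph{uniformly} over the conditioning: conditioning only biases the two endpoints by a common factor $\theta^{\,2\,\mathrm{dist}(\ell^\star\text{-ancestor},\,\mathrm{lca})}$ and never erases the within-pair covariance. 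This gives the per-pair affinity bound $1-c\rho^2$ for every value of the conditioning variables, and the remaining, most delicate, step is to assemble these conditional bounds into a bound on the unconditional affinity without paying an exponential price for the mixture over the $\ell^\star$-labels. Here I would exploit that those labels are themselves produced by a broadcast process and recurse the estimate up the tree, rather than crudely union-bounding over all $2^{\#\text{groups}}$ configurations (which would cost a prohibitive $2^{m/2}$ factor). Optimizing the scale $r$ then produces the constants $c_1$ and $c_2=4\log_k(1/\theta)$.
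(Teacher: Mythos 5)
Your setup is sound up to the per-pair estimate: the Cauchy--Schwarz bound $\sum_x\sqrt{P(x)Q(x)}\le\tfrac12(\sqrt{1+\rho}+\sqrt{1-\rho})\le 1-\rho^2/8$ for a single unbiased $\rho$-correlated pair against an arbitrary product is correct, and it elegantly covers in one stroke both the ``biased marginals'' and ``unbiased marginals'' situations that the paper has to treat as separate cases. But the proof has a genuine gap exactly where you flag it, and the flagged difficulty is not resolved by your proposed fix. Hellinger affinity is multiplicative only over independent blocks, and for a mixture it satisfies $\mathrm{Aff}(\sum_\alpha\lambda_\alpha P_\alpha,\,Q)\ge\sum_\alpha\lambda_\alpha\,\mathrm{Aff}(P_\alpha,Q)$ --- concavity gives a \emph{lower} bound, so uniform upper bounds on the conditional affinities (conditioned on the labels at your separating level $\ell^\star$) say nothing about the affinity of the unconditional law, which is precisely such a mixture. ``Recurse the estimate up the tree'' is not an argument here: there is no composition rule for the affinity functional that lets you trade the mixture over $\ell^\star$-configurations for another instance of the same estimate one level up. A data-processing reduction to the XOR statistics $s_{u_j}s_{v_j}$ (which \emph{are} genuinely independent under the broadcast law) would restore multiplicativity, but then your uniform-over-all-products per-pair bound is lost, since a biased product (e.g.\ a point mass) matches any agreement probability; that is exactly why a case split on the marginals of $X'$ reappears. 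A secondary issue: you form many pairs inside a single depth-$(d-r)$ subtree, and those pairs are correlated with each other even after your conditioning, so the count of usable independent blocks is smaller than your $|S|/4$.

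The paper avoids the affinity formalism entirely. It splits on whether at least a third of the product marginals are $\delta$-biased. In the biased case it selects leaves with distinct ancestors at a slightly higher level $d''$, so that their labels are conditionally independent given $X^{(d'')}$ with conditional means uniformly at most $1/2+\delta/2$, and applies a Chernoff bound to the empirical mean under both laws: the two distributions place this statistic on opposite sides of $1/2+3\delta/4$ with high probability, and the overlap is at most the sum of the two tail probabilities. In the near-unbiased case it takes \emph{one} pair per depth-$d'$ subtree, uses that the pair-agreement indicators are unconditionally mutually independent under the broadcast law (a special feature of the symmetric binary channel: the XOR within a subtree is independent of that subtree's root label and of everything outside it), and runs Chernoff on the number of agreeing pairs. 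In both cases the separating statistic has genuinely independent increments, so no mixture issue ever arises. If you want to rescue your approach, you would need either to pass to such a statistic (giving up the single uniform per-pair bound and reinstating the case analysis), or to find a correct way to bound the affinity of a correlated broadcast marginal against a product, which is substantially harder than what you have written.
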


\begin{proof}
First, let $d'=\lfloor \log_k(|S|/6)\rfloor$. Next, let $\delta=\theta^{d-d'}/4$ and $d''=d'-\lceil -\log(4)/\log\theta\rceil$. We will break up our analysis into two cases. 
 
First consider the case where $E[X'_i]\ge 1/2+\delta$ for at least $1/3$ of the $i$ in $S$ or $E[X'_i]\le 1/2-\delta$ for at least $1/3$ of the $i$ in $S$. Assume without loss of generality that $E[X'_i]\ge 1/2+\delta$ for at least $1/3$ of the $i$ in $S$. In this case, let $S'=\{i\in S: E[X'_i]\ge 1/2+\delta\}$. Next, let $S''$ be a maximal subset of $S$ such that $\parent^{(d-d'')}(i)\ne\parent^{(d-d'')}(i')$ for all distinct $i,i'\in S''$. Clearly, there are at most $k^{d-d''}$ elements of $S'$ that have any given ancestor in $L_{d''}$, so $|S''|\ge |S'|/k^{d-d''}\ge |S|/3k^{d-d''}$. Also, $E[X'_i]\ge 1/2+\delta$ for every $i\in S''$. However, for any $x\in\{0,1\}^{L_{d''}}$ and any $i\in S''$, it must be the case that
\begin{align*}
E[X^{(d)}_i|X^{(d'')}=x] &\le 1/2+\theta^{d-d''}/2\\
&\le 1/2+\theta^{d-d'}/8 =1/2+\delta/2
\end{align*}
Also, these elements of $X^{(d)}$ are independent conditioned on any value of $X^{(d'')}$ because $S''$ does not contain the indices of any pair of vertices with a common ancestor closer than $X^{(d'')}$. So, by a Chernoff bound,
\[P\left [\frac{1}{|S''|}\sum_{i\in S''} X_i^{(d)}\ge 1/2+3\delta/4\right]\le e^{-\delta^2|S''|/96}\]
On the flip side,
\[P\left [\frac{1}{|S''|}\sum_{i\in S''} X'_i\le 1/2+3\delta/4\right]\le e^{-\delta^2|S''|/64}\]
So, the overlap between the probability distributions of $X'$ and $X^{(d)}$ is at most $2 e^{-\delta^2|S''|/96}$. Now, observe that
\begin{align*}
\delta^2|S''| &\ge \theta^{2(d-d')}|S|/48k^{d-d''}\\
&\ge \theta^{2(d-d')}|S|/48k^{d-d'}k^{1+\log(4)/\log\theta}\\
&\ge \theta^2[|S|/6n]^{1-2\log_k\theta}|S|/48k^{2+\log(4)/\log\theta}\\
&=\frac{\theta^2}{48\cdot 6^{1-2\log_k\theta}k^{2+\log(4)/\log\theta}}\cdot\frac{|S|^{2-2\log_k\theta}}{n^{1-2\log_k\theta}}
\end{align*}
So, the overlap between the probability distributions of $X'$ and $X^{(d)}$ is at most
\[2e^{-\frac{\theta^2}{4608\cdot 6^{1-2\log_k\theta}k^{2+\log(4)/\log\theta}}\cdot\frac{|S|^{2-2\log_k\theta}}{n^{1-2\log_k\theta}}}\]

Now we consider the remaining case when $1/2-\delta\le E[X'_i]\le 1/2+\delta$ for at least $1/3$ of the $i$ in $S$. In this case, let $S'=\{i\in S:1/2-\delta\le E[X'_i]\le 1/2+\delta\}$. We know that $|S'|\ge |S|/3\ge 2k^{d'}$. So, there must be at least $(|S'|-k^{d'})/k^{d-d'}\ge |S|/6k^{d-d'}$ values of $j\in L_{d'}$ such that more than one of the elements of $L_{d-d'}(j)$ are in $S'$. Now, pick $i,i'\in L_{d-d'}(j)\cap S'$ for each such $j$, and let $S''$ be the set of all such pairs $(i,i')$. For any such $i,i'$, we know that $X'_i$ is independent of $X'_{i'}$, so $\mathbb{P}[X'_i=X'_{i'}]\le 1/2+2\delta^2$. Also, $\mathbb{P}[X^{(d)}_i=X^{(d)}_{i'}]\ge 1/2+\theta^{2d-2d'}/2=1/2+8\delta^2$, and this probability is independent of the labels of any leaves not descended from $\parent^{(d-d')}(i)$. So, by a Chernoff bound, 
\[\mathbb{P}[|\{(i,i')\in S'': X^{(d)}_i=X^{(d)}_{i'}\}|/|S''|\le 1/2+5\delta^2]\le e^{-9\delta^4|S''|/4}\]
and
\[\mathbb{P}[|\{(i,i')\in S'': X'_i=X'_{i'}\}|/|S''|\ge 1/2+5\delta^2]\le e^{-9\delta^4|S''|/6}\]
So, the overlap between the probability distributions of $X'$ and $X^{(d)}$ is at most $2e^{-9\delta^4|S''|/6}$. Now, observe that
\begin{align*}
9\delta^4|S''|/6 &\ge \delta^4 |S|/4k^{d-d'}\\
&= \theta^{4(d-d')}|S|/1024k^{d-d'}\\
&\ge \frac{\theta^4}{1024k}\cdot |S|(|S|/6n)^{1-4\log_k\theta}
\end{align*}
Thus, the overlap between the probability distributions is at most $2e^{-\frac{\theta^4}{1024k}\cdot |S|(|S|/6n)^{1-4\log_k\theta}}$. 

\vspace{5mm}
So, the desired conclusion holds with 
\[c_1=\min\left(\frac{\theta^2}{4608\cdot 6^{1-2\log_k\theta}k^{2+\log(4)/\log\theta}},\frac{\theta^4}{1024k\cdot 6^{1-4\log_k\theta}}\right)\]
 and $c_2=1-4\log_k\theta$.
\end{proof}

 So, at this point we have established that any $\mathbf{NC}^0$ circuit with independent random inputs must have a large set of outputs that are independent conditioned on any assignment of values to a relatively small set of inputs. Also, we know that the overlap between the probability distribution of the outputs conditioned on an assignment of value to these inputs and the probability distribution of $X^{(d)}$ must be small. Now, we just need to add up the overlaps for every possible assignment of values to these inputs in order to bound the overall overlap between the probability distribution of $X^{(d)}$ and the probability distribution of the circuit's output. 
 
 We are now ready to prove Theorem~\ref{thm:nc0gen}:
 
 \begin{proof}
First, let $c$ be the maximum number of inputs that any output of $f$ is ever affected by. Next, for each integer $0\le i\le 2c$, let $b_i=e^{\ln(n)^{(2c-i)/2c}}$. There must exist $0<i\le 2c$, $S\subseteq\L_d$ and $T\subseteq\{1,\cdots,m_n\}$ such that $|S|\ge\frac{n}{2cb_i}$, $|T|\le cn/b_{i-1}$, and $\{X'_j: j\in S\}$ are independent conditioned on any fixed value of $\{W_j: j\in T\}$. Now, choose $c_1$ and $c_2$ satisfying the conditions of the previous lemma. Then, for every $w\in\{0,1\}^{|T|}$, let $E_w$ be the event that the elements of $W$ with indices in $T$ take on the values given by $w$. Observe that
\begin{align*}
&\sum_{x\in\{0,1\}^{L_d}} \min\left(\mathbb{P}[X^{(d)}=x],\mathbb{P}[X'=x]\right)\\
&\le \sum_{x\in\{0,1\}^{L_d}} \sum_{w\in\{0,1\}^{|T|}} \min\left(\mathbb{P}[X^{(d)}=x],\mathbb{P}[X'=x,E_w]\right)\\
&\le \sum_{w\in\{0,1\}^{|T|}} \sum_{x\in\{0,1\}^{L_d}} \min\left(\mathbb{P}[X^{(d)}=x],\mathbb{P}[X'=x|E_w]\right)\\
&\le \sum_{w\in\{0,1\}^{|T|}} 2e^{-c_1|S|^{1+c_2}/n^{c_2}}\\
&= 2^{|T|+1}e^{-c_1|S|^{1+c_2}/n^{c_2}}\\
&\le 2^{cn/b_{i-1}+1}e^{-c_1n/(2cb_i)^{c_2}}\\
&=2 e^{\ln(2)cn/b_{i-1}-c_1n/(2cb_i)^{c_2}}\\
\end{align*}

Also, $b_i^{c_2}=o(b_{i-1})$. So, there exists $n_0$ such that for all $n\ge n_0$ and all integers $0<i\le 2c$, we have that $\ln(2)cn/b_{i-1}\le c_1n/(2cb_i)^{c_2}/2$.That means that for all $n\ge n_0$,
\begin{align*}
&\sum_{x\in\{0,1\}^{L_d}} \min\left(\mathbb{P}[X^{(d)}=x],\mathbb{P}[X'=x]\right)\\
&\le 2 e^{-c_1n/(2cb_i)^{c_2}/2}\\
&\le 2 e^{-c_1n/(2cb_1)^{c_2}/2}\\
\end{align*}

$\ln(b_1)=\ln(n)^{1-1/2c}=o(\ln(n))$, so 
\[\sum_{x\in\{0,1\}^{L_d}} \min\left(\mathbb{P}[X^{(d)}=x],\mathbb{P}[X'=x]\right)=O\left(e^{-\sqrt{n}}\right)\]
as desired.
\end{proof}

\section{Random restrictions in the broadcast tree model}\label{app:ac0}

 Here we prove Theorem~\ref{thm:ac0main}. The usual approach to proving lower bounds against $\mathbf{AC}^0$ is through {\em random restrictions} where for every input $x_i$ we leave it unset with probability $p$ and otherwise we set it to zero with probability $\frac{1-p}{2}$ and set it to one with the remaining probability $\frac{1-p}{2}$. The main insight is that if the parameters are chosen appropriately, with high probability the $\mathbf{AC}^0$ circuit becomes much simpler (while the parity function remains a parity on fewer inputs). The key to our lower bound is an alternative but equivalent way to generate samples from the broadcast tree model. We will need the following definitions:

\begin{definition}
Let $d'>0$. Let $\phi_{d'}: \{0,1\}^{L_{d'-1}} \times \{0,1,*\}^{L_{d'}} \rightarrow \{0,1\}^{L_{d'}}$ be the function such that for all $x\in\{0,1\}^{L_{d'-1}}, r \in \{0,1,*\}^{L_{d'}}$ and $v \in L_{d'}$, we have that
\[(\phi_{d'}(x,r))_v=
\begin{cases}
r_v  &\text{ for } r_v\in\{0,1\}\\
x_{\parent(v)} &\text{ for } r_i= *, 
\end{cases}
\]
\end{definition}

For the tree broadcast process the natural distribution for $r$ is given by independent copies of the following distribution: 

\begin{definition}
For any $0\le \theta<1$, let $R_\theta$ be the probability distribution over $\{0,1,*\}$ such that a variable drawn from $R_\theta$ will be $0$ with probability $(1-\theta)/2$, $1$ with probability $(1-\theta)/2$ , and '*' with probability $\theta$.
\end{definition}

\begin{definition}
Let $d'>0$. Let $\Phi_{d'}: \{0,1\}^{L_{d'-1}}  \rightarrow \{0,1\}^{L_{d'}}$ be the random function such that for all $x\in\{0,1\}^{L_{d'-1}}$ and $v \in L_{d'}$, we let $\Phi_{d'}(x) = \phi_{d'}(x,r)$, 
where $r$ is drawn from $R_{\theta}^{L_{d'}}$
\end{definition}

One can easily check that for all $k$, $\theta$, and $d'$ and all $x\in \{0,1\}^{L_{d'-1}}$ and $x'\in\{0,1\}^{L_{d'}}$, if we have $r\sim R_\theta^{L_{d'}}$ then
\[
\mathbb{P}[\Phi(x)=x'] = \mathbb{P}[\phi(x,r)=x'] = \mathbb{P}[X^{(d')}=x'|X^{(d'-1)}=x].
\]

 For any $x\in\{0,1\}$, the probability distribution of $\Phi_{d}\circ\Phi_{d-1}\cdots\circ\Phi_{1}(x)$ is identical to the probability distribution of $X^{(d)}$ given that $X^{(0)}=x$. So as in classical applications of the switching lemma, we want to show that for any $f\in \mathbf{AC}^0$, $f\circ \Phi_{d}\circ\Phi_{{d-1}}\cdots\circ\Phi_{1}$ is a constant function with high probability. The first step towards doing that will be to prove that applying a logarithmic number of these projections to an $\mathbf{AC}^0$ circuit is enough to reduce the fan-in of all gates in its bottom layer to a constant with high probability. For that, we will need the following.

%{\em EM: Notation is a bit inconsistent. I like $\Phi_d$ better than $\Phi_{r_d}$ below}
\begin{lemma}\label{restrictLem}
Let $m$, $d'$, $h$, and $c$ be positive integers. Also, let $f:\{0,1\}^{L_{d'}}\rightarrow \{0,1\}$ be a function such that there are only $m$ inputs that ever affect its value. 
Next, let $f'=f\circ \Phi_{d'}\circ \Phi_{d'-1}\circ\cdots\circ \Phi_{d'-h+1}$
%Next, let $f'=f\circ \Phi_{r_{d'}}\circ \Phi_{r_{d'-1}}\circ\cdots\circ \Phi_{r_{d'-h+1}}$. 
With probability at least $1-(m\theta^{h})^c$, there are fewer than $c$ inputs that affect the value of $f'$.
\end{lemma}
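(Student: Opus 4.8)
The plan is to trace, for each level-$d'$ vertex on which $f$ depends, how its value propagates backward through the $h$ random restrictions, and to show that an input at level $d'-h$ can influence $f'$ only when an entire ancestral chain of copy ($*$) choices survives.

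First I would fix the randomness defining $\Phi_{d'},\ldots,\Phi_{d'-h+1}$ and analyze the resulting deterministic map. Let $A \subseteq L_{d'}$ with $|A| \le m$ be the set of inputs that affect $f$. For a vertex $v \in A$, unwinding the recursive definition of $\phi$ upward shows that $\sigma_v$ equals the value of the nearest vertex on the path from $v$ upward (including $v$ itself) whose restriction symbol lies in $\{0,1\}$; and if every one of the $h$ restriction symbols along the path $v, \parent(v),\ldots,\parent^{(h-1)}(v)$ (drawn in $\Phi_{d'},\ldots,\Phi_{d'-h+1}$ respectively) equals $*$, then $\sigma_v$ equals the input label at $\parent^{(h)}(v) \in L_{d'-h}$. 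Let $E_v$ denote this all-$*$ event; then $\mathbb{P}[E_v] = \theta^h$ by independence of the $R_\theta$ draws, and whenever $E_v$ fails, $\sigma_v$ is a constant. Consequently the set $U$ of inputs that affect $f'$ is contained in $\{\parent^{(h)}(v) : v \in A,\ E_v \text{ holds}\}$, so $|U|$ is at most the number of $v \in A$ for which $E_v$ occurs.

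The key step is a union bound over $c$-subsets, for which I need the trace-back events to be independent when the corresponding inputs are distinct. The crucial observation is this: if $v_1,\ldots,v_c \in A$ have pairwise distinct level-$(d'-h)$ ancestors $\parent^{(h)}(v_i)$, then their upward paths are vertex-disjoint, since a shared vertex at level $d'-i$ would force a common level-$(d'-h)$ ancestor. Hence the events $E_{v_1},\ldots,E_{v_c}$ depend on disjoint collections of $R_\theta$-draws and are independent, giving $\mathbb{P}[E_{v_1}\cap\cdots\cap E_{v_c}] = \theta^{hc}$. Since $|U| \ge c$ forces the existence of $c$ vertices in $A$ with distinct level-$(d'-h)$ ancestors all of whose trace-back events hold, a union bound over the at most $\binom{m}{c}$ such $c$-subsets yields $\mathbb{P}[|U| \ge c] \le \binom{m}{c}\theta^{hc} \le m^c \theta^{hc} = (m\theta^h)^c$, and passing to the complement gives the claim.

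I expect the main obstacle to be the dependency structure of the trace-back events: vertices in $A$ sharing an ancestor share restriction choices, so their events $E_v$ are positively correlated and cannot simply be multiplied. The resolution — and the one delicate point to get right — is that counting distinct \emph{inputs} (rather than distinct level-$d'$ vertices) automatically restricts attention to vertex-disjoint paths, which is exactly what restores independence and makes the union bound clean; the remaining arithmetic ($\binom{m}{c} \le m^c$) is routine.
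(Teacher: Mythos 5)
Your proof is correct and follows essentially the same route as the paper's: a union bound over $c$-subsets of the $m$ relevant inputs, each surviving all $h$ restrictions with probability $\theta^{hc}$. The only difference is that you make explicit the independence argument (distinct level-$(d'-h)$ ancestors $\Rightarrow$ vertex-disjoint paths $\Rightarrow$ independent all-$*$ events) that the paper compresses into the phrase ``set to a constant or set to the same variable,'' which is a worthwhile clarification but not a different proof.
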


\begin{proof}
Each time we compose the function with $\Phi_{i}$, each of its inputs is independently set to a constant with probability $1-\theta$, and then some of the inputs might be set to the same variable. If $f'$ depends on $c$ or more inputs, then there must be a set of $c$ of the inputs of $f$ that affect it such that none of these inputs get set to a constant or set to the same variable by any of the projections. There are at most $m^c$ sets of $c$ inputs of $f$ that affect its value, and for any such set, the probability that none of them get set to a constant or merged is at most $\theta^{ch}$. The desired conclusion follows.
\end{proof}

\begin{corollary}
Let $b$ and $h'$ be positive constants, $d'>h'\ln(n)$ be a positive integer, and $f:\{0,1\}^{L_{d'}}\rightarrow \{0,1\}$ be a function that takes an AND or OR of some subset of its inputs and their negations. %Also, let $f'=f\circ \Phi_{r_{d'}}\circ \Phi_{r_{d'-1}}\circ\cdots\circ \Phi_{r_{d'-\lceil h'\ln(n)\rceil+1}}$.
Also, let $f'=f\circ \Phi_{d'}\circ \Phi_{d'-1}\circ\cdots\circ \Phi_{d'-\lceil h'\ln(n)\rceil+1}$.
With probability at least $1-O(n^{-b})$, $f'$ is an AND or OR of $-b/h'\ln\theta+1$ or fewer inputs and negations of inputs.
\end{corollary}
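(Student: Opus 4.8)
The plan is to derive this directly from Lemma~\ref{restrictLem} after one structural observation. First I would note that the family of AND/OR gates is closed under the projections $\Phi_i$: when we compose $f$ with $\Phi_{d'}\circ\cdots\circ\Phi_{d'-\lceil h'\ln n\rceil+1}$, each projection either fixes an input to a constant in $\{0,1\}$ or identifies two inputs with a common underlying variable. Fixing a literal of an AND (resp.\ OR) either kills the gate (it becomes the constant $0$, resp.\ $1$) or deletes that literal, while identifying two inputs merely collapses them into a single literal. In every case $f'$ is again an AND or an OR of the literals on exactly the set of original inputs that still affect it, with a constant counting as the degenerate AND/OR of zero inputs. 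Consequently the fan-in of $f'$ equals the number of inputs that affect $f'$, so it suffices to bound this latter quantity, which is precisely what the lemma controls.

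Next I would invoke Lemma~\ref{restrictLem} with $h=\lceil h'\ln n\rceil$. The number of inputs that ever affect $f$ is at most $|L_{d'}|=k^{d'}\le k^{d}=n$, so we may take $m\le n$, and the lemma then says that $f'$ depends on $c$ or more inputs with probability at most $(m\theta^{h})^{c}$. The one quantitative fact I need is that $h=\Theta(\log n)$ turns $\theta^{h}$ into a negative power of $n$: since $0<\theta<1$ and $h\ge h'\ln n$, we have $\theta^{h}\le\theta^{h'\ln n}=n^{h'\ln\theta}$ with $\ln\theta<0$.

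Plugging in, the failure probability is at most $\big(n\cdot n^{h'\ln\theta}\big)^{c}=n^{(1+h'\ln\theta)c}$. Choosing $c=\lceil -b/(h'\ln\theta)\rceil+1$, which is one more than the threshold in the statement, Lemma~\ref{restrictLem} guarantees that $f'$ depends on at most $-b/(h'\ln\theta)+1$ inputs except with probability at most $n^{(1+h'\ln\theta)c}$. For $h'$ taken large enough that $1+h'\ln\theta<0$, a short computation shows $(1+h'\ln\theta)c\le -b$ — the extra ``$+1$'' in the threshold supplying exactly the slack needed to absorb the $m\le n$ factor — so this probability is $O(n^{-b})$, which combined with the first paragraph gives the claim.

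The routine but load-bearing point, and the place I would be most careful, is the structural closure claim of the first step: a projection does not merely zero out inputs but can also \emph{merge} distinct inputs of $f$ into a single variable, and I must check that both effects only decrease the fan-in and keep the gate an AND/OR, so that ``number of inputs affecting $f'$'' — the quantity Lemma~\ref{restrictLem} bounds — genuinely coincides with the fan-in I want to control. The only other delicate point is the exponent bookkeeping, where the crude bound $m\le n$ forces $h'$ to be taken sufficiently large relative to $1/|\ln\theta|$ for the net exponent to be negative; this is harmless since $h'$ is a free constant that will ultimately be chosen when reducing the depth of the full $\mathbf{AC}^0$ circuit.
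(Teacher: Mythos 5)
Your structural observation (AND/OR gates are closed under the projections, and the resulting fan-in is exactly the number of surviving inputs) is correct and matches the paper, as does the idea of finishing via Lemma~\ref{restrictLem}. But there is a genuine gap in how you feed the lemma: you bound the number of relevant inputs by $m\le n$, and then the lemma's failure probability $(m\theta^{h})^{c}\le\bigl(n^{1+h'\ln\theta}\bigr)^{c}$ is only nontrivial when $1+h'\ln\theta<0$, i.e.\ when $h'>1/|\ln\theta|$ (and your own arithmetic in fact needs $h'$ even larger, depending on $b$). You dismiss this as harmless because ``$h'$ is a free constant,'' but it is not: the corollary is stated for arbitrary positive constants $b,h'$, and in its downstream application the total budget of $d=\log_k n$ projections is split into $b$ blocks, forcing $h'\approx 1/(b\ln k)$ \--- a fixed small constant that cannot be inflated. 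Your argument would therefore only cover $\theta<k^{-b}$ or so, whereas Theorem~\ref{thm:ac0main} must hold for all $0<\theta<1$. Moreover no choice of $c$ rescues the bound: when $m\theta^{h}\ge 1$ the quantity $(m\theta^{h})^{c}$ never becomes small.

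The missing idea is the paper's preliminary case split. If the gate has fan-in larger than $C\log n$ (the paper uses $2b\ln n$), then with probability $1-o(n^{-b})$ some literal is fixed to the absorbing value (a $0$ under an AND, a $1$ under an OR) by the projections, so $f'$ collapses to a constant and the conclusion holds trivially. Otherwise the gate has only $m=O(\log n)$ relevant inputs, and \emph{then} Lemma~\ref{restrictLem} gives $(m\theta^{h})^{c}\le\bigl(O(\log n)\cdot n^{h'\ln\theta}\bigr)^{c}=O(n^{-b})$ for $c=\lceil -b/(h'\ln\theta)\rceil+1$, with the ``$+1$'' slack now only needing to absorb a polylogarithmic factor rather than a factor of $n$. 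This works for every positive $h'$, which is what the statement and its application require.
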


\begin{proof}
First of all, observe that if $f$ takes an AND/OR of more than $2b\ln(n)$ variables, one of the projections will set one of its inputs to the value that reduces the function to a constant with probability $1-o(n^{-b})$. Otherwise, the desired conclusion follows by the lemma and the fact that a projection of an AND or OR must still be an AND or OR.
\end{proof}

Now that we know that applying a logarithmic number of projections to the circuit will reduce the fan-in of all gates in the bottom layer to a constant with high probability, our next step is to prove that one more projection is enough to reduce all gates on the second layer to decision trees of logarithmic depth. In order to do that, we will need to prove that the projection of one of these gates can be represented by a decision tree of height $O(\log(n))$ with probability $1-n^{\Omega(1)}$. In other words, we need:

\begin{lemma}\label{decisionTree}
Let $w$ be a positive constant. There exists a constant $h>0$ such that if $p>0$ is a function of $n$ and $f$ is a $w$-DNF on $\{0,1\}^{L_{d'}}$ then $f\circ\Phi_{d'}$ can be represented as a decision tree of height at most $h\ln(p)$ with probability $1-O(1/p)$.
\end{lemma}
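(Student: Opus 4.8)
The plan is to reduce the statement to a single geometric tail bound and then establish that bound by adapting Håstad's switching lemma to the correlated restriction $\Phi_{d'}$. Concretely, the conclusion is equivalent to producing constants $C$ and $0<\gamma<1$ (depending only on $w$, $\theta$ and $k$) with $\mathbb{P}[\,\mathrm{depth}(f\circ\Phi_{d'})\ge t\,]\le C\gamma^{t}$ for every $t$, where $\mathrm{depth}(\cdot)$ denotes decision-tree depth: setting $t=h\ln p$ with any $h>1/\ln(1/\gamma)$ then yields height $h\ln p$ and failure probability $C\gamma^{h\ln p}=C\,p^{-h\ln(1/\gamma)}=O(1/p)$. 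So the entire task is to prove such a geometric tail, uniformly over all $w$-DNFs $f$ (which may have polynomially many terms).

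To prove the tail I would use the canonical decision tree of the restricted DNF together with a Razborov-style encoding. Write $f=\bigvee_i T_i$ with each $T_i$ a conjunction of at most $w$ literals on $L_{d'}$. After drawing $r\sim R_\theta^{L_{d'}}$, the free variables of $f\circ\Phi_{d'}$ are the parents $x_u$ with $u\in L_{d'-1}$ having at least one $*$-child, and querying a single $x_u$ simultaneously fixes all kept children of $u$. Building the canonical tree by scanning the $T_i$ in order, a root-to-leaf path of length $\ge t$ forces a sequence of ``alive'' terms whose newly queried free-parent blocks $V_1,\dots,V_\ell$ are disjoint, have total size $\ge t$, and satisfy $\ell\ge t/w$. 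The encoding maps each bad $r$ injectively to a modified restriction plus $O(1)$ bits of bookkeeping per queried variable, so the number of auxiliary strings is at most $B^{t}$ for a constant $B=B(w)$. A convenient sanity check along the way is that merging can only decrease decision-tree depth (a tree for the un-merged $\theta$-restriction, in which each kept leaf is its own variable, simulates the merged one), which already settles the small-$\theta$ range via the ordinary switching lemma bound of the form $(c\,\theta w)^{t}$.

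The main obstacle is the regime where $\theta$ is bounded away from $0$, where the ordinary switching lemma is vacuous: converting the $\ge t$ surviving coordinates back to constants costs a probability ratio $\big(\tfrac{2\theta}{1-\theta}\big)^{\ge t}$, which exceeds $1$ once $\theta>1/3$, so the naive encoding gives nothing. The point that rescues the bound is a DNF-specific collapse phenomenon: a width-$w$ term is turned into the constant $1$ (trivializing $f\circ\Phi_{d'}$, hence forcing depth $0$) as soon as all of its literals are set to satisfying constants, and more generally each alive term survives only if it avoids being satisfied by the constant part of $r$. Since a path of length $\ge t$ needs $\ell\ge t/w$ alive terms, the simultaneous avoidance of these collapse events contributes a factor of order $\big(1-((1-\theta)/2)^{w}\big)^{t/w}$, which is $\gamma^{t}$ for a genuine $\gamma<1$ no matter how close $\theta$ is to $1$. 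The technical heart of the argument is therefore to fold this collapse-avoidance charge into the encoding---charging the satisfying-constant literals of the alive terms, rather than only the bare conversions of surviving coordinates into constants---so that the product of the auxiliary count $B^{t}$ and the corrected probability weight is bounded by $\gamma^{t}$ uniformly in $\theta\in(0,1)$. Verifying the injectivity of this modified encoding and checking that the two competing contributions combine to a base strictly below $1$ is the step I expect to be delicate; the bookkeeping for sibling merges (several children of one parent appearing in a single term, possibly with conflicting signs) is a secondary nuisance that only ever kills terms or shrinks blocks and so can only help.
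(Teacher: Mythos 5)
Your reduction of the lemma to a geometric tail bound $\mathbb{P}[\mathrm{depth}(f\circ\Phi_{d'})\ge t]\le C\gamma^{t}$ is valid, and you correctly identify the central obstacle: for $\theta$ bounded away from $0$ the standard Razborov encoding is vacuous because converting the $\ge t$ surviving coordinates back to constants costs a factor $\bigl(\tfrac{2\theta}{1-\theta}\bigr)^{t}$. The problem is that the rescue you propose does not survive its own arithmetic. The collapse\--avoidance factor you invoke is $\bigl(1-((1-\theta)/2)^{w}\bigr)^{t/w}\approx\exp\bigl(-\tfrac{t}{w}((1-\theta)/2)^{w}\bigr)$, which tends to $1$ as $\theta\to 1$, while the loss $\bigl(\tfrac{2\theta}{1-\theta}\bigr)^{t}B^{t}$ blows up polynomially in $(1-\theta)^{-1}$. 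Concretely, for $w=2$ and $\theta=0.99$ the product per queried variable is roughly $198\cdot B\cdot(1-2.5\cdot 10^{-5})^{1/2}>1$, so the combined base is not below $1$ and the encoding proves nothing. What actually saves the large\--$\theta$ regime is a per\--\emph{variable} (not per\--\emph{term}) accounting: each starred coordinate in an alive term must, relative to the conditional measure on ``not a falsifying constant,'' avoid being a satisfying constant, which costs a factor like $2\theta/(1+\theta)<1$ per star rather than $1-((1-\theta)/2)^{w}$ per term. So the step you defer as ``delicate'' is, as sketched, quantitatively broken and would need to be redesigned, not merely verified.

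For contrast, the paper avoids the switching\--lemma machinery entirely and argues by induction on the width $w$: either the $w$-DNF contains more than $((1-\theta)/2)^{-w}\ln p$ pairwise disjoint terms, in which case $\Phi_{d'}$ satisfies one of them outright with probability $1-O(1/p)$ and the function collapses to a constant, or else a maximal disjoint family yields a hitting set of $O(\ln p)$ variables whose querying reduces every branch to a $(w-1)$-DNF, to which the induction hypothesis is applied with a union bound over the $p^{O(1)}$ assignments. This gives a much worse (but sufficient) dependence of $h$ on $w$ and $\theta$, in exchange for completely sidestepping the encoding difficulties you ran into. If you want to pursue your stronger tail bound, the disjoint\--terms special case (where a generating\--function/Chernoff argument over independent terms works) is doable, but extending it to terms with shared variables is exactly where a correct replacement for your collapse\--avoidance charge would have to live.
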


\begin{proof}
We proceed by induction on $w$. If $w=0$, then every $w$-DNF is a constant function, and is thus expressible as a decision tree of height $0$. Now, assume this result holds for $w-1$. If $f$ is a $w$-DNF with more than $(\frac{1-\theta}{2})^{-w}\ln(p)$ clauses that do not share any variables, then with probability $1-O(1/p)$, the projection sets at least one of these clauses to $1$, with the result that $f$ becomes a constant function. Otherwise, there exists a set of at most $$w\cdot \left(\frac{1-\theta}{2}\right)^{-w}\ln(p)$$ variables such that at least one of these variables appears in every clause. As such, any assignment of values to these variables would reduce $f$ to a $(w-1)$-DNF. By the induction hypothesis, there exists a constant $h'$ such that composing any resulting $(w-1)$-DNF with $\Phi_{d'}$ yields a decision tree of height at most $h'\ln(p)$ with probability at least $$1-O(p^{-1-\ln(2)w\cdot(\frac{1-\theta}{2})^{-w}})$$ That means that all assignments of values to these variables reduce $f\circ\Phi_{d'}$ to a decision tree of depth $h'\ln(p)$ with probability at least $1-O(1/p)$. Therefore, $f\circ\Phi_{d'}$ can be represented as a decision tree of depth 
$$\left[w\cdot \left(\frac{1-\theta}{2}\right)^{-w}+h'\right]\ln(p)$$ with probability $1-O(1/p)$. This completes the proof.
\end{proof}

%\fbox{Note: In this lemma $h\approx(\frac{1-\theta}{2})^{-w^2/2}$.}

At this point, we know that applying a logarithmic number of projections is enough to reduce every gate on the second level of an $\mathbf{AC}^0$ circuit to a decision tree of logarithmic depth with high probability. Any such decision tree can be computed by a polynomial size AND of ORs and by a polynomial size OR of ANDs. So, we can replace it by whichever allows us to reduce the circuit depth by 1. We are applying $\Omega(\log(n))$ projections in total, so if the circuit has depth $b$ we can divide the projections into $b$ serieses of $\Omega(\log(n))$ projections each. That is enough to reduce the entire circuit to a decision tree of logarithmic depth with a logarithmic number of projections left over. In fact, we can prove that with high probability, the depth of the decision tree is low enough that it must be unaffected by the values of most of the variables. Then, we can show that the remaining projections set all of the variables that the output does depend on to constants with high probability. As such, we can prove:

\begin{lemma}
Let $f: \{0,1\}^{L_d}\rightarrow \{0,1\}$ be in $\mathbf{AC}^0$. Then there exists $\delta>0$ such that $f\circ \Phi_{d}\cdots\circ \Phi_{1}$ is a constant function with probability $1-O(n^{-\delta})$.
\end{lemma}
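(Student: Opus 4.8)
The plan is to spend the $d=\log_k n$ projections $\Phi_d\circ\cdots\circ\Phi_1$ in $D+1$ consecutive stages, where $D$ is the constant depth of the $\mathbf{AC}^0$ circuit for $f$: the first $D$ stages each peel off one level of the circuit, and the last stage collapses what remains to a constant. Recall that composing $f$ with $\Phi_d$ turns it into a function of the level-$(d-1)$ variables, so we spend projections from the leaves inward, and a function surviving the first $d-d^\ast$ projections is a function of the $k^{d^\ast}$ variables at some level $d^\ast$. Writing $n^{s}$ for a size bound, I would allot each of the $D$ peeling stages a block of $\lceil h'\ln n\rceil+1$ projections and reserve a final block of $d^\ast$ projections, choosing the rate $h'$ so that $D(\lceil h'\ln n\rceil+1)+d^\ast\le d$.

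Each peeling stage works in two moves. First, the Corollary following Lemma~\ref{restrictLem}, applied with $\lceil h'\ln n\rceil$ projections and failure exponent larger than $s$, simultaneously reduces every bottom AND/OR gate to a constant fan-in $w$; a union bound over the at most $n^{s}$ bottom gates keeps the total failure polynomially small. The bottom two levels are now $w$-DNFs (or $w$-CNFs), so one additional projection together with Lemma~\ref{decisionTree}, applied with $p=n^{\tau}$ for a suitable $\tau$, rewrites each as a decision tree of height $O(\log n)$, again with polynomially small failure after a union bound. Re-expressing each decision tree as a DNF or a CNF to match the gate above merges two levels and lowers the depth by one, and it is routine to check that the number of gates stays polynomial and that the new bottom fan-in is only $O(\log n)$, which the Corollary absorbs in the next stage; crucially the height returned by Lemma~\ref{decisionTree} is $O(\log p)$ irrespective of the previous height, so heights do not accumulate across stages. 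After $D$ stages, with probability $1-O(n^{-\delta_1})$ the whole circuit is a single decision tree $T$ of height $H\le\kappa\log_2 n$ over the level-$d^\ast$ variables.

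For the final block I would invoke Lemma~\ref{restrictLem}: $T$ depends on at most $2^{H}\le n^{\kappa}$ of the level-$d^\ast$ variables, and after applying the remaining $d^\ast$ projections each such variable is set to a constant unless it is copied all the way to the root, which happens with probability $\theta^{d^\ast}$. Taking $m=n^{\kappa}$, $h=d^\ast$ and $c=1$ in Lemma~\ref{restrictLem}, the surviving function depends on no variable — that is, is constant in the root bit — except with probability at most $n^{\kappa}\theta^{d^\ast}=n^{\kappa-d^\ast\ln(1/\theta)/\ln n}$. A union bound over all stage failures and this last estimate then gives that $f\circ\Phi_d\circ\cdots\circ\Phi_1$ is constant except with probability $O(n^{-\delta})$.

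The crux — and the step I expect to be the main obstacle — is making all of these constants fit inside the hard budget of only $\log_k n$ projections. The two demands pull against each other: driving the post-Corollary width $w$ down keeps the decision-tree height constant $\kappa$ (hence the number $n^{\kappa}$ of live variables) small, but by the Corollary this forces the rate $h'$ up and so eats into the budget, whereas a small $h'$ inflates $w$ and, through the $w\mapsto h_{\mathrm{DT}}(w)$ dependence in Lemma~\ref{decisionTree}, inflates $\kappa$; meanwhile the final estimate is nontrivial only when $d^\ast\ln(1/\theta)>\kappa\ln n$, i.e.\ when the reserved block is long enough to kill all $n^{\kappa}$ live variables. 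One must therefore exhibit a choice of $h'$, $\tau$, $w$, $\kappa$ and $d^\ast$ — all functions of the fixed quantities $D$, $s$, $k$ and $\theta$ — for which $D h'+\kappa/\ln(1/\theta)<1/\ln k$ holds with room to spare; threading this inequality, rather than any individual switching estimate, is where essentially all of the difficulty lies.
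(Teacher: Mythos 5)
Your decomposition is the same as the paper's: peel off one circuit level per block of $\Theta(\log n)$ projections using the fan-in--reduction corollary followed by one application of Lemma~\ref{decisionTree} and a DNF/CNF switch, then kill the surviving decision tree with the leftover projections via Lemma~\ref{restrictLem}. The one step you leave open --- threading the constants so that $Dh'+\kappa/\ln(1/\theta)<1/\ln k$ --- is not actually the obstacle you fear, because the two quantities you see as coupled are in fact decoupled: in Lemma~\ref{decisionTree} the height is $h\ln(p)$ where the constant $h=h(w)$ does depend on the width $w$, but $p$ is a free parameter, so taking $p=n^{\delta_2}$ with $\delta_2>0$ arbitrarily small makes the height constant $\kappa=h\delta_2$ as small as you like \emph{regardless} of how large $w$ (and hence $h$) became from choosing a small rate $h'$; the only price is a failure probability of $O(n^{-\delta_2})$, which is still polynomially small. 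The paper closes the argument in exactly this way, reserving a constant fraction $d/b$ of the projections for the last block and choosing $\delta_2<-\ln\theta/[b(h+1)\ln k]$ so that $\theta^{d/b}n^{h\delta_2}=O(n^{-\delta})$; with that observation your outline is complete.
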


\begin{proof}
First, let $f^{(0)}=f$ and $f^{(i+1)}=f^{(i)}\circ\Phi_{d-i}$ for each $i$. Also, let $b$ be the depth of $f$, and $\delta_1>0$ be a constant. We claim that $f^{(\lfloor i d/b\rfloor)}$ can be expressed as a polynomial-sized circuit of depth $b-i$ with probability $1-O(n^{-\delta_1})$ for each $0\le i<b-1$, and prove this by induction on $i$. This is clearly true for $i=0$. For $i>0$, if $f^{(\lfloor (i-1) d/b\rfloor)}$ can be expressed as a polynomial-sized circuit of depth $b-i+1$, then by corollary $1$ there exists a constant $c_i$ such that $f^{(\lfloor i d/b\rfloor-1)}$ can be expressed as a polynomial-sized circuit of depth $b-i+1$ in which every gate at the bottom level has fanin at most $c_i$ with probability $1-O(n^{-\delta_1})$. Then by lemma \ref{decisionTree}, composing this with $\Phi_{d-\lfloor i d/b\rfloor+1}$ allows us to replace all gates two levels from the bottom with decision trees of depth $O(\ln(n))$ with probability $1-O(n^{-\delta_1})$. Every such decision tree can be converted to a DNF or CNF of size polynomial in $n$, so we can apply this transformation to all such gates in order to switch the order of the ORs and ANDs, thus allowing us to reduce the depth of the circuit by $1$. Thus, $f^{(\lfloor i d/b\rfloor)}$ can be expressed as a polynomial-sized circuit of depth $b-i$ with probability $1-O(n^{-\delta_1})$, as desired.

That leaves us with the conclusion that $f^{(\lfloor(b-2)i d/b\rfloor)}$ can be expressed as a polynomial-sized circuit of depth $2$ with probability $1-O(n^{-\delta_1})$. Then, by another application of corollary $1$, we have that $f^{(\lfloor(b-1)i d/b\rfloor-1)}$ can be expressed as a DNF or CNF of constant fanin with probability $1-O(n^{-\delta_1})$. Then, by lemma \ref{decisionTree}, we have that there exists a constant $h$ such that $f^{(\lfloor(b-1)i d/b\rfloor)}$ can be expressed as a decision tree of depth $h\delta_2 \log_2(n)$ with probability $1-O(n^{-\delta_2})$ for any $\delta_2>0$. Such a decision tree can only be affected by $n^{h\delta_2}$ variables, so by lemma \ref{restrictLem}, $f^{(d)}$ is a constant function with probability $1-O(n^{-\delta_1}+n^{-\delta_2}+\theta^{d/b}n^{h\delta_2})$. For $\delta_2<-\ln\theta/[b(h+1)\ln(k)]$ and $\delta\le\min(\delta_1,\delta_2)$, that means that $f^{(d)}$ is a constant with probability $1-O(n^{-\delta})$.
\end{proof}

Recall that for any fixed value of $X^{(0)}$, the probability distribution of $\Phi_{d}\cdots\circ \Phi_{1}(X^{(0)})$ is identical to the probability distribution of $X^{(d)}$. So, the probability that $f(X^{(d)})=X^{(0)}$ is the same as the probability that $f\circ \Phi_{d}\cdots\circ \Phi_{1}(X^{(0)})=X^{(0)}$. That means that the fact that $f\circ \Phi_{d}\cdots\circ \Phi_{1}$ is probably a constant implies that $f$ is only accurate about half of the time. 

Now we are ready to prove Theorem~\ref{thm:ac0main}:

\begin{proof}
Let $(r_1,\cdots,r_d)$ be bad if $f\circ \Phi_{d}\cdots\circ \Phi_{1}$ is a constant function and good otherwise. Then we have that
\begin{align*}
\mathbb{P}[f(X^{(d)})=X^{(0)}] &=\mathbb{P}[f\circ \Phi_{d}\cdots\circ \Phi_{1}(X^{(0)})=X^{(0)}]\\
&\le 1/2+\mathbb{P}[(r_1,\cdots,r_d) \text{ is good}]/2\\
&=1/2+O(n^{-\delta})
\end{align*}
which completes the proof.
\end{proof}

\begin{corollary}
For every $c>0$, there is no function in $\mathbf{AC}^0$ that computes whether more than half of its inputs are $1$ whenever at least $n/2+n^{1-c}/2$ of its inputs are the same.
\end{corollary}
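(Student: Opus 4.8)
The plan is to derive this corollary from Theorem~\ref{thm:ac0main} by a reduction. I would assume, for contradiction, that for some $c>0$ there is a function $g\in\mathbf{AC}^0$ that correctly reports whether more than half of its $n$ inputs are $1$ on every input in which at least $n/2+n^{1-c}/2$ of the coordinates agree, and then feed $g$ the leaves of a carefully chosen broadcast tree to obtain an $\mathbf{AC}^0$ detection function, contradicting Theorem~\ref{thm:ac0main}.

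First I would fix the parameters of the broadcast model. The goal is to find constants $k$ and $\theta$ with $0<\theta<1$ such that (i) $k\theta^2$ is an arbitrarily large constant, so that Lemma~\ref{lem:deviation} gives a strong one-sided tail bound, and (ii) $\log_k\theta>-c$, so that the natural fluctuation scale of the leaf-sum strictly dominates the promise gap $n^{1-c}$. Writing $\theta=k^{-\alpha}$ one has $k\theta^2=k^{1-2\alpha}$ and $\log_k\theta=-\alpha$, so it suffices to choose any $\alpha<\min(c,1/2)$ and then take $k$ large; this is possible for every $c>0$. With these parameters I set $n=k^d$ and apply $g$ to $X^{(d)}$.

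Conditioned on $X^{(0)}=1$, let $A$ be the event $\sum_i X^{(d)}_i\ge n/2+n^{1-c}/2$. On $A$ the promise holds and the majority label equals $1$, so $g=1$ is correct, and the symmetric statement holds conditioned on $X^{(0)}=0$. Hence $\mathbb{P}[g(X^{(d)})=X^{(0)}]\ge \mathbb{P}\big[\sum_i X^{(d)}_i\ge n/2+n^{1-c}/2 \;\big|\; X^{(0)}=1\big]$. I would lower-bound the right-hand side in two steps. By Lemma~\ref{lem:deviation}, $\mathbb{P}[\sum_i X^{(d)}_i\le n/2\mid X^{(0)}=1]\le 1/(k\theta^2-1)$, a small constant since $k\theta^2$ is large. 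It then remains to control the mass of $\sum_i X^{(d)}_i$ in the narrow window $[n/2,\,n/2+n^{1-c}/2]$: the standard deviation of the leaf-sum is of order $n^{1+\log_k\theta}=n^{1-\alpha}$, which strictly dominates the window width $n^{1-c}/2$ exactly because $\alpha<c$, so by anti-concentration of the (normalized) magnetization in the regime $k\theta^2>1$ this window carries mass that is $o(1)$, and in any case bounded away from $1/2$. Combining the two bounds yields $\mathbb{P}[g(X^{(d)})=X^{(0)}]\ge 1-1/(k\theta^2-1)-o(1)>1/2+\Omega(1)$, so $g$ restricted to inputs of size $k^d$ is an $\mathbf{AC}^0$ detection function for the Ising model on trees, contradicting Theorem~\ref{thm:ac0main}.

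The main obstacle is this last step. Because the broadcast model operates at a constant signal-to-noise ratio (the mean excess $n\theta^d/2$ and the standard deviation of the leaf-sum are both of order $n^{1-\alpha}$), one cannot simply arrange for the promise to hold with probability $1-o(1)$, and Lemma~\ref{lem:deviation} by itself does not bound the probability mass sitting just above $n/2$. Controlling the window $[n/2,\,n/2+n^{1-c}/2]$ is what forces the separation $\log_k\theta>-c$ and requires an anti-concentration estimate for $\sum_i X^{(d)}_i$; this is precisely where the choice $\alpha<\min(c,1/2)$ is used, and it is the technical heart of the argument. I would note that the requirement is mild, since only a window mass bounded away from $1/2$ (rather than $o(1)$) is actually needed to reach a contradiction.
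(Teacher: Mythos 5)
Your reduction is the same one the paper uses: the paper's entire proof is the observation that for a suitable choice of $0<\theta<1$ and $k$, more than $n/2+n^{1-c}/2$ of the leaves agree with $X^{(0)}$ with probability at least $2/3$, so a circuit satisfying the promise would be a detection function, contradicting Theorem~\ref{thm:ac0main}. Your parameter choice $\theta=k^{-\alpha}$ with $\alpha<\min(c,1/2)$ and $k$ large is exactly the right one, since it makes the drift $n\theta^d/2=n^{1-\alpha}/2$ dominate the promise gap $n^{1-c}/2$ while keeping $k\theta^2=k^{1-2\alpha}$ large.

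However, the step you yourself flag as the ``technical heart'' \emph{--} bounding the mass of $\sum_i X^{(d)}_i$ in the window $[n/2,\,n/2+n^{1-c}/2]$ via ``anti-concentration of the normalized magnetization'' \emph{--} is a genuine gap as written, and also an unnecessary detour. A large standard deviation does not by itself imply small mass in a narrow window; an actual L\'evy-concentration or absolute-continuity statement for the limiting magnetization would be needed, and nothing of the sort is established in the paper. The correct and much simpler move is to not split at $n/2$ at all: the conditional mean of the leaf-sum given $X^{(0)}=1$ is $n/2+n^{1-\alpha}/2$, so the shifted threshold $n/2+n^{1-c}/2$ still lies $(1-o(1))\,n^{1-\alpha}/2$ below the mean. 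Running the same second-moment/Chebyshev computation that proves Lemma~\ref{lem:deviation}, but against this shifted threshold, gives
\[
\mathbb{P}\Bigl[\textstyle\sum_i X^{(d)}_i\le n/2+n^{1-c}/2\,\Big|\,X^{(0)}=1\Bigr]\le \frac{(1+o(1))(1-\theta^2)}{\theta^2k-1},
\]
which is at most $1/3$ for $k\theta^2$ large, and the window issue never arises. With that replacement your argument is complete and coincides with the paper's.
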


\begin{proof}
For any such $c$, there is a choice of $0\le\theta<1$ and $k>0$ such that more than $n/2+n^{1-c}/2$ of the entries in $X^{(d)}$ will equal $X^{(0)}$ with a probability of at least $2/3$. So, any such function would be capable of computing $X^{(0)}$ from $X^{(d)}$ with nontrivial accuracy.
\end{proof}

\section{Computing the posterior in $\mathbf{NC}^1$}\label{app:nc1part1}
 
 Here we prove that there is an $\mathbf{NC}^1$ circuit for computing the posterior. This is the first part of Theorem~\ref{thm:nc1main}.
 
 More formally, we claim the following.

\begin{lemma}
For every $-1<\theta<1$ and positive integer $k$, there exists $h>0$ such that for every $d'\ge 0$, there exists a probability distribution $P_{F}^{d'}$ over functions from $\{0,1\}^{L_{d'}}\rightarrow \{0,1,?\}$ with the following properties: 
\begin{itemize}
    \item Every function drawn from $P_{F}^{d'}$ can be computed by an NC circuit of depth at most $hd'$.
    \item For every $x\in \{0,1\}^{L_{d'}}$, if $F\sim P_{F}^{d'}$ then $\mathbb{P}[F(x)\in \{0,1\}]\ge 1-1/2k$.
    \item $\mathbb{P}[F(x)=1|F(x)\in\{0,1\}]=\mathbb{P}[X^{(0)}=1|X^{(d')}=x]$
\end{itemize}

\end{lemma}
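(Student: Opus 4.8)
The plan is to induct on $d'$, at each level constructing a randomized function that samples the root posterior by a local \emph{rejection/voting} scheme rather than by computing posterior probabilities directly; the latter is exactly what we want to avoid, since multiplying $O(d')$ posteriors in bounded depth invites rounding errors. For the base case $d'=0$ the tree is the single vertex $\rho$, so $L_0=\{\rho\}$ and we take $F(x)=x$: this has depth $0$, never outputs $?$, and satisfies the posterior identity trivially because $X^{(0)}=X^{(d')}$. For the inductive step I assume $P_F^{d'-1}$ with the three properties and build $P_F^{d'}$. Write the root's children as $v_1,\dots,v_k$, let $x^{(i)}$ be the restriction of $x$ to the leaves below $v_i$, and recall that conditioned on $\sigma_\rho=a$ the $k$ subtrees are independent, with transition $\mathbb{P}[\sigma_{v_i}=b\mid\sigma_\rho=a]=\theta\,\mathbb{1}(a=b)+\tfrac{1-\theta}{2}$; since the transition matrix is doubly stochastic, each $v_i$ is a uniform-prior root of a depth-$(d'-1)$ subtree, so the inductive hypothesis applies to it.

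I would first define an intermediate randomized function $G$. Draw independent $F_1,\dots,F_k\sim P_F^{d'-1}$ and, independently, for each $i$ a fresh uniform bit $u_i$ together with a three-way ``vote coin.'' On input $x$ set $s_i=F_i(x^{(i)})$ and output $?$ if any $s_i=?$. Otherwise form a vote $V_i$ equal to $s_i$ with probability $\max(\theta,0)$, equal to $1-s_i$ with probability $\max(-\theta,0)$, and equal to $u_i$ with probability $1-|\theta|$ (the flip branch is what lets the construction work for negative $\theta$). Finally, output the common value if $V_1=\dots=V_k$ and output $?$ if the votes disagree. A one-line check shows that in all cases $\mathbb{P}[V_i=a]=\theta\,L_i(a)+\tfrac{1-\theta}{2}$, where $L_i(a):=\mathbb{P}[\sigma_{v_i}=a\mid x^{(i)}]$ is the subtree posterior that $s_i$ is distributed according to, conditioned on $s_i\ne?$, by the inductive hypothesis.

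The crucial point is that $G$, conditioned on not emitting $?$, samples the root posterior exactly. Conditioned on all $F_i\ne?$ the votes are independent, so $\mathbb{P}[G(x)=a]\propto\prod_i\bigl(\theta L_i(a)+\tfrac{1-\theta}{2}\bigr)$. On the other hand, writing $\ell_i(b)=\mathbb{P}[x^{(i)}\mid\sigma_{v_i}=b]$ and using $\mathbb{P}[x^{(i)}\mid\sigma_\rho=a]=\sum_b\bigl(\theta\,\mathbb{1}(a=b)+\tfrac{1-\theta}{2}\bigr)\ell_i(b)=\bigl(\ell_i(0)+\ell_i(1)\bigr)\bigl(\theta L_i(a)+\tfrac{1-\theta}{2}\bigr)$ together with the uniform prior on $\sigma_\rho$ and the independence of the subtrees, Bayes gives $\mathbb{P}[X^{(0)}=a\mid X^{(d')}=x]\propto\prod_i\bigl(\theta L_i(a)+\tfrac{1-\theta}{2}\bigr)$, the same quantity. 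Hence $\mathbb{P}[G(x)=a\mid G(x)\ne?]=\mathbb{P}[X^{(0)}=a\mid X^{(d')}=x]$, establishing the third property for $G$.

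It remains to drive down the probability of $?$ and to account for the depth. Because $k$ and $\theta$ are fixed constants, $\mathbb{P}[G(x)\ne?]\ge\bigl(1-\tfrac1{2k}\bigr)^k\bigl(\tfrac{1-|\theta|}{2}\bigr)^k=:p_0>0$ uniformly in $x$ and $d'$ (the first factor bounds the probability that every $F_i$ succeeds, the second bounds the probability that all votes equal $0$). To amplify, I define $F$ by running a constant number $m$ of independent copies $G^{(1)},\dots,G^{(m)}$ and returning the first value in $\{0,1\}$, or $?$ if all return $?$; choosing $m$ with $(1-p_0)^m\le\tfrac1{2k}$ yields $\mathbb{P}[F(x)\ne?]\ge1-\tfrac1{2k}$, and since the $G^{(j)}$ are i.i.d.\ the first successful output retains the posterior law. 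The $mk=O(1)$ recursive calls act on disjoint inputs at depth $h(d'-1)$, while the coin logic, the agreement test, and the selection among the $m$ copies are all constant depth, giving $\mathrm{depth}(d')\le h(d'-1)+O(1)\le hd'$ for $h$ a large enough constant. I expect the main obstacle to be precisely the exactness of the posterior identity in the third paragraph: the entire construction is engineered so that the trivially-checkable event ``all $k$ votes agree'' occurs with probability exactly proportional to the posterior, which sidesteps any explicit arithmetic on small probabilities, and verifying this proportionality is the heart of the argument.
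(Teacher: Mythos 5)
Your proposal is correct and follows essentially the same route as the paper's proof: induct on $d'$, recursively sample each child's posterior, re-randomize through the parent--child channel (your three-way vote coin is equivalent to the paper's XOR with a $\mathrm{Bernoulli}\bigl(\tfrac{1-\theta}{2}\bigr)$ bit, which also handles $\theta<0$), accept only when all $k$ votes agree so that the acceptance probability is proportional to the product $\prod_i\bigl(\theta L_i(a)+\tfrac{1-\theta}{2}\bigr)$ matching the Bayes posterior, and then amplify with a constant number of independent repetitions to push the $?$-probability below $\tfrac{1}{2k}$ while adding only $O(1)$ depth per level. The proportionality check you flag as the heart of the argument is exactly the step the paper relies on as well.
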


\begin{proof}
We proceed by induction on $d'$. For $d'=0$, we can always return the function $f$ such that $f(0)=0$ and $f(1)=1$. Now, assume that this holds for $d'-1$. Prior to defining $P_{F}^{d'}$, we will define a preliminary probability distribution $P_{F}^{d'\star}$, such that in order to draw a function $F^\star$ from $P_{F}^{d'\star}$, we do the following. First, draw $F_1,\cdots,F_k$ independently from $P_{F}^{d'-1}$. Also, independently choose $\delta_1,\cdots,\delta_k$ such that for each $i$, $\delta_i$ is $1$ with probability $(1-\theta)/2$ and $0$ otherwise. 

If there exists $i$ such that $F_i(x(L_{d'-1}(i)))
\not\in\{0,1\}$, then $F^\star(x)='?'$. Otherwise, let $x^\star_i=F_i(x(L_{d'-1}(i)))$ for each $i$. Then, set $F^\star(x)$ equal to $0$ if $x^\star_i\text{  xor }\delta_i=0$ for all $i$, set it to $1$ if  $x^\star_i\text{  xor } \delta_i=1$ for all $i$, and set it to $'?'$ otherwise.
%If there exists $i$ such that $F_i(x_{(i-1) k^{d'-1}+1},\cdots,x_{i k^{d'}})\not\in\{0,1\}$, then $F^\star(x)='?'$. Otherwise, let $x^\star_i=F_i(x_{(i-1) k^{d'-1}+1},\cdots,x_{i k^{d'}})$ for each $i$. Then, set $F^\star(x)$ equal to $0$ if $x^\star_i\text{  xor }\delta_i=0$ for all $i$, set it to $1$ if  $x^\star_i\text{  xor } \delta_i=1$ for all $i$, and set it to $'?'$ otherwise

For any fixed value of $x$, when $F^\star\sim P_{F}^{d'\star}$, the values of the $x_i^\star$ are independent. As such, 
\[\mathbb{P}[F^\star(x)=0]=\prod_{i=1}^k \left[\left(\frac{1+\theta}{2}\right)\mathbb{P}[x^\star_i=0]+\left(\frac{1-\theta}{2}\right) \mathbb{P}[x^\star_i=1]\right]\]
and
\[\mathbb{P}[F^\star(x)=1]=\prod_{i=1}^k \left[\left(\frac{1+\theta}{2}\right)\mathbb{P}[x^\star_i=1]+\left(\frac{1-\theta}{2}\right) \mathbb{P}[x^\star_i=0]\right]\]
By the requirement that $\mathbb{P}[F_i(x)\in \{0,1\}]\ge 1-1/2k$, the $x^\star$ are all assigned values in $\{0,1\}$ with probability at least $1/2$, which also implies that 
\[\mathbb{P}[F^\star(x)\in\{0,1\}]\ge [(1-\theta^2)/4]^{k/2}\]
%Now, if we compute $F^\star(X^{(d')})$, then for every $x\in\{0,1\}^{k^{d'}}$ and each $i$, it will be the case that 
By the induction hypothesis, 
\[\mathbb{P}[x^\star_i=1|x^\star_i\ne '?' ,X^{(d')}=x]=\mathbb{P}[X^{(1)}_i=1|(X_{(i-1) k^{d'-1}+1},\cdots,X_{i k^{d'}})=(x_{(i-1) k^{d'-1}+1},\cdots,x_{i k^{d'}})]\] 
This implies that
\[\mathbb{P}[F^\star(x)=1|F^\star(X^{(d')})\in\{0,1\}]=\mathbb{P}[X^{(0)}=1|X^{(d')}=x]\]
Furthermore, $F^\star(x)$ can be computed from the values output by the $F_i$ by an NC circuit of some constant depth. 

So, $P_{F}^{d'\star}$ has all of the properties that we want, except that its functions return $'?'$ with excessively high probability. So, in order to draw a function from $P_{F}^{d'}$, we simply draw $\lceil[(1-\theta^2)/4]^{-k/2}\ln(2k)\rceil$ independent functions from $P_{F}^{d'\star}$. Then, we compute them all on the input we are given, and return the first output in $\{0,1\}$, if any. This leaves the relative probability of returning $0$ and $1$ unchanged, reduces the probability of returning $'?'$ to $1/2k$ or less, and only increases the circuit depth by a constant. So, $P_{F}^{d'}$ has all of the desired properties.
\end{proof}

That means that we have randomized $\mathbf{NC}^1$ circuits that can essentially draw a sample from the probability distribution of $X^{(0)}$ given that $X^{(d)}=x$, with the complication that they occasionally fail to return a value in $\{0,1\}$. So, if we use a large number of them, we can count up how many of them return $1$ and how many return $0$ in order to estimate $\mathbb{P}[X^{(0)}=1|X^{(d)}=x]$. With enough of these circuits, this estimate will be within $\delta/2$ of the true probability at least $1-o(2^{-n})$ of the times, which means that there must be some choice of the randomness for which it is always right. That allows us to prove:

\begin{proposition}
For every $k$ and $\theta$ and in the Ising tree model, there is a posterior function that can be computed by an $\mathbf{NC}^1$ circuit. 
\end{proposition}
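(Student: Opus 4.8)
The plan is to amplify the randomized ``posterior sampler'' of the preceding lemma into a single deterministic $\mathbf{NC}^1$ circuit by taking polynomially many independent copies and aggregating them by a majority-style count. Write $p(x) := \mathbb{P}[X^{(0)}=1 \mid X^{(d)}=x]$, so that $\BP(x)=1$ exactly when $p(x)>1/2$ and $\BP(x)=0$ when $p(x)<1/2$. The lemma supplies a distribution $P_F^{d}$ over functions $F:\{0,1\}^{L_d}\to\{0,1,?\}$, each computable by an NC circuit of depth $hd=O(\log n)$, with $\mathbb{P}[F(x)\in\{0,1\}]\ge 1-1/2k\ge 1/2$ and $\mathbb{P}[F(x)=1 \mid F(x)\in\{0,1\}]=p(x)$. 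First I would draw $N=\mathrm{poly}(n)$ independent copies $F_1,\dots,F_N\sim P_F^{d}$, evaluate all of them on the input, and let $n_1$ and $n_0$ count the copies returning $1$ and $0$ respectively (discarding the $?$'s). The circuit outputs $1$ if $n_1>n_0$ and $0$ otherwise.

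Correctness of the count rests on a standard concentration argument. Setting $q(x):=\mathbb{P}[F_i(x)\in\{0,1\}]\ge 1/2$, we have $\mathbb{P}[F_i(x)=1]-\mathbb{P}[F_i(x)=0]=q(x)(2p(x)-1)$, so $n_1-n_0$ has mean $Nq(x)(2p(x)-1)$ while its fluctuations are of order $\sqrt{N}$. Hence whenever $|2p(x)-1|\ge\delta_d$, a Hoeffding bound shows that $\sgn(n_1-n_0)$ agrees with $\sgn(2p(x)-1)$, i.e. the circuit outputs $\BP(x)$, except with probability $\exp(-\Omega(N\delta_d^2))$. Taking $N=\Theta(n/\delta_d^2)$ drives this per-input failure probability down to $o(2^{-n})$ for every fixed $x$ with $|2p(x)-1|\ge\delta_d$.

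To get a single deterministic circuit valid for all inputs simultaneously, I would take a union bound over the (at most $2^n$) inputs $x$ with $|2p(x)-1|\ge\delta_d$: the probability that some such input is misclassified is $2^n\cdot o(2^{-n})=o(1)<1$, so there is a fixing of the internal randomness of $F_1,\dots,F_N$ yielding a deterministic circuit that outputs $\BP(x)$ on every such $x$. On the remaining inputs, where $|2p(x)-1|<\delta_d$, any output is acceptable, since the posterior-function definition only requires $\mathbb{P}[X^{(0)}=f(x)\mid X^{(d)}=x]\ge\mathbb{P}[X^{(0)}=\BP(x)\mid X^{(d)}=x]-\delta_d$, and on those inputs the two competing labels differ in conditional success probability by exactly $|2p(x)-1|<\delta_d$. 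Thus $f$ is a posterior function with slack $\delta_d\to0$. For the complexity bound, each $F_i$ has depth $O(\log n)$ and polynomial size, there are only $N=\mathrm{poly}(n)$ of them, and forming $n_1,n_0$ by summing the $N$ output bits and comparing is itself an $\mathbf{NC}^1$ (indeed $\mathbf{TC}^0$) operation, so the overall circuit lies in $\mathbf{NC}^1$.

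The step I expect to require the most care is reconciling the per-input failure probability against the union bound over all $2^n$ inputs while keeping $N$ polynomial. This works precisely because $\delta_d$ need only tend to zero, with no prescribed rate: choosing, say, $\delta_d=1/\log n$ (or any sequence with $\delta_d=n^{-o(1)}$) keeps $N=\Theta(n/\delta_d^2)$ polynomial and keeps the exponent $\Omega(N\delta_d^2)=\Omega(n)$ large enough to beat the $2^n$ union bound. The ``near-tie'' inputs with $p(x)\approx 1/2$ might appear to be an obstacle, but they are exactly the inputs the definition forgives, so no additional argument is needed there.
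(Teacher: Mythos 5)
Your proposal is correct and follows essentially the same route as the paper's own proof: draw polynomially many independent samplers from $P_F^{d}$, take a majority vote of the non-$?$ outputs, argue concentration per input, union bound over all $2^n$ inputs, and fix the randomness (the paper uses $N=n^4$ copies and margin $1/n$ where you use $N=\Theta(n/\delta_d^2)$ and $\delta_d=1/\log n$, but these are interchangeable). Your explicit handling of the near-tie inputs via the $\delta_d$ slack in the definition is the same observation the paper makes implicitly.
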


\begin{proof}
Consider independently drawing $F_1,\cdots,F_{n^4}$ from $P_{F}^{d}$. Also, consider any $x\in\{0,1\}^n$ such that $\mathbb{P}[X^{(0)}=1|X^{(d)}=x]>1/2+1/n$. For each $i$, it is the case that $\mathbb{P}[F_i(x)\in\{0,1\}]\ge 1/2$ and $\mathbb{P}[F_i(x)=1|F_i(x)\in\{0,1\}]\ge 1/2+1/n$. So, there will be more $i$ for which $F_i(x)=1$ than $i$ for which $F_i(x)=0$ with probability $1-o(2^{-n})$. That in turn implies that this holds for every such $x$ with probability $1-o(1)$. By the same logic, there will be more $i$ for which $F_i(x)=0$ than $i$ for which $F_i(x)=1$ for every $x$ such that $\mathbb{P}[X^{(0)}=1|X^{(d)}=x]<1/2-1/n$ with probability $1-o(1)$. That means that there must exist a specific choice of $F_1,\cdots,F_{n^4}$ for which both of these properties hold. These functions can each be computed by an $\mathbf{NC}^1$ circuit, and a logarithmic additional depth is sufficient to determine whether more of them output $1$ or $0$. Thus, the function that returns $1$ if more of them output $1$ than $0$ and $0$ otherwise is an $\mathbf{NC}^1$ posterior function. 

%To prove the theorem we need to consider $x$'s 
%such that $\mathbb{P}[X^{(0)}=1|X^{(d)}=x] > 1/2 + \delta_d$ for $\delta_d \to 0$. For this, we  compare how many of the $F_i$ returned $1$ and $0$ in order to estimate the probability that the root has a label of $1$. For any $c>0$, it would be possible to ensure that our estimate was within $O(n^{-c})$ of the true probability for all possible inputs, provided we used $n^{2c+2}$ functions $F_i$ instead of $n^2$.
\end{proof}

\section{Gadgets in the broadcast tree model}\label{app:nc1part2}

Here we prove that being able to compute the posterior allows us to implement any $\mathbf{NC}^1$ circuit. This is the second part of Theorem~\ref{thm:nc1main}.

\begin{lemma}
Let $\theta=\frac{9}{10}$ and $k=6$. Next, choose $x\in\{0,1\}^{L_d}$ such that there are at least $4$ choices of $i$ such that $\mathbb{P}[X_i^{(1)}=1|X^{(d)}(L_{d-1}(i))=x(L_{d-1`}(i))]\ge.95$. Then $\mathbb{P}[X^{(0)}=1|X^{(d)}=x]\ge \frac{19}{20} $.
\end{lemma}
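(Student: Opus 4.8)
The plan is to reduce the statement to a computation with likelihood ratios and then verify a single numerical inequality. First I would exploit the tree's Markov structure: conditioned on the root label $X^{(0)}$, the six subtrees hanging off the root's children evolve independently, and marginally every node's label is uniform on $\{0,1\}$ (the transition that keeps a label with probability $(1+\theta)/2$ and flips it with probability $(1-\theta)/2$ is symmetric, hence doubly stochastic, so uniformity is preserved). Writing $x^{(i)}$ for the restriction of $x$ to the leaves $L_{d-1}(i)$ below child $i$, and setting
\[
L_i = \frac{\mathbb{P}[x^{(i)} \mid X^{(1)}_i = 1]}{\mathbb{P}[x^{(i)} \mid X^{(1)}_i = 0]},
\]
the uniform marginal gives $\mathbb{P}[X^{(1)}_i = 1 \mid x^{(i)}] = L_i/(L_i+1)$. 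Since $\theta = 9/10 < 1$, no leaf configuration has zero probability, so each $L_i$ lies in $(0,\infty)$. The hypothesis $\mathbb{P}[X^{(1)}_i = 1 \mid x^{(i)}] \ge 0.95$ then translates exactly into $L_i \ge 19$ for (at least) four of the children.

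Next I would express the root posterior odds as a product over children. Because $X^{(0)}$ is uniform, the priors cancel, and a one-line Bayes computation using $(1+\theta)/2 = 19/20$ and $(1-\theta)/2 = 1/20$ gives
\[
\frac{\mathbb{P}[X^{(0)} = 1 \mid x]}{\mathbb{P}[X^{(0)} = 0 \mid x]} = \prod_{i=1}^{6} g(L_i), \qquad g(L) := \frac{19 L + 1}{L + 19}.
\]
The function $g$ is increasing, with $g(L) \to 1/19$ as $L \to 0^{+}$ and $g(19) = 181/19$. Here is the key step: for the four strong children monotonicity forces $g(L_i) \ge 181/19$, while for the two unconstrained children I use only the trivial bound $g(L_i) > 1/19$. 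Hence the posterior odds are at least $(181/19)^4 (1/19)^2 = 181^4/19^6$, and since an odds ratio of at least $19$ is precisely equivalent to $\mathbb{P}[X^{(0)} = 1 \mid x] \ge 19/20$, it suffices to check $181^4/19^6 \ge 19$.

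I would organize the worst case explicitly to justify that no other configuration is more dangerous: among all ways of realizing "at least four'' strong children, the product $\prod_i g(L_i)$ is minimized by taking exactly four strong children, each at the boundary $L_i = 19$, and sending the two unconstrained children to $L_i \to 0$; adding a fifth strong child or increasing any $L_i$ only raises the product. The only genuine content is the monotonicity of $g$ together with the numerical inequality $181^4 \ge 19^{7}$, and this is where I expect the sole (very mild) obstacle to sit, namely confirming that the two completely unconstrained children cannot drag the odds below $19$. The computation $181^4 = 1073283121$ against $19^{7} = 893871739$ settles it with slack to spare (the odds are in fact at least $181^4/19^6 \approx 22.8$), completing the proof.
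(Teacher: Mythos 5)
Your proof is correct and is essentially the paper's argument in different clothing: the paper writes the same Bayes factorization over the six children directly in terms of the child posteriors $p_i$ and plugs in the extremal values $p_i=19/20$ (four children) and $p_i=0$ (two children), which after simplification is exactly your bound $181^4/19^6>19$. Your reparametrization via likelihood ratios and the explicit monotonicity of $g$ just makes the worst-case step more transparent; no substantive difference.
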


\begin{proof}
First of all, for each $i$, let $p_i=\mathbb{P}[X_i^{(1)}=1|X^{(d)}(L_{d-1}(i))=x(L_{d-1}(i))]$. Given these values of $\theta$ and $k$, it must be the case that
\begin{align*}
\mathbb{P}[X^{(0)}=1|X^{(d)}=x]&=\frac{\prod_{i=1}^6 \Big (\frac{19}{20} p_i+\frac{1}{20}(1-p_i)\Big )}{\prod_{i=1}^6 \Big(\frac{19}{20} p_i+\frac{1}{20}(1-p_i)\Big )+\prod_{i=1}^6 \Big (\frac{1}{20} p_i+\frac{19}{20}(1-p_i)\Big )}\\
&\ge\frac{\Big ( (\frac{19}{20})^2+(\frac{1}{20})^2\Big )^4\cdot(\frac{1}{20})^2}{\Big ( (\frac{19}{20})^2+(\frac{1}{20})^2\Big )^4\cdot(\frac{1}{20})^2+\Big (\frac{19}{20}\cdot\frac{1}{20}+\frac{1}{20}\cdot\frac{19}{20}\Big )^4\cdot(\frac{19}{20})^2}\\
&>\frac{19}{20}
\end{align*}
\end{proof}

That allows us to prove the following.

\begin{proposition}
Let $\theta=\frac{9}{10}$ and $k=6$ and consider the Ising tree model. For every $\mathbf{NC}$ circuit of depth $d$, there exists a way to define $x\in\{0,1\}^{L_d}$ so that $x_i$ is set to $0$, $1$, an input to the circuit, or the negation of the input to the circuit, such that for every choice of inputs to the circuit, $\mathbb{P}[X^{(0)}=1|X^{(d)}=x]$ is at least $\frac{19}{20}$ if the circuit outputs $1$ on this input, and at most $\frac{1}{20}$ if the circuit outputs $0$ on this input.
\end{proposition}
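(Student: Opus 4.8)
The plan is to embed an arbitrary fan-in-two circuit into the leaves of $T_6(d)$ so that the posterior at every internal node encodes the value computed by the corresponding gate, and to build this embedding by induction on depth. First I would put the circuit in a convenient normal form: using De Morgan's laws I push all negations down to the inputs, so that it consists only of fan-in-two AND and OR gates together with literals (inputs, negated inputs, and the constants $0,1$) at its leaves. This leaves the AND/OR depth unchanged, so a depth-$d$ circuit still maps onto a tree of depth $d$. Call a node \emph{strongly $1$} if the posterior probability that the root of its subtree is $1$, given the leaves of that subtree, is at least $19/20$, and \emph{strongly $0$} if this probability is at most $1/20$. By the $0\leftrightarrow 1$ symmetry of the Ising model, the preceding lemma has a mirror form: if at least $4$ of a node's $6$ children are strongly $0$, then the node is strongly $0$.

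Next I would set up the induction with the invariant: for every subcircuit $C$ of depth $\ell$ there is an assignment of literals and constants to the leaves of a depth-$\ell$ subtree such that, for every setting of the inputs, the subtree root is strongly $1$ exactly when $C$ evaluates to $1$ and strongly $0$ exactly when $C$ evaluates to $0$. The base case $\ell=0$ is immediate, since a single leaf set to a literal has posterior equal to its own value. For the inductive step at an OR gate $C=C_1\vee C_2$, I place two independent copies of the construction for $C_1$ under children $1,2$, two copies of the construction for $C_2$ under children $3,4$, and make children $5,6$ into all-ones subtrees (which are strongly $1$ by the same induction, since all-ones leaves drive every ancestor strongly $1$). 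For an AND gate I instead make children $5,6$ all-zeros subtrees.

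The verification is then a short case check on the number $s$ of strongly-$1$ children. The crucial design choice is that each input $C_i$ is \emph{duplicated} across two sibling subtrees, so its two copies are simultaneously strongly $1$ or simultaneously strongly $0$; hence $s$ is always even, i.e. $s\in\{0,2,4,6\}$. For the OR gadget one checks $s\ge 4$ precisely when $C_1\vee C_2$ is true, and $s=2$ (so $6-s=4$ children are strongly $0$) precisely when it is false; the AND gadget behaves analogously with the two constant children flipped to $0$. Applying the lemma and its mirror form then propagates the invariant up one level, and since the leaf assignment only uses $0,1$ and (possibly negated) inputs and the complete tree $T_6(d)$ has exactly $6^d$ leaves to absorb the duplicates and constants, the construction terminates with the desired global statement at the root.

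The step I expect to be the main obstacle is precisely this tie-breaking. The lemma guarantees a definite posterior only when at least $4$ children agree, and it says nothing about the balanced configuration of exactly $3$ strongly-$1$ and $3$ strongly-$0$ children, where the posterior could sit near $1/2$ and break the invariant. The duplication trick is what sidesteps this difficulty, since forcing $s$ to be even removes the possibility $s=3$ entirely; designing the AND and OR gadgets so that they realize the intended Boolean functions while never entering this ambiguous regime is the only genuinely delicate part. The remaining bookkeeping, namely the normal-form reduction, the leaf-count accounting, and the observation that the map from circuit to leaf assignment is itself easy to compute, is routine.
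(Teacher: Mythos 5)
Your proposal is correct and matches the paper's proof essentially gate for gate: the same induction on circuit depth, the same OR/AND gadgets built from two duplicated copies of each operand plus two constant subtrees under the remaining children, and the same reliance on the $4$-out-of-$6$ lemma (together with its $0\leftrightarrow 1$ mirror) to rule out the balanced $3$--$3$ configuration. The only cosmetic difference is that you eliminate NOT gates up front by passing to negation normal form, whereas the paper handles negation by inverting every leaf of the gadget for the negated subcircuit; both are fine.
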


\begin{proof}
We proceed by induction on $d$. This is clearly true for $d=0$. Now, assume that it holds for $d-1$, and consider a function $f$ that is computable by an $\mathbf{NC}$ circuit of depth $d$. There must exist functions $f_1$ and $f_2$ that are computable by $\mathbf{NC}$ circuits of depth $d-1$ such that either $f=NOT(f_1)$, $f=f_1 \text{ AND } f_2$, or $f=f_1 \text{ OR } f_2$. By the induction hypothesis, for each $i,j$ there is a way to set all of the entries in $\{x_{i'}:i'\in L_{d-1}(i)\}$ equal to constants, inputs to the circuit, or negations of inputs in such a way that $\mathbb{P}[X^{(1)}_i=1|X^{(d)}(L_{d-1}(i))=x(L_{d-1}(i))]$ is always at least $\frac{19}{20}$ if $f_j$ outputs $1$ and at most $\frac{1}{20}$ if it outputs $0$. Also, $\mathbb{P}[X^{(1)}_i=1|X^{(d)}(L_{d-1}(i))=(1,\ldots,1)]>\frac{19}{20}$ and $\mathbb{P}[X^{(1)}_i=1|X^{(d)}(L_{d-1}(i))=(0,\ldots,0)]<\frac{1}{20}$ by repeated application of the previous lemma. 

In particular, if we set $x$ so that $$\mathbb{P}[X^{(1)}_i=1|X^{(d)}(L_{d-1}(i))=x(L_{d-1}(i))]$$ tracks $f_1$ for $i=1,2$, $f_2$ for $i=3,4$, and $1$ for $i=5,6$ then $\mathbb{P}[X^{(0)}=1|X^{(d)}=x]$ will track $f=f_1 \text{ OR } f_2$ by the lemma. If we have it track $0$ for $i=5,6$ instead, then it will track $f=f_1 \text{ AND } f_2$ instead. That leaves the case where $f=NOT(f_1)$. In that case, we can simply start with the assignment of value to $x$ that we would use if $f=f_1=f_1 \text{ OR } f_1$, and then invert every entry in $x$ in order to switch the probability that $X^{(1)}_i=1$ with the probability that it is $0$. So, the desired conclusion holds for $d$.
\end{proof}

\begin{remark}
More generally, given any $\theta, k,\delta>0$ such that $\lim_{d\rightarrow\infty} \mathbb{P}[X^{(0)}=1|X^{(d)}=1,1,\cdots,1]>1/2+\delta$, determining whether $\mathbb{P}[X^{(0)}=1|X^{(d)}=x]>1/2$ whenever this probability is greater than $1/2+\delta$ or less than $1/2-\delta$ is $\mathbf{NC}^1$-hard. This can be proven by a variant of the above argument. In it we would argue that there exists $\delta'$ such that if $f_1$ and $f_2$ are functions that can be tracked by trees of depth $d'$ with an accuracy of $1/2+\delta$, we can construct a tree of depth $d'+2$ that tracks $f_1\text{ AND }f_2$ with accuracy $1/2+\delta'$. Then we would  argue that we can amplify the accuracy back up to $1/2+\delta$ by constructing a tree such that all of its subtrees at some suitable depth are copies of that tree. That would allow us to prove the desired result by induction on circuit depth the same way we do in the theorem above.
\end{remark}

Combining this with the previous theorem shows that posterior computation is $\mathbf{NC}^1$-complete, as desired.

\section{Deviation bounds for the broadcast tree model}\label{app:deviation}

Here we prove Lemma~\ref{lem:deviation}:

 \begin{proof}
First, observe that 
\[E\left[ \sum_{i\in L_d} X^{(d)}_i\middle|X^{(0)}=1\right]=k^d/2+k^d \theta^d/2\]
Now, for each $0\le d'\le d$, let $v_{d'}=Var\left[\sum_{i\in L_{d-d'}(1^{d'})} X^{(d)}_i\middle|X^{(d')}_{1^{d'}}=1\right]$. Clearly, $v_{d}=0$, and for each $d'<d$, it must be the case that
\[v_{d'}=k Var\left[X^{(d'+1)}_{1^{d'+1}}\middle|X^{(d')}_{1^{d'}}=1\right]\cdot k^{2d-2d'-2}\theta^{2d-2d'-2}+k \cdot v_{d'+1}\]
And hence we have
\begin{align*}
v_0&=\sum_{d'=0}^{d-1} \frac{1-\theta^2}{4} k^{d'+1}\cdot  k^{2d-2d'-2}\theta^{2d-2d'-2}\\
&\le \frac{1-\theta^2}{4}\sum_{d'=0}^\infty  k^{2d-d'-1}\theta^{2d-2d'-2}\\
&= \frac{1-\theta^2}{4}k^{2d}\theta^{2d}/[\theta^2k-1]
\end{align*}
In particular, this implies that
\begin{align*}
\mathbb{P}\left[\sum_{i\in L_d} X^{(d)}_i\le k^d/2\middle | X^{(0)}=1\right]&\le \mbox{Var}\left[ \sum_{i\in L_d} X^{(d)}_i\middle|X^{(0)}=1\right]/\left(\mathbb{E}\left[ \sum_{i\in L_d} X^{(d)}_i\middle|X^{(0)}=1\right]-k^d/2\right)^2\\
&=\frac{1-\theta^2}{\theta^2k-1} \le \frac{1}{\theta^2k-1}
\end{align*}
\end{proof}

\section{Reducing the number of labels}\label{app:labelred}

It turns out that we will be able to exploit the symmetries in our generalized broadcast tree model in the previous section to be able to drastically reduce the number of labels from $|A_5|^2 = 3600$ corresponding to all pairs of even permutations to $16$ corresponding to pairs of {\em conjugacy classes} of even permutations in $S_5$ \---- namely two even permutations $\sigma$ and $\tau$ are in the same conjugacy class if there is a permutation $c$ (not necessarily even) for which $\tau = c^{-1} \sigma c$. Intuitively, knowing the conjugacy class fixes the cycle structure of a permutation. 

%While knowing that there is a set of parameters for which determining $X^{(0)}$ with nontrivial accuracy is $\mathbf{NC}^1$-hard is interesting, having $3600$ labels corresponding to ordered pairs of elements of $A_5$ is substantially more complicated than we would really like. As such, our next order of business is to find simpler choices of parameters for which average-case reconstruction is still $\mathbf{NC}^1$-hard. In order to do that, we will prove that we can define the labels to represent pairs of conjugacy classes of even permutations instead of the permutations themselves. 

The main technical ingredient in this section is to show that if the the labels can be grouped into collections in such a way that the probability that a vertex has a child in a given collection depends only on what collection that vertex is in, then we can replace the labels with the collections without making it easier to determine $X^{(0)}$ with nontrivial accuracy. More formally, we have the following.

\begin{lemma}\label{lem:sym}
Consider a generalized broadcast tree model with parameters $m$, $k$ and $M$. Suppose there is a partition $S_1,\cdots,S_{m'}$ of $\{1,\cdots,m\}$ with the following property: Let $w^{(i)}=\sum_{j\in S_i} e_j$ for each $i$. Then for all $1\le i,i'\le m'$ and $j,j'\in S_i$ we have $$w^{(i')}\cdot M e_j= w^{(i')}\cdot M e_{j'}$$ Finally let $M'$ be the $m'\times m'$ matrix such that for each $i$, $i'$, $M'_{i,i'}= w^{(i)}\cdot M e_j$ for some $j\in S_{i'}$. If there is a $\mathbf{TC}^0$ detection function for the generalized broadcast process with parameters $(m',M')$ then there is a $\mathbf{TC}^0$ detection function for $(m, M)$ as well. 
\end{lemma}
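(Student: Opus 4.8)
The plan is to exploit the classical \emph{lumpability} (state-aggregation) property of Markov chains. Let $\kappa:\{1,\dots,m\}\to\{1,\dots,m'\}$ be the map sending a label $j$ to the index $i$ with $j\in S_i$, extended coordinatewise to configurations. The heart of the argument is a single exact distributional identity: for every label $j$, the law of $\kappa(X^{(d)})$ conditioned on $X^{(0)}=j$ is identical to the law of the generalized broadcast process with parameters $(m',M')$ conditioned on its root lying in part $\kappa(j)$. Granting this, a coarse detector can be turned into a fine one essentially for free.

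To establish the identity I would induct on the depth. The defining hypothesis $w^{(i')}\cdot M e_j = w^{(i')}\cdot M e_{j'}$ for $j,j'\in S_i$ says exactly that, in the fine process, the probability that a child falls in part $S_{i'}$ depends on the parent only through the part containing the parent's label, and equals $M'_{i',i}$. Consequently the part-valued process is itself Markov: even though the conditional distribution of a node's label \emph{within} its part may still depend on the parent's exact label, the distribution of each \emph{child's part} is governed solely by the parent's part. Combined with the conditional independence of subtrees in the broadcast model, an induction on levels shows that, conditioned on $X^{(0)}=j$, the part process $\kappa(X^{(\cdot)})$ is distributed exactly as the $(m',M')$ broadcast process started from root part $\kappa(j)$, which is the claimed identity.

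For the reduction, suppose $g'$ is a $\mathbf{TC}^0$ detection function for $(m',M')$ with advantage $\epsilon'$. On input $X^{(d)}$ I would first compute $\kappa(X^{(d)})$ --- a coordinatewise, constant-size relabeling, hence in $\mathbf{AC}^0\subseteq\mathbf{TC}^0$ --- then run $g'$ to obtain a guessed root part $\hat\imath$, and finally output a fixed representative $r(\hat\imath)\in S_{\hat\imath}$, where $r$ is injective across parts. The one genuine obstacle is that lumping a uniformly random fine root does \emph{not} produce a uniformly random coarse root: the induced root-part distribution is $|S_i|/m$, and the parts have unequal sizes (in the intended application, conjugacy-class pairs of sizes built from $1,15,20,24$). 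I would sidestep this by analyzing the \emph{conditional} success probabilities rather than the marginal. Write $p_i$ for the probability that $g'$ correctly outputs $i$ when the $(m',M')$ process is conditioned to have root part $i$; the definition of a detection function gives $\tfrac{1}{m'}\sum_i p_i=\mathbb{P}[g'\text{ correct}]\ge \tfrac{1}{m'}+\epsilon'$, so $\sum_i p_i\ge 1+m'\epsilon'$. Conditioning on $X^{(0)}=r(i)$ (which forces $\kappa(X^{(0)})=i$ and occurs with probability $1/m$), the identity above makes $\kappa(X^{(d)})$ distributed as the coarse process given root part $i$, and since $r$ is injective we have $r(\hat\imath)=X^{(0)}$ precisely when $g'$ returns $i$; hence
\[
\mathbb{P}[r(\hat\imath)=X^{(0)}]=\sum_{i=1}^{m'}\frac{1}{m}\,p_i\ge \frac{1}{m}\left(1+m'\epsilon'\right)=\frac{1}{m}+\frac{m'\epsilon'}{m},
\]
a constant advantage over $1/m$ independent of $d$. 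Thus the composed circuit is a $\mathbf{TC}^0$ detection function for $(m,M)$. The crux of the whole argument is the exact lumpability identity together with the observation that, because we return a fixed representative, only the conditional success probabilities of $g'$ enter --- and these are preserved under lumping regardless of the root-distribution mismatch.
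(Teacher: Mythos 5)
Your proposal is correct and follows essentially the same route as the paper: establish that lumping the labels into the parts $S_1,\dots,S_{m'}$ yields exactly the $(m',M')$ broadcast process (the paper's identity between its $X'$ and $X^\star$), then relabel the leaves in $\mathbf{NC}^0$, run the coarse detector, and output a member of the predicted part, obtaining advantage $m'\epsilon'/m$ over $1/m$. The only cosmetic difference is that you output a fixed representative $r(\hat\imath)$ directly, whereas the paper outputs a uniformly random element of $S_{\hat\imath}$ and then fixes the randomness; both give the same bound.
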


\begin{proof}
First fix any $d$ and let $(X^{(0)},\cdots,X^{(d)})$ be vectors of labels generated by the generalized broadcast process with parameters $(m,M)$. The labels of the generalized broadcast process with parameters $(m', M')$ will naturally be associated with parts of the partition. Let $(X'^{(0)},\cdots,X'^{(d)})$ be the result of replacing each label with the part it belongs to. Also, let $(X^{\star(0)},\cdots,X^{\star(d)})$ be vectors of labels generated by the generalized broadcast process with parameters $(m',M')$. We claim that the distribution of $(X'^{(0)},\cdots,X'^{(d)})$ conditioned on a fixed value of $X'^{(0)}$ is the same as that of $(X^{\star(0)},\cdots,X^{\star(d)})$ conditioned on the same value for the root label. This is because by assumption, when we only care about which part of the partition each child belongs to, it only matters what part of the partition the parent belongs to.

Now suppose that $f$ is a $\mathbf{TC}^0$ function that solves the detection problem for the generalized broadcast process with parameters $(m',M')$. Finally let $\hat{X}$ be a random label contained in $S_{f(X'^{(d)})}$. Then
\begin{align*}
\mathbb{P}[\hat{X}=X^{(0)}]&=\sum_{i=1}^{m'} \mathbb{P}[\hat{X}=X^{(0)}|X'^{(0)}=i]\mathbb{P}[X'^{(0)}=i]\\
&=\sum_{i=1}^{m'} \frac{\mathbb{P}[f(X'^{(d)})=X'^{(0)}|X'^{(0)}=i]}{|S_i|}\cdot\frac{|S_i|}{m}\\
&=\frac{1}{m}\sum_{i=1}^{m'} \mathbb{P}[f(X^{\star(d)})=X^{\star(0)}|X^{\star(0)}=i] =\frac{m'}{m} \mathbb{P}[f(X^{\star(d)})=X^{\star(0)}]
\end{align*}
Also note that we can compute $X'^{(d)}$ from $X^{(d)}$ using an $\mathbf{NC}^0$ circuit. Putting it all together, because $f\in \mathbf{TC}^0$, and there must be a specific way to choose a value of $\hat{X}$ for each possible value of $f(X'^{(d)})$ such that $$\mathbb{P}[\hat{X}=X^{(0)}]\ge \frac{m'}{m} \mathbb{P}[f(X^{\star(d)})=X^{\star(0)}]$$ Hence there is a $\mathbf{TC}^0$ circuit that computes $X^{(0)}$ from $X^{(d)}$ with nontrivial advantage.
\end{proof}

In particular, we can now reduce the number of labels in our generalized broadcast process as follows. We will call this final model the generalized broadcast process on conjugacy classes.  There are $16$ labels corresponding to the ordered pairs of conjugacy classes of even permutations in $S_5$. In order to assign a label to a vertex's child, first let $\sigma$ be a random element of the vertex's label's first conjugacy class with probability $2/3$ and a random element of its label's second conjugacy class with probability $1/3$. Then, select random $\sigma',\sigma''\in A_5$ such that $\sigma'\cdot\sigma''=\sigma$. Finally, set the child's label equal to the pair of the conjugacy classes of $\sigma'$ and $\sigma''$.

\begin{theorem}\label{thm:nc1main2}
 If there is an $\mathbf{TC}^0$ detection function for the generalized broadcast process on conjugacy classes then $\mathbf{TC}^0=\mathbf{NC}^1$.
\end{theorem}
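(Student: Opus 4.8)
The plan is to obtain Theorem~\ref{thm:nc1main2} as an immediate consequence of Lemma~\ref{lem:sym} applied to the $\mathbf{NC}^1$-hardness already established for the $3600$-label process on ordered pairs $(\sigma,\sigma')\in A_5^2$. Concretely, I would take the partition of the $3600$ labels in which $(\sigma,\sigma')$ and $(\tau,\tau')$ lie in the same part precisely when $\sigma$ is $S_5$-conjugate to $\tau$ and $\sigma'$ is $S_5$-conjugate to $\tau'$. The even permutations of $S_5$ fall into exactly four $S_5$-conjugacy classes, namely the identity ($1^5$), the double transpositions ($2^2\,1$), the $3$-cycles ($3\,1^2$), and the $5$-cycles ($5$), so this partition has $4\times 4 = 16$ parts. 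Note that we conjugate by arbitrary elements of $S_5$, not only of $A_5$; this is exactly what keeps the $5$-cycles a single class (rather than splitting them into two, as $A_5$-conjugation would) and is the reason we land on $16$ labels. The reduced matrix $M'$ furnished by Lemma~\ref{lem:sym} will then be precisely the transition rule of the generalized broadcast process on conjugacy classes defined above, so it remains only to verify that this partition meets the hypothesis of that lemma.

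The heart of the argument, and the one step that requires genuine verification, is the symmetry condition $w^{(i')}\cdot M e_j = w^{(i')}\cdot M e_{j'}$ for all labels $j,j'$ in a common part. Unwinding the definitions, this asks that the distribution over conjugacy-class-pairs of a child depends only on the conjugacy-class-pair of its parent. Since a child's two coordinates are obtained by selecting (with the $2/3,1/3$ split) one of the parent's two permutations, say $\pi$, and then drawing a uniformly random factorization $\pi=\alpha\beta$ with $\alpha,\beta\in A_5$, it suffices to show that the joint distribution of the $S_5$-conjugacy classes of $(\alpha,\beta)$ depends only on the $S_5$-conjugacy class of $\pi$. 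I would prove this by the natural conjugation bijection: if $\pi' = c^{-1}\pi c$ with $c\in S_5$, then $(\alpha,\beta)\mapsto(c^{-1}\alpha c,\,c^{-1}\beta c)$ maps the factorizations of $\pi$ onto those of $\pi'$ (their product is $c^{-1}\alpha\beta c = c^{-1}\pi c = \pi'$, and parity is preserved so the image again lies in $A_5^2$), it is a bijection carrying the uniform distribution to the uniform distribution, and it preserves the $S_5$-conjugacy class of each coordinate. Hence the class-pair distribution is invariant under $S_5$-conjugation of $\pi$, i.e.\ depends only on the class of $\pi$, which is exactly what Lemma~\ref{lem:sym} requires. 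The same observation shows that drawing a uniform representative $\sigma$ of a class and then factoring it, as in the process on conjugacy classes, yields the same class-pair distribution as factoring any fixed representative, confirming that the reduced model is genuinely $M'$.

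With the hypothesis verified, Lemma~\ref{lem:sym} shows that any $\mathbf{TC}^0$ detection function for the $16$-label conjugacy-class process yields a $\mathbf{TC}^0$ detection function for the original $3600$-label process on $A_5^2$. But the preceding theorem shows that a $\mathbf{TC}^0$ detection function for the $A_5^2$ process, namely one guessing the root label with advantage $\epsilon>0$ over the uniform baseline $1/|A_5|^2$, forces $\mathbf{TC}^0=\mathbf{NC}^1$. Composing these two implications yields Theorem~\ref{thm:nc1main2}. I expect the only real work to be the conjugation-bijection verification sketched above; the conjugacy-class count and the invocations of Lemma~\ref{lem:sym} and of the hardness theorem are then routine.
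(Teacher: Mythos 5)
Your proposal is correct and follows essentially the same route as the paper: partition the $3600$ labels of the $A_5^2$ process into the $16$ pairs of $S_5$-conjugacy classes of even permutations, verify the hypothesis of Lemma~\ref{lem:sym} via the conjugation bijection $(\alpha,\beta)\mapsto(c^{-1}\alpha c,\,c^{-1}\beta c)$ on factorizations, and then compose with the $\mathbf{NC}^1$-hardness theorem for the $A_5^2$ process. Your write-up is somewhat more explicit than the paper's (counting the four classes, checking parity preservation and uniformity of the pushed-forward measure), but the key lemma, the partition, and the bijection are identical.
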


\begin{proof}
What we need to do is verify that partitioning $A_5$ into conjugacy classes in $S_5$ satisfies the conditions in Lemma~\ref{lem:sym}. 
First, observe that given any $\sigma,\overline{\sigma}$ in the same conjugacy class of $S_5$, there exists $c\in S_5$ such that $\overline{\sigma}=c^{-1}\sigma c$. So, if $\sigma'\cdot\sigma''=\sigma$ then $(c^{-1} \sigma'c)\cdot(c^{-1} \sigma''c)=\overline{\sigma}$. That gives us a bijection between pairs of permutations in any given pair of conjugacy classes with a product of $\sigma$ and pairs of permutations in that pair of conjugacy classes with a product of $\overline{\sigma}$. So, if we set $S_1,\cdots,S_{16}$ equal to the sets of pairs of permutations in each even conjugacy class of $S_5$ then by Lemma~\ref{lem:sym} we have that if $\mathbf{TC}^0\neq\mathbf{NC}^1$ there is no $\mathbf{TC}^0$ detection function for this instance of the generalized broadcast process on conjugacy classes.
\end{proof}

Finally we show that the detection problem can be solved in $\mathbf{NC}^1$ where as before we set $k = 60000$. First we note that one of the conjugacy classes contains only the identity, so if a vertex's label is $(S,S')$, then each of its children have a label of $(\{1\},S)$ with probability $1/90$, $(\{1\},S')$ with probability $1/180$, and no other possibility of having a label with its first entry equal to the identity's conjugacy class. As such, if we can determine the labels of the vertices at depth $d'$ with accuracy $.999$ then for each vertex at depth $d'-1$, we can estimate how many children it has with label $(\{1\},S'')$ for each conjugacy class $S''$ and use that to determine its label with accuracy at least $.999$. Therefore, by induction we can determine the root's label correctly with probability at least $.999$. Also, this can clearly be done in $\mathbf{NC}^1$.

\end{document}